 \documentclass[UKenglish,cleveref, autoref, thm-restate]{lipics-v2021}
\usepackage{cite}
\usepackage{amsmath,amssymb,amsfonts}

\usepackage{algorithm}
\usepackage[noend]{algpseudocode}
\usepackage{comment}
\usepackage{graphicx}
\usepackage{textcomp}
\usepackage{xcolor}
\usepackage{url}

\usepackage{array}
\usepackage{ragged2e}
\usepackage{longtable}
\usepackage{booktabs}
\usepackage{tcolorbox}
\usepackage{amsthm}
\usepackage{hyperref}

\usepackage{tikz}
\usetikzlibrary{calc,cd,hobby}

\newcommand{\ol}[1]{\overline{#1}}

\DeclareMathOperator{\rank}{rank}

\DeclareMathOperator{\height}{height}
\DeclareMathOperator{\subtree}{subtree}

\DeclareMathOperator{\LCA}{LCA}

\DeclareMathOperator{\reg}{reg}
\DeclareMathOperator{\Reg}{Reg}

\DeclareMathOperator{\Ham}{Ham}
\DeclareMathOperator{\Rem}{Rem}
\DeclareMathOperator{\Exc}{Exc}
\DeclareMathOperator{\Par}{Par}
\DeclareMathOperator{\Max}{Max}
\DeclareMathOperator{\Block}{Block}
\DeclareMathOperator{\HBlock}{HBlock}
\DeclareMathOperator{\nMulti}{nMulti}
\DeclareMathOperator{\HMulti}{HMulti}
\DeclareMathOperator{\nSRuns}{nSRuns}
\DeclareMathOperator{\vSRuns}{vSRuns}
\DeclareMathOperator{\vRuns}{vRuns}
\DeclareMathOperator{\HSRuns}{HSRuns}
\DeclareMathOperator{\DS}{DS}
\DeclareMathOperator{\HDS}{HDS}
\DeclareMathOperator{\Inv}{Inv}
\DeclareMathOperator{\nRuns}{nRuns}
\DeclareMathOperator{\HRuns}{HRuns}
\DeclareMathOperator{\hist}{hist}
\DeclareMathOperator{\loc}{loc}

\DeclareMathOperator{\Hist}{Hist}
\DeclareMathOperator{\Loc}{Loc}
\DeclareMathOperator{\nSUS}{nSUS}
\DeclareMathOperator{\vSUS}{vSUS}
\DeclareMathOperator{\vLRM}{vLRM}
\DeclareMathOperator{\HSUS}{HSUS}
\DeclareMathOperator{\nSMS}{nSMS}
\DeclareMathOperator{\vSMS}{vSMS}
\DeclareMathOperator{\HSMS}{HSMS}
\DeclareMathOperator{\osc}{osc}

\DeclareMathOperator{\Syn}{Syn}
\DeclareMathOperator{\age}{age}
\DeclareMathOperator{\unimst}{unimst}

  \tikzset{notestyleraw/.append style={rectangle}}
 \author{Ivor van der Hoog}{IT University of Copenhagen, Denmark}{ivva@itu.dk}{
 https://orcid.org/0009-0006-2624-0231}{}

 \author{Eva Rotenberg}{IT University of Copenhagen, Denmark}{erot@itu.dk}{
 https://orcid.org/0000-0001-5853-7909}{}

 \author{Daniel Rutschmann}{Institute of Science and Technology Austria (ISTA), Austria}{rutschmann@proton.me}{ https://orcid.org/0009-0005-6838-2628}{}

\def\BibTeX{{\rm B\kern-.05em{\sc i\kern-.025em b}\kern-.08em
    T\kern-.1667em\lower.7ex\hbox{E}\kern-.125emX}}
\begin{document}

\title{Adaptive Sorting, also in the BST Model
}

\titlerunning{Adaptive sorting, also in the BST model}

\authorrunning{Ivor van der Hoog, Eva Rotenberg, and Daniel Rutschmann}

\Copyright{Ivor van der Hoog, Eva Rotenberg, and Daniel Rutschmann}

 \funding{{\it Ivor van der Hoog}, {\it Eva Rotenberg}, and {\it Daniel Rutschmann} are grateful to the VILLUM Foundation for supporting this work via Eva Rotenberg's  grant (VIL37507) ``Efficient Recomputations for Changeful Problems''. This work was supported by the European Union's Horizon 2020 research and innovation programme under the Marie Sk\l{}odowska-Curie grant agreement No 899987.}

\hideLIPIcs

\nolinenumbers
\keywords{Adaptive sorting, entropy, algorithms}

\ccsdesc[500]{Theory of computation~Design and analysis of algorithms}
\ccsdesc[500]{Theory of computation~Sorting and searching}

\setcounter{page}{0}


\maketitle

\begin{abstract}
Sorting is one of the most basic algorithmic problems.
In adaptive sorting, algorithms perform better on inputs that are easier to sort.
Formally, one defines a measure of sortedness that captures features of the input, typically an entropy measure, and tries to design algorithms that adapt to this measure.
Over the years, many such measures have been studied, raising two natural questions:
How do these measures relate to each other, and are there algorithms that are adaptive to each measure?

We answer these questions by completing a hierarchy of relations between previously studied entropy measures.
Our added relations show that this hierarchy has a source, $\Reg$, which is a classical measure from the literature.
We then \textbf{\emph{discover}} and \emph{prove} that an existing data structure already yields a sorting algorithm adaptive to $\Reg$.
Consequently, this algorithm already achieves the adaptive bounds of all other measures in the hierarchy.

We then make a brief mathematical detour.
McIlroy previously noted that $\Reg$ has a mathematically inconvenient definition.
We argue for studying a slightly weaker, but mathematically much more elegant measure: $\reg$.
We show that, if the input contains no duplicates, $O(n + \log \reg) = O(n + \log \Reg)$. This makes $\reg$ elegant, easier to analyse, and almost as strong.

Finally, we turn to algorithm design.
We introduce \textsc{SproutSort}, a $\reg$-adaptive algorithm in the offline binary search tree model (the \emph{BST Model}).
In this model, a single binary search tree stores the distinct input keys, which must be accessed in input order.
The entire access sequence is known in advance, but each access starts at the root and incurs unit cost for pointer moves and rotations.
\textsc{SproutSort} achieves an access cost of $O(n+\log\reg)$, making it the most adaptive algorithm in the BST Model to date.
Turning our attention back to adaptive sorting in the comparison model, we also present the adaptive sorting algorithm \textsc{SkeletonSort}.
This is a simple randomised pointer-machine algorithm that sorts distinct keys in expected $O(n+\log\reg)$ time.
Although it does not work if the input has duplicates, \textsc{SkeletonSort} offers a considerably simpler implementation of an algorithm this high in the adaptive hierarchy.
\end{abstract}

\newpage

\section{Introduction}

This paper studies comparison-based adaptive sorting of $n$ elements and the related offline binary search tree (BST) access problem.
Sorting $n$ distinct elements requires $\Omega(n\log n)$ comparisons in the worst case.
However, this lower bound does not take into account the fact that some inputs are easier to solve than others.
If the input is already sorted, one can verify this in linear time and, intuitively, an \emph{almost sorted input} can also be sorted faster.  
A large body of work studies \emph{adaptive sorting algorithms} that exploit input features 
that reflect notions of almost-sortedness, expressed by entropy functions. 
Examples include inputs consisting of few sorted contiguous pieces~\cite{Knuth1998,Mannila1985}, few contiguous blocks~\cite{CarlssonLevcopoulosPetersson1990}, or few (possibly interleaved) increasing or monotone subsequences~\cite{LevcopoulosPetersson1994}. 
Whenever a measure of entropy is introduced, a natural question is how it relates to other measures: is one strictly stronger than the other, in the sense that matching one in running time, implies matching the other? 
In this paper, we extend the hierarchy of entropy measures by establishing the implications marked in orange in Figure~\ref{fig:hierarchy}.
As a consequence, the hierarchy has a source $\Reg$.

For each entropy measure in the hierarchy, one can ask whether there exists algorithms that are adaptive to that measure. 
We \textbf{\emph{discover}} that the unified bound dictionary of B\u{a}doiu, Cole, Demaine, and Iacono~\cite{BadoiuColeDemaineIacono2007} already yields a $\Reg$-adaptive algorithm.
Thus, \cite{BadoiuColeDemaineIacono2007} yields the most adaptive sorting algorithms to date. 
We continue by modifying $\Reg$ to define a simpler measure $\reg$ with much nicer mathematical properties.
We show that $\reg$ dominates almost all prior measures. 
Moreover, if the input contains no duplicates then we prove that any $\reg$-adaptive algorithm is also $\Reg$-adaptive.
We give a $\reg$-adaptive offline BST access algorithm. Finally, we present a simple $\reg$-adaptive classical algorithm.

\subparagraph{Terminology.}
To explain our results, 
consider the following abstractions of adaptive sorting bounds:
let $\mathcal I_n$ denote the set of all size-$n$ inputs.
A \emph{notion of presortedness} assigns to every input $X \in \mathcal I_n$ some target cost $f(X)$.
A sorting algorithm is \emph{comparison-adaptive} with respect to this notion if it sorts every input $X$ in $O(f(X))$ comparisons, and \emph{time-adaptive} if its total running time is $O(f(X))$. 
For offline BST algorithms, we consider the BST model where one can navigate and rotate nodes of a binary search tree, and we consider the total access cost of accessing all elements of $X$ in-order.
The target $f$ is usually defined in two steps: first a number $k(X)$ is defined in a sophisticated way,
and then $f(X)$ is a function of $n$ and $k(X)$, e.g. $n (1+\log k(X))$ or $n (1+k(X))$.
A concrete example is the number of runs~\cite{Knuth1998}: By splitting $X$ between every pair of adjacent items that form a decreasing sequence of length 2, we partition $X$ into maximal contiguous non-decreasing sequences $X_1, X_2, \dots, 
X_{k=\nRuns}$.
Then, \cite{Knuth1998} puts $f(X) = n (1+ \log (\nRuns))$ and presents a (comparison) adaptive algorithm.

\subparagraph{Ranks and Duality.}
 For a list $X$ of elements and $x \in X$ we define the rank of $x$ denoted by $\operatorname{rank}_{X}(x)$ as its index in the sorted order, \emph{after} removing duplicates.
Consider for the sake of analysis, simultaneously replacing each element $X[i]$ by $\operatorname{rank}_{X}(X[i])$.
Then any comparison-based algorithm makes the exact same decisions for the replaced input. Thus, we may assume that $X$ is a list of elements in $[n]$.
If $X$ contains no duplicates then $X$ is a permutation. If $X$ is a permutation, it has a unique inverse $X^{-1}$ which we call its dual, and which leads to  
dual notions of presortedness.
For example, dual runs: $\nRuns^*(X) := \nRuns(X^{-1})$, giving rise to $f(X) = n (1+\log (\nRuns^*(X)))$.
In most cases, these dual definitions can be (manually) extended to inputs $X$ with duplicates.
E.g., A subsequence $X_i \subseteq X$ is \emph{value-contiguous} if for all $x, y, z \in X$ with $x \le y \le z$ and $x, z \in X_i$, we also have $y \in X_i$.
We can partition $X = X_1 \sqcup \dots \sqcup X_k$ into maximal non-decreasing value-contiguous sequences $X_i$
and put $\nRuns^*(X) := k$; this agrees with the previous definition when $X$ contains no duplicates.
We \emph{occasionally} define duality for lists with duplicates, by considering the position of the $i$'th element in the sorted order after a stable sort, denoted $\ol{X}[i]$.
Note that $\ol{X}$ is a permutation.

\subparagraph{Entropy functions.} The Shannon entropy function allows us to define strong notions of presortedness from a given partition of $X$.
We illustrate three examples:
For runs, consider the partition $X = X_1 \sqcup \dots \sqcup X_k$ into maximal contiguous non-decreasing sequences,
put $\HRuns := \sum_{i=1}^{k} \frac{|X_i|}{n} \log_2 \frac{n}{|X_i|}$ and let $f_{\HRuns}(X) = n (1+\HRuns(X))$.
For dual runs, consider the partition $X = X_1 \sqcup \dots \sqcup X_k$ into maximal \emph{value-contiguous} non-decreasing sequences,
put $\HRuns^* := \sum_{i=1}^{k} \frac{|X_i|}{n} \log_2 \frac{n}{|X_i|}$ and let $f_{\HRuns^*}(X) = n (1+\HRuns^*(X))$.
McIlroy~\cite{McIlroy1993} shows that merge sort with exponential search is both $\HRuns$-adaptive and $\HRuns^*$-adaptive.
For multi-set sorting, assume $X \subseteq [n]^n$ and denote for $i \in [n]$ let $m_i$ be the number of occurrences of $i$.
Put $\HMulti := \sum_{i=1}^{n} \frac{m_i}{n} \log \frac{n}{m_i}$ and $f_{\HMulti}(X) := n(1+\HMulti(X))$.
\cite{MunroSpira1976} shows that merge sort, with a minor adaption, is $\HMulti$-adaptive.

{
\subparagraph{A hierarchy.}
Measures of presortedness form a hierarchy.
We draw an arrow from a measure $M$ to a measure $N$, and write $M \preceq N$, if $f_M(X) \in O(f_N(X))$ for all inputs $X$.
In other words, any $M$-adaptive algorithm is automatically also $N$-adaptive.
For example, all occurrences of the same value in $X$ form a value-contiguous subsequence of $X$, and each dual run is a union of such sequences.
Consequently, $\HRuns^*(X)\leq \HMulti(X)$, so an algorithm being $\HRuns^*$-adaptive implies adaptivity for multi-set sorting.
Figure~\ref{fig:hierarchy} illustrates the hierarchy of the measures considered here. 
Its source is $\Reg$.
Here, we explain the definition of $\Reg$ which is complicated, and considers for each element $i$ its rank in $X[1..i]$.
\[
\Reg(X):=\prod_{i=2}^n\min_{j<i}
\left\{(i-j) \cdot \max\left\{1,
\left|\operatorname{rank}_{X[1..i]}(X[i])
-\operatorname{rank}_{X[1..i]}(X[j])\right|\right\}\right\}.
\]
The measure $\Reg$ is derived from insertion sort: process $X$ from left to right and insert $X[i]$ into $X[1..(i-1)]$ using a good \emph{finger} $X[j]$.
A good finger is close to $X[i]$ both in terms of index and rank.
Crucially, and inconveniently (as acknowledged by McIlroy \cite{McIlroy1993}), these ranks are taken only among the elements seen so far.
For example, for $X=(1,5,4,3,2)$ consider $i = 2, j = 1$: when $X[2]=5$ is inserted, only $X[1] = 1$ and $X[2] = 5$ have been seen, so the rank difference is $1$ and the cost is $(2-1)\lvert2-1\rvert=1$, even though $|X[2] - X[1]| = 4$.
Thus, $\Reg$ considers this insertion to be cheap as $2,3,4$ have not yet been inserted.
Unfortunately, this makes $\Reg$ not invariant under reversing $X$.
Our alternative, weaker, but more natural definition $\reg$ simply considers for $X[i]$ and $X[j]$ their ranks in the full input $X$; the target cost $n+\log\reg(X)$ is invariant up to constant factors under reversing the sequence or its value order and, for permutations, under inversion: $n+\log\reg(X)\in\Theta(n+\log\reg(X^{-1}))$.
\[
\reg(X):=\prod_{i=2}^n\min_{j<i}
\left\{(i-j) \cdot \max\left\{1,
\left|\rank_X(X[i])
-\rank_X (X[j])\right|\right\}\right\}.
\]
}

\subparagraph{Problem statements.}
Our input is a list $X$ of $n$ elements from a totally ordered universe, which may contain duplicates.
The first is comparison-based sorting: compute a non-decreasing ordering of $X$ using only comparisons to determine the relative order of elements.
The second is offline BST access: given a BST containing exactly the distinct values of $X$, access $X[1],X[2],\dots,X[n]$ in input order (see Section~\ref{sub:bst_def} for the formal definition of the BST model). The goal is to obtain adaptive bounds for both problems.

\subparagraph{Contributions.}
Our first contribution, is that we establish the hierarchy in Figure~\ref{fig:hierarchy}, adding several new or previously unpublished relations, establishing that $\Reg$ is its source. Moffat and Petersson~\cite{MoffatPetersson1991} achieved an algorithm that used $\Reg$-adaptive comparisons, but whose running time has an additive $O(n\log\log n)$ time overhead. We \textbf{\emph{discover}} that insertion sort using the unified bound dictionary of B\u{a}doiu, Cole, Demaine, and Iacono~\cite{BadoiuColeDemaineIacono2007} takes $O(n+\log\Reg(X))$ time, eliminating this overhead.
We then make a case for studying the mathematically more convenient measure $\reg$:

 This measure is near-identical to $\Reg$, except that it considers rank distances in the \emph{full input.} 
We show the interesting mathematical property that $n+\log\reg(X)$ is, up to constant factors, the weight of a minimum spanning tree with edge weights $\log(2+|i-j|+|\rank_X(X[i])-\rank_X(X[j])|)$. This characterisation explains the symmetric properties of $\reg$. Moreover, $\reg$ dominates every measure in Figure~\ref{fig:hierarchy} except the measures that consider prefix ranks $\Reg$, $\Hist$, and $\Loc$.
Finally, we show that if $X$ contains no duplicates then $O(n + \log \reg) = O(n + \log \Reg)$, which illustrates the strength of $\reg$ (see \Cref{theo:Reg_reg}).

Our first new algorithm, \textsc{SproutSort}, addresses offline BST access.
Given the rank of each element and its multiplicity, it implements a BST merge sort of cost $O(n+\log\reg(X))$. Through the framework of Blelloch and Dobson~\cite{BlellochDobson2023} and our extension to repeated keys, it yields a deterministic offline BST access algorithm with the same cost. Existing unified bound dictionaries and heaps lie outside the BST model, while known BST algorithms incur an additive $O(n\log\log n)$ overhead~\cite{DerryberrySleator2009,DemaineEtAl2013}. The $\Reg$-adaptive algorithms can first compute the ranks and multiplicities~\cite{BadoiuColeDemaineIacono2007}, and so our access algorithm runs in $O(n+\log\reg(X))$ time.

Finally, we present \textsc{SkeletonSort}, a simple randomised pointer-machine sorting algorithm. It sorts distinct comparable keys in expected $O(n+\log\reg(X))$ time without prior knowledge of their ranks.
Since $O(n + \log \reg(X)) = O(n + \log \Reg(X))$, this algorithm is as good as 
~\cite{BadoiuColeDemaineIacono2007} when the input has no duplicates.

\begin{figure}[H]
    \centering
\includegraphics[]{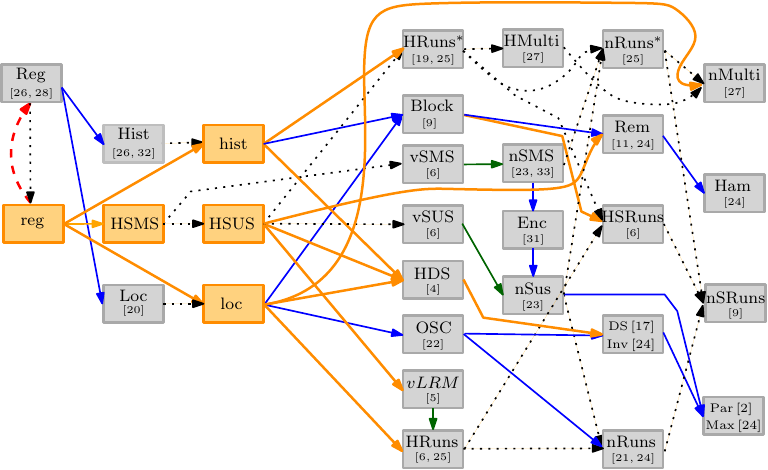}
\caption{Orange\&red are established here, red only holds when the input has no duplicates. Blue is from~\cite{PeterssonMoffat1995}, green from~\cite{BarbayNavarro2013}, and black is straightforward. For entropy measure definitions, see Table~\ref{tab:presortedness}.
We move the base of the blue arrows~\cite{PeterssonMoffat1995} from their original Hist and Loc to hist and loc, although these were not defined at the time of \cite{PeterssonMoffat1995}, since it follows from the essence of their proof.} \label{fig:hierarchy}
\end{figure}

\newpage

\section{Discovery} \label{sect:discovery}

In this section, we discover that $\Reg$ corresponds to the unified bound for dictionaries~\cite{BadoiuColeDemaineIacono2007} and establish the hierarchy in Figure~\ref{fig:short_slim}.
In Appendix~\ref{sect:arrows} we extend the hierarchy so that $\Reg$ dominates all other previous measures in Figure~\ref{fig:hierarchy}. 
Thus, we \emph{discover} that the dictionary from~\cite{BadoiuColeDemaineIacono2007} yields an insertion-sort algorithm that dominates all previously studied adaptive sorting algorithms from~\cite{MunroSpira1976,Mannila1985,Skiena1988,KatajainenLevcopoulosPetersson1989,CarlssonLevcopoulosPetersson1990,MoffatPetersson1991,McIlroy1993,LevcopoulosPetersson1993,LevcopoulosPetersson1994,BarbayFischer2012,BarbayNavarro2013,BlellochDobson2023}.
We next revisit the observation of McIlroy~\cite{McIlroy1993} that $\Reg$ is mathematically inelegant.
We make the case for a mathematically more convenient and elegant (but slightly weaker) measure $\reg$, and show that this measure also suffices to dominate all measures in the hierarchy except for three, all equally mathematically inconvenient, measures listed in Section~\ref{sect:reg_relations}.
In Appendix~\ref{sect:BST_loc_hist}, we further explore the definitions of presortedness measures and explain how the classical $\Hist$ and $\Loc$ from Figure~\ref{fig:hierarchy} correspond to the working set and dynamic finger properties for binary search trees, respectively.
For completeness, Appendix~\ref{app:synergystic} discusses the measure by Barbay, Ochoa and Satti~\cite{BarbayOchoaSatti2017}.

\subparagraph{Discovering a $\Reg$-adaptive algorithm.}
Badoiu, Cole, Demaine and Iacono~\cite[Theorem 3]{BadoiuColeDemaineIacono2007} design a \emph{unified bound} dictionary.
We show that sorting with their dictionary is $\Reg$-adaptive.

\begin{definition}
A dictionary maintains a set $Y$ of distinct elements with
multiplicities. Inserting $x$ adds it to $Y$ if absent, and
otherwise increases its multiplicity. The dictionary can
report $Y$ in sorted order in $O(|Y|)$ time.
Insertion sort inserts all elements of $X$, and then reports $Y$
in sorted order, repeating each element according to its
multiplicity.
Let $\age(y)$ count the distinct elements accessed since the last access to $y$. The first insertion takes $O(1)$ time. Thereafter, the
\emph{unified bound} demands that inserting $x$ takes amortized
$
O\left(\min_{y\in Y}
\left\{
\log(2+|\rank_{Y\cup\{x\}}(x)-\rank_{Y\cup\{x\}}(y)|)
+\log(2+\age(y))
\right\}\right)
$
time.
\end{definition}

\begin{theorem}
Insertion sort with a unified bound dictionary sorts a sequence $X$ of $n$ elements in $O(n+\log\Reg(X))$ time and comparisons.
\end{theorem}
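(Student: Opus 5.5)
We bound the cost of each insertion separately and then sum. Insertion sort inserts $X[1],\dots,X[n]$ in order. When $X[i]$ is inserted ($i\ge 2$), the dictionary holds $Y_{i-1}$, the set of distinct values of $X[1..i-1]$. Since $Y_{i-1}\cup\{X[i]\}$ is exactly the set of distinct values of $X[1..i]$, ranks in $Y_{i-1}\cup\{X[i]\}$ coincide with $\rank_{X[1..i]}$. Fix any $j<i$ and take $y=X[j]\in Y_{i-1}$ in the unified bound. The rank term is then exactly $\log(2+|\rank_{X[1..i]}(X[i])-\rank_{X[1..i]}(X[j])|)$. For the age term, the distinct elements accessed since the last access to $X[j]$ are among $X[j+1..i-1]$, so $\age(X[j])\le i-1-j$, and hence $\log(2+\age(X[j]))\le\log(2+(i-j))$. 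The amortized cost of inserting $X[i]$ is therefore
\[
O\Bigl(\min_{j<i}\bigl\{\log(2+(i-j))+\log(2+d_{ij})\bigr\}\Bigr),
\qquad d_{ij}:=\bigl|\rank_{X[1..i]}(X[i])-\rank_{X[1..i]}(X[j])\bigr|.
\]

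Next we compare this with the factors of $\Reg$. Since $i-j\ge1$, we have $2+(i-j)\le 3(i-j)$ and $2+d_{ij}\le 3\max\{1,d_{ij}\}$. Thus
\[
\log(2+(i-j))+\log(2+d_{ij})\le 2\log 3+\log\bigl((i-j)\max\{1,d_{ij}\}\bigr).
\]
Taking the minimum over $j$ bounds the cost of insertion $i$ by $O(1+\log m_i)$, where $m_i$ is the $i$-th factor in the product defining $\Reg(X)$. Summing over $i$ and adding $O(1)$ for the first insertion gives amortized total $O(n+\sum_i\log m_i)=O(n+\log\Reg(X))$. Amortized bounds sum to actual bounds up to the initial potential. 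Starting from an empty structure, that potential is zero (or $O(1)$), which is the standard guarantee of \cite{BadoiuColeDemaineIacono2007}. The final report costs $O(|Y|)$ to list the distinct elements, plus $O(n)$ to expand them according to multiplicities. Comparisons are bounded by time.

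The main point needing care is the age bound, which must hold with duplicates and repeated accesses. Here $\age(y)$ counts the distinct elements accessed since the last access to $y$. This last access could be later than position $j$ if $X[j]$'s value reappears later, but that only decreases the age. So the bound $\age\le i-1-j$ still holds. We also use that an insertion of an existing key counts as an access to it, which is how the definition treats multiplicity increments. Finally, we must check that the minimum over $y\in Y$ in the unified bound may be taken over the subset of values of the form $X[j]$ with $j<i$, which holds trivially since every such value lies in $Y_{i-1}$.
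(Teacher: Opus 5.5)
Your proof is correct and follows the paper's argument: choose $y=X[j]$ in the unified bound, identify $\rank_{Y\cup\{X[i]\}}$ with $\rank_{X[1..i]}$, bound $\age(X[j])$ by the index gap $i-j$, and sum over $i$. Your comparison $2+(i-j)\le 3(i-j)$ and $2+d_{ij}\le 3\max\{1,d_{ij}\}$, and your remark on turning amortized cost into actual cost from an empty structure, only make explicit steps the paper leaves implicit.
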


\begin{proof}
We insert $X$ into a unified bound
dictionary. Processing
$X[i]$ consists of inserting it if absent, or searching for
it and increasing its multiplicity otherwise.

For every $j<i$, the last access to $X[j]$ occurred at
or after step $j$, so $\age(X[j])\leq i-j$.
Since $\rank$ ignores duplicates, and $Y\cup\{X[i]\}$
and $X[1..i]$ contain the same values, we have
$\rank_{Y\cup\{X[i]\}}(\cdot)=\rank_{X[1..i]}(\cdot)$.
Thus, the unified bound gives a total running time of
\begin{equation*}
\begin{split}
O(n+\sum_{i=2}^n\min_{j<i}\left\{
\log(2+|\rank_{X[1..i]}(X[i])-\rank_{X[1..i]}(X[j])|)
+\log(2+i-j)\right\})
 = O(n+\log\Reg(X)).
\end{split}
\end{equation*}
Finally, we output the distinct elements in sorted order,
repeating each according to its multiplicity, in $O(n)$ time.
The number of comparisons is bounded by the running time.
\end{proof}

\begin{figure}[H]
    \centering
    \includegraphics[width = 0.92\linewidth]{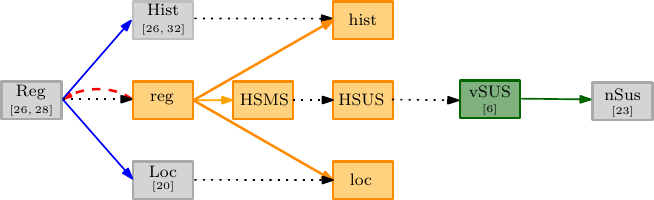}
    \caption{The relations established in this section. 
    Orange and red are new, green is~\cite{BarbayNavarro2013}.}
    \label{fig:short_slim}
\end{figure}

\subsection{The Case for reg} \label{sect:reg_mst}

The measure $\Reg$ has a mathematically inconvenient property:
reversing $X$ may significantly change its value.
For example, let $n=k^2$ with $k\geq2$, put
$S=(1,k+1,\ldots,(k-1)k+1)$, and let
$X=(S,\ldots,S,1,2,\ldots,n)$, with $k$ copies of $S$.
Thus, $X$ has length $2n$.
In $X$, consecutive items have prefix-rank distance at most $1$,
except at the start of each subsequent copy of $S$ and at the
start of the final increasing run, where a previous occurrence
of the same value is at index distance $k$.
Thus, $\Reg(X)\leq k^k=2^{o(n)}$.

However, reversing $X$ places $(n,n-1,\ldots,1)$ first,
followed by $k$ reversed copies of $S$.
Consider the $n-k$ accesses after the first reversed copy of $S$.
Every finger in the initial descending run has index distance
at least $k$.
Among the reversed copies of $S$, different values have
prefix-rank distance at least $k$, since all values have already
appeared, while equal values have index distance at least $k$.
Thus, each of these $n-k$ accesses contributes a factor at
least $k$ to $\Reg(\mathrm{Reverse}(X))$.
Conversely, choosing the preceding item as a finger bounds
each factor by $n$, and the initial descending run contributes
only factors of $1$.
Hence
$\Reg(\mathrm{Reverse}(X))=k^{\Theta(n)}=2^{\Theta(n\log n)}$.

Consequently, there are two different, incomparable variants
of $\Reg$ which we find not particularly elegant.
For this reason, we wish to make the case for the simpler measure
\[
\reg(X):=\prod_{i=2}^n\min_{j<i}
\left\{(i-j)\cdot\max\left\{1,
|X[i]-X[j]|\right\}\right\}.
\]
In fact, McIlroy~\cite{McIlroy1993} already notes that $\Reg$ is
needlessly complicated compared to $\reg$
(although he incorrectly claims that
$\log\reg \leq \log\Reg + O(n)$, see above counterexample).
Although $\reg$ is weaker than $\Reg$, 
it appears to capture the essence of both measures.
We establish another property that we believe illustrates the
mathematical elegance of $\reg$:
the bound $n+\log\reg$ is equivalent to the weight of a
minimum spanning tree in a natural graph.
This characterisation makes $O(n+\log\reg)$ invariant up to constant
factors under reversing $X$, reversing its values,
or replacing $X$ by $X^{-1}$. We show:

\begin{theorem}\label{theo:reg_mst}
Given $X$ of length $n\geq2$, consider the complete graph $G(X)$ on $[n]$, with
edge weights $w(\{i,j\})=\log(2+|i-j|+|X[i]-X[j]|)$.
Let $\unimst(X)$ be the weight of a minimum spanning tree of $G(X)$.
Then $n+\log\reg(X)\in\Theta(\unimst(X))$.
\end{theorem}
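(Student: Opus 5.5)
We prove the two directions separately. Throughout, write $d_i := \min_{j<i} (i-j)\cdot\max\{1,|X[i]-X[j]|\}$, so that $\log\reg(X)=\sum_{i\ge2}\log d_i$. We also use the elementary comparison that for $a\ge1$ and $b\ge0$,
\[
\log(2+a+b)\le \log(2a)+\log(2\max\{1,b\}) = 2+\log\bigl(a\max\{1,b\}\bigr).
\]
Conversely, $\log(a\max\{1,b\})\le 2\log(2+a+b)$, since both factors are at most $2+a+b$.

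\emph{Upper bound $\unimst(X)\in O(n+\log\reg(X))$.} For each $i\ge2$, let $j(i)<i$ be a minimiser defining $d_i$. The edges $\{i,j(i)\}$ form a spanning tree, because every vertex $i\ge 2$ points to a smaller index. The total weight of this tree is at most $\sum_i\bigl(2+\log d_i\bigr)=O(n+\log\reg(X))$. The MST weighs no more, so this direction is immediate.

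\emph{Lower bound $n+\log\reg(X)\in O(\unimst(X))$.} Every edge has weight at least $\log 3\ge 1$, so $\unimst\ge n-1\ge n/2$. It therefore suffices to show $\log\reg(X)=O(\unimst(X))$. Fix an MST $T$. We must charge each $\log d_i$ to edge weights of $T$ with bounded multiplicity. An edge $\{i,j\}$ of $T$ with $j<i$ directly bounds $\log d_i\le 2w(\{i,j\})$. The difficulty is that $T$ is not oriented backwards: a vertex $i$ may only have tree neighbours with larger index. This orientation issue is the main obstacle.

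To handle it, I would root $T$ at vertex $1$. Each $i\ne1$ has its parent $p(i)$ on the path toward $1$, but possibly $p(i)>i$. The plan is to use the following sub-claim: the path from $i$ to $1$ in $T$ must reach some vertex $j<i$. Let $j$ be the first such vertex on the path, and let $u_0=i,u_1,\dots,u_m=j$ be the path, with $u_1,\dots,u_{m-1}>i$. Then
\[
d_i\le (i-j)\max\{1,|X[i]-X[j]|\}\le \Bigl(\sum_t |u_t-u_{t+1}|+|X[u_t]-X[u_{t+1}]|\Bigr)^2,
\]
and the logarithm of a sum of path lengths is not bounded by the sum of edge weights with small multiplicity. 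So this direct path charging is too lossy on its own.

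A cleaner route avoids paths by using a Kruskal/Borůvka-type counting argument. The quantity $\log\reg$ is exactly the weight of the "backward nearest-neighbour forest" $F$, which takes for each $i$ the cheapest edge to an earlier vertex (up to the constant-factor equivalence of weights above). The MST, in turn, is at least half the weight of the Borůvka first-round forest, and more generally the MST weight dominates each cut. The argument I would attempt goes by dyadic scales:
\begin{enumerate}
\item For each threshold $2^s$, let $c_s$ be the number of components of the graph using edges of $\ell_1$-length $|i-j|+|X[i]-X[j]|\le 2^s$. Then $\unimst=\Theta\bigl(n+\sum_s c_s\bigr)$, by the standard layer-cake formula for MST weight.
\item For each $s$, count the indices $i$ with $d_i>4^{s}$. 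Any such $i$ has no earlier point within $\ell_1$-distance $2^{s}$. Hence $i$ is the minimum index of its component in the threshold-$2^s$ graph restricted to $\ell_\infty$-balls.
\item This needs a packing argument in the grid $[n]\times[n]$, since $\ell_1$-neighbourhoods need not be connected along index order. The key claim is that the number of such $i$ is $O(c_{s-O(1)})$.
\end{enumerate}
If this claim holds, summing over $s$ gives $\log\reg=\sum_i\log d_i=O\bigl(n+\sum_s c_s\bigr)$, which completes the lower bound. I expect establishing the claim in step 3 to be the hardest step. The plan is to show that two such "record" indices cannot lie in the same threshold component, by taking the component's minimum index as the unique record. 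That is immediate for connectivity via single short edges. The care lies in confirming that a point with no earlier point within distance $2^s$ is genuinely the minimum of its threshold component rather than merely locally isolated. The fallback is to assign each record $i$ to the component of $i$ at scale $s-1$ and show that distinct records at scale $s$ lie in distinct components at scale $s-1$.
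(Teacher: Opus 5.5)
Your upper bound $\unimst(X)\le 2(n-1)+\log\reg(X)$ is correct and is the paper's argument. The gap is in the lower bound. The key claim of step~3, that the number of $i$ with $d_i>4^s$ is $O(c_{s-O(1)})$ with $c_s$ as defined in step~1, is false. So is the step-2 inference that such an $i$ is the minimum index of its threshold component, and so is your fallback that distinct records lie in distinct components at scale $s-1$. Here is a counterexample. Fix $s$, put $V=4^s+1$, and let $X$ repeat the block $(V+1,1,V,1,V-1,1,\dots,2,1,1,1)$ of length $2V+2$. This has value $1$ at every even position, and at the odd positions a ``tooth'' descending by one from $V+1$ to $1$. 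Consecutive backbone points and consecutive tooth points are at $\ell_1$-distance at most $3$, and each tooth ends on the backbone. So the threshold-$4$ graph is connected, and $c_t=1$ for all $t\ge 2$. Yet every tooth start $i>1$ has $d_i>4^s$. A backbone finger costs at least $V$. The previous tooth's point of value $V+1-t$ costs $(2V+2-2t)\max\{1,t\}\ge 2V$, and older teeth cost more. Hence for any fixed $C$ and $s\ge C+2$ there are $\Theta(n/4^s)$ records at scale $s$, while $c_{s-C}=1$. The theorem survives only because $c_0,c_1=\Theta(n)$: the records must be paid for by structure at much smaller scales. This is exactly the orientation problem you identified (a vertex reaching earlier vertices only through later ones), resurfacing in step~3, and a fixed-offset component count cannot handle it.

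The missing ingredient is a valid cross-scale charging. The paper gets it through diameter control. It contracts clusters in rounds, merging only along a star forest that contains at least half of the current MST edges of weight in $[2r-2,2r)$, and adds $2$ to the weight of the delayed edges. Since $d=2^{w}$ satisfies the triangle inequality, every $r$-cluster has $d$-diameter at most $2^{2r+3}$. So each displaced leader gets a finger to the new smallest-index leader of weight at most $2r+3$, paid for by a star edge of weight at least $2r-2$.

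Your scale-by-scale framework can also be repaired, by replacing the component count with a packing bound. Let $e_i=\min_{j<i}(|i-j|+|X[i]-X[j]|)$. Then $\log d_i\le 2\log e_i$, so $\log\reg(X)\le 2\sum_{s\ge 0}k_s$ with $k_s=|\{i\ge 2: e_i>2^s\}|$. These indices are pairwise more than $2^s$ apart, and also more than $2^s$ from vertex $1$, so their closed $\ell_1$-balls of radius $2^{s-1}$ are disjoint. The MST path from such an $i$ to vertex $1$ leaves its ball, using total $\ell_1$-length above $2^{s-1}$. Charge each edge $e$ on that exit path $\min\{1,\ell(e)/2^{s-1}\}$; this gives each record at least one unit. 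Each edge is charged by at most two records per scale, hence $O(1+\log\ell(e))=O(w(e))$ over all $s$. This yields $\sum_s k_s=O(\unimst(X))$. As written, however, your proposal contains neither this nor any other valid argument for the lower bound.
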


\begin{definition}
A \emph{finger function} is a map
$f:[n]\setminus\{1\}\to[n]$ with $f(i)<i$.
Recall that $\reg(X)$ minimises
$\prod_{i=2}^n (i-f(i))\max\{1,|X[i]-X[f(i)]|\}$
over all finger functions $f$.
\end{definition}

\begin{lemma}\label{lemm:forest_star}
Every forest with $m$ edges contains a spanning subgraph $H$
with at least $m/2$ edges, such that every connected component
of $H$ is a star.
\end{lemma}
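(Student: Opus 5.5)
We prove Lemma~\ref{lemm:forest_star} one tree at a time. The edge sets of the components are disjoint and stars in different trees never share a vertex, so it suffices to show the following: every tree $T$ with $m\ge 1$ edges contains a spanning subgraph (possibly with isolated vertices) whose components are stars and which has at least $m/2$ edges. Isolated vertices count as trivial stars, and trees with no edges contribute nothing.

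We use induction on the number of vertices of $T$. If $T$ is itself a star, including a single edge, we take $H=T$. Otherwise $T$ has diameter at least $3$.

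Root $T$ at an endpoint of a longest path and let $v$ be a deepest leaf. Let $u$ be the parent of $v$. By maximality of the path, all children of $u$ are leaves. Say $u$ has $c\ge 1$ children, and let $p$ be the parent of $u$, which exists because the diameter is at least $3$.

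Consider the star $S$ formed by $u$ and its $c$ leaf children; it has $c$ edges. Delete $u$ and its children from $T$. This removes exactly $c+1$ edges, namely the $c$ leaf edges and the edge $\{u,p\}$. What remains is a tree $T'$ containing $p$, with $m-c-1\ge 1$ edges, since the diameter is at least $3$ and so $p$ has some other neighbour. By induction, $T'$ has a star subgraph $H'$ with at least $(m-c-1)/2$ edges.

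Set $H=H'\cup S$. The star $S$ is vertex-disjoint from $T'$, so the components of $H$ are still stars. Counting edges,
\[
|E(H)|\ \ge\ c+\frac{m-c-1}{2}\ =\ \frac{m+c-1}{2}\ \ge\ \frac{m}{2},
\]
where the last step uses $c\ge 1$. This completes the induction, and taking the union over all trees of the forest proves the lemma.

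The one delicate point is making sure the discarded edge $\{u,p\}$ is paid for. Here it is paid for by the $c\ge 1$ edges of the star $S$, which is why we remove a whole star rather than a single edge. No other step is an obstacle.

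An alternative route would be to $2$-colour the tree's edges by the parity of the child's depth and take the larger class. That does not directly give stars, so the inductive argument is the one we would write.
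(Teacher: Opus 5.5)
Your induction is correct. Peeling off the star formed by the parent $u$ of a deepest leaf and its $c\ge 1$ leaf children discards only the edge $\{u,p\}$, so $|E(H)|\ge c+(m-c-1)/2\ge m/2$. The bookkeeping also holds up: $p$ exists and has another neighbour, and $T'$ is again a tree.

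The paper, however, uses exactly the route you set aside in your last paragraph, and your reason for setting it aside is mistaken: the parity classes \emph{are} star forests. Root each tree and put an edge in class $0$ or $1$ according to the parity of the depth of its child endpoint. In class $0$, a vertex of even depth is incident only to the edge to its parent, and a vertex of odd depth only to the edges to its children. Hence every component is an odd-depth vertex together with its children (possibly an isolated vertex), i.e.\ a star, and class $1$ is symmetric. As the two classes partition the $m$ edges, one of them has at least $m/2$ edges.

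This argument takes a few lines and needs no induction or longest path. It also proves the stronger structural fact that the edges of every forest split into two star forests. Your leaf-stripping argument is longer, but it is equally constructive and gives a slightly sharper count. Each step removes at least two edges and discards exactly one, while the final star discards none, so you keep at least $(m+1)/2$ edges of every tree with $m\ge 1$ edges. For the application in Theorem~\ref{theo:reg_mst}, where only a constant fraction of the edges matters, the two proofs are interchangeable.
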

\begin{proof}
Root each tree arbitrarily and partition its edges according
to whether the child endpoint has even or odd depth.
In either part, every vertex is adjacent either only to its
parent or only to its children.
Thus, every connected component is a star. Since the two parts partition the $m$ edges,
one contains at least $m/2$ edges.
\end{proof}

\begin{proof}[Proof of \cref{theo:reg_mst}]
Every finger function $f$ defines a spanning tree, since
$f(i)<i$ for all $i>1$.
Moreover, $w(\{i,f(i)\})$ is at most
$2+\log(i-f(i))+\log\max\{1,|X[i]-X[f(i)]|\}$.
Taking the minimum over all finger functions gives
$\unimst(X)\leq 2n+\log\reg(X)$.

Conversely, let $T$ be a minimum spanning tree.
We construct a finger function $f$ whose edges have total
weight $O(w(T)+n)$.

Let $T'$ be a copy of $T$.
Initially, every vertex forms its own $0$-cluster.
In rounds $r=1,2,\ldots$, contract all $(r-1)$-clusters
and consider the forest $F$ of edges whose current weight
lies in $[2r-2,2r)$.
By \cref{lemm:forest_star}, choose a star forest
$H\subseteq F$ containing at least half its edges.
Merge the clusters of each star into an $r$-cluster,
and increase the weight of every edge in $F-H$ by $2$.
Clusters not involved in a merge remain unchanged.

We first bound the diameter of the clusters.
Put $d(i,j)=2^{w(i,j)}$.
By the triangle inequality,
$d(i,j)\leq d(i,k)+d(k,j)$.
Let $D_r$ be the maximum $d$-diameter of an $r$-cluster.
Since every star has diameter realized by at most two edges, each
of weight less than $2r$, we have
$ D_0=2$ and $D_r\leq 3D_{r-1}+2\cdot 2^{2r}$. 
Induction gives $D_r\leq 2^{2r+3}$.
Thus, any two vertices in the same $r$-cluster
have weight at most $2r+3$.

The \emph{leader} of a cluster is its smallest-index vertex.
Whenever an $r$-cluster merges $k$ smaller clusters with leaders
$i_1<\cdots<i_k$, we set $f(i_j)=i_1$ for every $j>1$.
This is valid since $i_1<i_j$.
Moreover, each vertex receives a finger exactly when it ceases
to be a leader. Since eventually all vertices belong to one
cluster, whose leader is $1$, every vertex except $1$
receives exactly one finger.
These $k-1$ finger edges have weight at most $2r+3$,
while the corresponding star contains $k-1$ edges
of weight at least $2r-2$.
Summing over all merges gives $
w(f)\leq w(T')+5(n-1)$.

Finally, every round uses at least as many edges as it
increases in weight.
Since each edge is used in exactly one merge,
the total weight increase is at most $2(n-1)$.
Hence $w(T')\leq w(T)+2(n-1)$, and therefore
$w(f)\in O(w(T)+n)$.
Since $\log\reg(X)\leq 2w(f)$ and $w(T)\geq n-1$,
we conclude that $n+\log\reg(X)\in O(\unimst(X))$.
\end{proof}

\subsection{Clean definitions for existing measures: loc, hist, HSMS, HSUS}
In addition to $\Reg$, Moffat and Petersson~\cite{MoffatPetersson1991}
define $\Hist$ and $\Loc$ (see Table~\ref{tab:presortedness}),
which lie high in the hierarchy of Figure~\ref{fig:hierarchy}.
Both measures have been studied through adaptive sorting algorithms~\cite{MoffatPetersson1991,BlellochDobson2023}.
Like $\Reg$, they use prefix ranks and thus share its mathematical inconveniences.
McIlroy~\cite{McIlroy1993} briefly mentioned more elegant alternatives, analogous to $\reg$:
\begin{align*}
\loc(X) = \prod_{i=2}^{n} \max\{1, |X[i] - X[i-1]|\} \quad\text{and}\quad
\hist(X) = \prod_{i=2}^{n} |\ol{X}^{-1}[i] - \ol{X}^{-1}[i-1]|.
\end{align*}

We can define mathematically convenient counterparts for more existing measures:
Levcopoulos and Petersson~\cite{LevcopoulosPetersson1994}
study partitions $X=X_1\sqcup\dots\sqcup X_k$ into non-decreasing subsequences.
As the $X_i$ need not be contiguous or value-contiguous,
this generalises both runs and dual runs.
They define $\nSUS$ as the minimum number $k$ of parts,
and give a sorting algorithm with adaptive bound
$f_{\nSUS}(X)=n(1+\log\nSUS(X))$.
Barbay and Navarro~\cite{BarbayNavarro2013} strengthen this bound
by considering the entropy of the part sizes rather than their number.
They compute a partition with $\nSUS$ parts and define
$H(\vSUS)$ as the entropy of its part sizes.
Since any partition into $k$ parts has entropy at most $\log k$,
this strengthens the $\nSUS$ bound.
However, as they observe, their partition need not minimize entropy.
We therefore define
\[
\HSUS(X)=
\min_{X=X_1\sqcup\dots\sqcup X_k}
\sum_{i=1}^k \frac{|X_i|}{n}\log\frac{n}{|X_i|}
\]
where the minimum is over all partitions into non-decreasing
subsequences.
Clearly, $\HSUS\preceq H(\vSUS)$, giving a stronger measure.
We further generalize these partitions by allowing each $X_i$
to be either non-decreasing or non-increasing.
Such partitions are called \emph{shuffled monotone sequences} (SMS).
Levcopoulos and Petersson~\cite{LevcopoulosPetersson1994}
define $\nSMS(X)$ as the minimum number of parts in such a partition.
We define its entropy counterpart as
\[
\HSMS(X)=
\min_{X=X_1\sqcup\dots\sqcup X_k}
\sum_{i=1}^k \frac{|X_i|}{n}\log\frac{n}{|X_i|}
\]
where the minimum is over all partitions into monotone subsequences.
Every SUS partition is also an SMS partition, so
$\HSMS\preceq\HSUS$ and $\nSMS\preceq\nSUS$.
Moreover, every partition into $\nSMS$ parts has entropy
at most $\log\nSMS$, so $\HSMS\preceq\nSMS$.
Our entropy-minimizing variants are therefore stronger than
their classical counterparts, and independent of the choice
of a particular partition.
We further study these relations in \cref{sect:reg_relations}
and Appendix~\ref{sect:arrows}.
Our main point is that $\reg$ dominates these strong,
historically motivated measures.

\subsection{The strength of $\reg$}\label{sect:reg_relations}

In \cref{fig:hierarchy}, $\reg$ dominates all measures except
$\Reg$, $\Hist$, and $\Loc$.
These three measures all rely on ranks within a prefix of $X$, rather than within the full input. 
Replacing them by their full-rank counterparts
$\reg$, $\hist$, and $\loc$, leaves $\reg$ as the
strongest measure.
We illustrate this by proving
$\reg\preceq\loc$, $\reg\preceq\hist$, and
$\reg\preceq\HSMS$.

\begin{lemma}\label{lemm:reg_loc}
    $\reg\preceq\loc$.
\end{lemma}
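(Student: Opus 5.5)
Since $M \preceq N$ means $f_M(X)\in O(f_N(X))$, we need to compare target costs. For $\reg$ the target is $n+\log\reg(X)$, and for $\loc$ it is $n+\log\loc(X)$. So it suffices to show $\log\reg(X)\le\log\loc(X)$, or at least $\log\reg(X)\le\log\loc(X)+O(n)$.

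The approach is to exhibit one finger function and bound $\reg$ by its cost. Recall that $\reg(X)$ is the minimum over all finger functions $f$ of $\prod_{i=2}^n (i-f(i))\max\{1,|X[i]-X[f(i)]|\}$. Any fixed choice of $f$ therefore gives an upper bound. Choose the predecessor finger $f(i)=i-1$, which is valid since $f(i)<i$. Every index distance is then $i-f(i)=1$, so
\[
\reg(X)\le\prod_{i=2}^n 1\cdot\max\{1,|X[i]-X[i-1]|\}=\loc(X).
\]
Taking logarithms gives $n+\log\reg(X)\le n+\log\loc(X)$, hence $\reg\preceq\loc$.

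The only step needing care is that both measures use the same convention, namely values replaced by full ranks in $X$. We assume $X$ has been replaced by its rank sequence, as stated in the ranks and duality paragraph. With this convention, the two $\max\{1,|\cdot|\}$ factors coincide term by term, also in the presence of duplicates, where both factors equal $1$. No obstacle beyond this bookkeeping is expected.
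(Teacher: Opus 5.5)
Your proof is correct and is essentially the paper's proof: you fix the finger function $f(i)=i-1$, which makes every index distance $1$, so $\reg(X)\le\prod_{i=2}^n\max\{1,|X[i]-X[i-1]|\}=\loc(X)$. The paper argues exactly this way in one line, and your remark about the rank convention is harmless bookkeeping.
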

\begin{proof}
    Choosing $f(i)=i-1$ gives $
    \reg(X)\leq
    \prod_{i=2}^n\max\{1,|X[i]-X[i-1]|\}
    =\loc(X)$.
\end{proof}

\begin{lemma}\label{lemm:reg_hist}
    $\reg\preceq\hist$.
\end{lemma}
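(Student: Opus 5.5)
We plan to go through the minimum-spanning-tree characterisation of \cref{theo:reg_mst}. By that theorem, $n+\log\reg(X)\in O(\unimst(X))$. It therefore suffices to exhibit one spanning tree of $G(X)$ whose weight is $O(n+\log\hist(X))$. We will not build a finger function directly: $\hist$ walks through the positions of $X$ in \emph{sorted} order, so its natural structure is a path in sorted order. Such a path is not a finger function, since predecessors in sorted order can have larger index. The symmetry of $\unimst$ absorbs exactly this mismatch.

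As in the rest of the paper, we assume $X[i]$ has been replaced by $\rank_X(X[i])$. Then the values of $X$ form a set $\{1,\dots,m\}$ with no gaps. Write $p_k=\ol{X}^{-1}[k]$ for the index in $X$ of the $k$-th element of the stably sorted order, for $k=1,\dots,n$. The map $k\mapsto p_k$ is a bijection onto $[n]$, so the edges $\{p_{k-1},p_k\}$ for $k=2,\dots,n$ form a Hamiltonian path $P$ in $G(X)$, and in particular a spanning tree.

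Next we bound each edge of $P$. Because $X[p_{k-1}]$ and $X[p_k]$ are consecutive in sorted order and the ranks have no gaps, $|X[p_k]-X[p_{k-1}]|\le 1$. Distinct indices give $|p_k-p_{k-1}|\ge1$. Hence
\[
w(\{p_{k-1},p_k\})\le\log(3+|p_k-p_{k-1}|)\le 2+\log|p_k-p_{k-1}|.
\]
Summing over $k$ gives $\unimst(X)\le w(P)\le 2(n-1)+\log\hist(X)$. Combined with \cref{theo:reg_mst}, this yields $n+\log\reg(X)\in O(n+\log\hist(X))$, that is, $\reg\preceq\hist$.

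The only step needing care is the claim that consecutive elements in sorted order differ by at most $1$ in value. This relies on working with ranks after removing duplicates, where equal values give difference $0$, rather than with raw keys. Once that is in place, the symmetry of the MST characterisation removes the apparent obstacle that sorted-order neighbours need not be valid fingers ($f(i)<i$).
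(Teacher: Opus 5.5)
Your proof is correct, and it is more direct than the paper's. The paper splits the argument into three steps.
\begin{itemize}
\item For permutations, it applies \cref{lemm:reg_loc} to $X^{-1}$ to get $\reg(X^{-1})\le\loc(X^{-1})=\hist(X)$.
\item It transfers this bound to $X$ via the isomorphism $G(X)\cong G(X^{-1})$. This step uses both directions of \cref{theo:reg_mst}.
\item It handles duplicates separately by passing to $\ol{X}$, using $\reg(X)\le\reg(\ol{X})$ and $\hist(X)=\hist(\ol{X})$.
\end{itemize}
Unwinding the isomorphism, the finger function $f(i)=i-1$ on $X^{-1}$ is exactly your sorted-order path. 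So both proofs certify the same spanning tree; you simply build it in $G(X)$ directly.

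This gains you three simplifications.
\begin{itemize}
\item You need only the direction $n+\log\reg(X)\in O(\unimst(X))$ of \cref{theo:reg_mst}.
\item You need no inverse and no isomorphism.
\item You need no separate duplicate case, because gapless ranks make consecutive sorted-order values differ by at most $1$.
\end{itemize}
You do invoke \cref{theo:reg_mst} for lists with duplicates, whereas the paper only uses it for permutations. This is legitimate: the theorem is stated for arbitrary $X$, and its proof never uses distinctness.

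In exchange, the paper's route exhibits the self-duality $n+\log\reg(X)\in\Theta(n+\log\reg(X^{-1}))$ for permutations. This presents $\reg\preceq\hist$ as the dual of $\reg\preceq\loc$, which fits the $\loc$/$\hist$ duality used throughout the hierarchy.
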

\begin{proof}
    First suppose that $X$ contains no duplicates.
    By \cref{lemm:reg_loc},
    $\reg(X^{-1})\leq\loc(X^{-1})=\hist(X)$.
Moreover, recall that $G(X)$ is the complete graph on $[n]$
with edge weights
$w(\{i,j\})=\log(2+|i-j|+|X[i]-X[j]|)$.
Since $X$ is a permutation, $G(X)$ and $G(X^{-1})$
are isomorphic: mapping each vertex $i$ to $X[i]$
exchanges indices and values, preserving every edge weight.
    Thus, by \cref{theo:reg_mst},
    $
    n+\log\reg(X)\in
    \Theta(n+\log\reg(X^{-1}))
    \subseteq O(n+\log\hist(X))$.

    If $X$ contains duplicates, recall that
    $\reg(X)\leq\reg(\ol{X})$ and
    $\hist(X)=\hist(\ol{X})$.
    Applying the previous argument to the permutation
    $\ol{X}$ proves the claim.
\end{proof}

\begin{lemma}\label{lemm:reg_hsms}
    $\reg\preceq\HSMS$.
\end{lemma}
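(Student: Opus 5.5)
The plan is to go through the minimum spanning tree characterisation of \cref{theo:reg_mst}. We must show $n+\log\reg(X)\in O(n(1+\HSMS(X)))$. By \cref{theo:reg_mst} it suffices to exhibit a spanning tree of $G(X)$ of weight $O\bigl(\sum_{t=1}^k n_t(1+\log\frac{n}{n_t})\bigr)$, where $X=X_1\sqcup\dots\sqcup X_k$ is an entropy-minimising partition into monotone subsequences and $n_t=|X_t|$. As in the rest of the paper, we assume the values lie in $[n]$ after rank replacement. Duplicates cause no trouble, since $G(X)$ and $\reg$ only use $|X[i]-X[j]|$.

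\textbf{Step 1 (path inside each part).} Fix a part $X_t$ with positions $p_1<p_2<\dots<p_{n_t}$, and join consecutive positions $p_s,p_{s+1}$ by an edge. Put $g_s=(p_{s+1}-p_s)+|X[p_{s+1}]-X[p_s]|$. The index gaps telescope to at most $n$. Because $X_t$ is monotone (non-decreasing or non-increasing), the value gaps also telescope to at most $n$. Hence $\sum_s g_s\le 2n$. Since $\log(2+x)$ is concave, Jensen gives
\[
\sum_{s=1}^{n_t-1}\log(2+g_s)\;\le\;(n_t-1)\log\Bigl(2+\tfrac{2n}{n_t-1}\Bigr)\;\in\;O\Bigl(n_t\bigl(1+\log\tfrac{n}{n_t}\bigr)\Bigr).
\]

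\textbf{Step 2 (connecting the parts).} The $k$ paths form a spanning forest with $k$ components. Connect the component of every part not containing vertex $1$ to vertex $1$ by a single edge. Each such edge has weight at most $\log(2+2n)=O(1+\log n)$. The function $x\mapsto x(1+\log\frac{n}{x})$ is non-decreasing on $[1,n]$ (its derivative is $\log\frac nx+1-\log e\cdot 1\ge 0$ up to the base convention, and it is in any case at least $1+\log n$ at $x=1$ and grows afterwards). Hence $1+\log n\le n_t(1+\log\frac{n}{n_t})$ for every part. The connecting edge charged to part $t$ is therefore absorbed by that part's own entropy term.

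\textbf{Step 3 (conclusion).} Summing Steps 1 and 2 gives a spanning tree of weight
\[
O\Bigl(\sum_t n_t\bigl(1+\log\tfrac{n}{n_t}\bigr)\Bigr)=O\bigl(n(1+\HSMS(X))\bigr).
\]
The minimum spanning tree weight $\unimst(X)$ is at most this, and \cref{theo:reg_mst} then yields $n+\log\reg(X)\in O(n(1+\HSMS(X)))$, i.e.\ $\reg\preceq\HSMS$.

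The main obstacle is the Jensen step together with the telescoping of value gaps. This is exactly where monotonicity of each part is used; for non-monotone parts the value gaps could sum to $\Theta(n_t n)$. Everything else is bookkeeping, including the check that the monotonicity claim for $x(1+\log\frac nx)$ holds in the chosen log base. If that check is awkward, a fallback is to bound the connecting edges by $\log(2+2n)\le 2+\log n$ and observe directly that $n_t\log\frac n{n_t}+n_t\ge\log n$ for integers $1\le n_t\le n$.
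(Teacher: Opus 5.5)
Your proof is correct and is essentially the paper's argument. If you attach each part at its first position, your tree is exactly the paper's finger function ($f(a_j)=a_{j-1}$, $f(a_1)=1$), with the same telescoping, Jensen and $\log n\le t(1+\log\frac nt)$ steps. The only difference is that the paper concludes via the easy bound $\log\reg(X)\le 2w(f)$, whereas you go through $\unimst$ and the harder direction of \cref{theo:reg_mst}; that is valid but heavier than needed.

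One small correction: in base $2$, $x(1+\log\frac nx)$ is not non-decreasing on $[1,n]$, since its derivative $1-\log e+\log\frac nx$ is negative for $x>2n/e$. It is, however, concave, so its minimum on $[1,n]$ is $\min\{1+\log n,\,n\}=1+\log n$, which is exactly what your fallback needs.
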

\begin{proof}
    Let $X=X_1\sqcup\dots\sqcup X_k$ be any partition
    into monotone subsequences.
    For each $X_i=(X[a_1],\dots,X[a_t])$, where
    $a_1<\dots<a_t$, set $f(a_j)=a_{j-1}$ for $j>1$,
    and $f(a_1)=1$ if $a_1>1$.
    This defines a finger function.
    Recall the edge weights
    $w(\{i,j\})=\log(2+|i-j|+|X[i]-X[j]|)$
    from \cref{theo:reg_mst}.
    Since $X_i$ is monotone, and all values are ranks in
    $[n]$, we have
    \begin{align*}
    &\sum_{j=2}^t
    \big((a_j-a_{j-1})+
    |X[a_j]-X[a_{j-1}]|\big)\leq 2n. \text{    Thus, by concavity of the logarithm, for $t>1$,}
    \\
    &\sum_{j=2}^t w(\{a_j,a_{j-1}\})
    \leq (t-1)\log\left(2+\frac{2n}{t-1}\right) \in O\left(t\left(1+\log\frac nt\right)\right).
    \end{align*}
    The possible additional edge $\{a_1,1\}$ has
    weight $O(\log n)$.
    Since $\log n\leq t(1+\log(n/t))$, the total
    weight contributed by $X_i$ is
    $O(t(1+\log(n/t)))$, also when $t=1$.
    Summing over all subsequences gives
    $
    n+\log\reg(X)
    \in O\left(n+\sum_{i=1}^k
    |X_i|\log\frac{n}{|X_i|}\right)
=O\left(n(1+H(X_1,\dots,X_k))\right)$. 
    Taking the minimum over all monotone partitions
    shows $\reg\preceq\HSMS$.
\end{proof}

Finally, we show in Appendix~\ref{sub:regthm} that any $\reg$-adaptive algorithm is also $\Reg$-adaptive, if the input has no duplicates.:

\begin{restatable}{theorem}{regthm} \label{theo:Reg_reg}
    If $X$ is a list without duplicates, then $\log \reg \in O(n + \log \Reg)$.
\end{restatable}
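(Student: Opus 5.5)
The plan is to go through the minimum spanning tree characterisation of \cref{theo:reg_mst}. It then suffices to exhibit a spanning tree of $G(X)$ of weight $O(n+\log\Reg(X))$, where $w(\{i,j\})=\log(2+|i-j|+|X[i]-X[j]|)$. Since $X$ is a permutation, we may assume $X[i]\in[n]$ are distinct ranks.

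Fix an optimal finger function $g$ for $\Reg$, and write $d_i=i-g(i)$ and $r_i=\max\{1,|\rank_{X[1..i]}(X[i])-\rank_{X[1..i]}(X[g(i)])|\}$, so that $\log\Reg(X)=\sum_i(\log d_i+\log r_i)$. The edge $\{i,g(i)\}$ itself is too expensive in general. Its full-rank length is $r_i$ plus the number $s_i$ of elements with index larger than $i$ whose value lies strictly between $X[i]$ and $X[g(i)]$, and $s_i$ is not controlled by $\Reg$.

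I will therefore not use the edge $\{i,g(i)\}$ directly. Instead I connect $i$ to the set $P_i$ of values that lie between $X[i]$ and $X[g(i)]$ and are already present at time $i$ (including $X[g(i)]$). There are $r_i$ such values, and all of them have index less than $i$. The candidate tree is built in the reverse order of insertion. Process $i=n,n-1,\dots,2$. When $i$ is processed, every later element has already been attached to something with smaller index. I then attach $i$ to the element of $P_i$ that is adjacent to $X[i]$ among $X[1..i]$ in value, that is, its prefix predecessor or successor on the side of $X[g(i)]$. By construction this has prefix-rank distance $1$. Its full-rank distance equals one plus the number of later elements that land in that single prefix gap, and its index distance is at most $i-\min P_i$.

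So two quantities must be controlled. First, the index distance $i-\min P_i$ must be bounded by something like $d_i$ times a factor polynomial in $r_i$. Second, the number of later elements falling into each prefix gap must be charged to those later elements' own $\Reg$ costs.

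For the second quantity I would use the amortisation already seen in the proof of \cref{lemm:reg_hsms}. Each gap between prefix-consecutive values is split by later insertions, and the elements landing in a gap form a set whose total spread is bounded. Concavity of $\log$ then bounds the total contribution by $O(n)+\sum_i\log r_i$. Concretely, I charge $\log(1+\text{later elements in the gap next to } X[i])$ to those later elements. A later element $j$ landing in the gap is charged at most once per level of a recursive gap-splitting, so it suffices to show that the charge it receives is $O(1+\log r_j+\log d_j)$.

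For the first quantity, the index distance, I would use that $X[g(i)]$ lies within $d_i$ steps before $i$. Every element strictly between $X[i]$ and $X[g(i)]$ in the prefix that is older than $g(i)$ can then be bypassed by routing through $g(i)$ itself. Doing so replaces one edge by a path of at most two edges, with weights $O(\log d_i+\log r_i)$ plus the gap term.

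The step I expect to be the main obstacle is the charging of later insertions into a gap, because a single element might be charged by many earlier elements. The first thing I would try is a potential argument. Give each prefix gap of length $\ell$ (in final ranks) a potential $\log \ell$. Inserting an element that splits the gap into pieces of lengths $a$ and $b$ changes the potential by $\log a+\log b-\log(a+b)\ge -\log(a+b)+\log\max\{a,b\}\ge -1-\log\min\{a,b\}$. This telescopes to $O(n)$ plus terms dominated by the $\log r_j$. If this fails, I would fall back on the star-forest clustering from the proof of \cref{theo:reg_mst}, applied to the edges $\{i,g(i)\}$ weighted by prefix-rank distance.
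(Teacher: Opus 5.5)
\textbf{There is a genuine gap.} The tree you build can be far heavier than $n+\log\Reg(X)$, so no charging scheme can rescue it.

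Take $n$ even and $X=(1,n,2,n-1,3,n-2,\dots)$, i.e.\ $X[2k-1]=k$ and $X[2k]=n+1-k$. For every $i\geq2$, $X[i]$ and $X[i-1]$ are adjacent in value among $X[1..i]$. So the optimal $\Reg$-finger is forced to be $g(i)=i-1$ with $d_i=r_i=1$, and $\log\Reg(X)=0$. Then $P_i=\{X[i-1]\}$, so every variant of your rule attaches $i$ to $i-1$, at full-rank distance $n+1-i$. The resulting tree has weight $\sum_{i=2}^{n}\log(4+n-i)=\Theta(n\log n)$. In contrast, the finger function $f(i)=i-2$ (value distance $1$) gives $\log\reg(X)\leq n+\log n$.

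The good finger lies on the \emph{opposite} side of $X[i]$ from $X[g(i)]$, outside $P_i$. So any scheme that only connects $i$ into $P_i$ fails. Nor can the later elements filling the gaps pay: they all have $d_j=r_j=1$, so they cannot absorb a total charge of $\Theta(n\log n)$.

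Your other tools do not close this gap:
\begin{itemize}
\item The potential argument does not help. Telescoping $\Phi=\sum\log(\text{gap length})$ gives $\log a+\log b-\log(a+b)\geq\log\min\{a,b\}-1$, so it only bounds $\sum\log\min\{a,b\}$ over all splits. Your rule pays for the side facing $X[g(i)]$, which in this example is always the larger side. The inequality you actually wrote, with $-1-\log\min\{a,b\}$, bounds nothing about the cost.
\item Rerouting through $g(i)$ reintroduces the edge $\{i,g(i)\}$, whose full-rank length $r_i+s_i$ you yourself identified as uncontrolled.
\item The star-forest fallback only controls distances in the metric it is run on, and prefix-rank distances do not bound full-rank ones.
\end{itemize}

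\textbf{The missing idea} is a way to find fingers outside $P_i$ (in the example, $g(g(i))$), together with a way to pay when none exists. The paper does not build an explicit tree. It proves a threshold inequality: with $r(i)=\min_{j<i}\max\{i-j,|X[i]-X[j]|\}$ and $R(i)$ its prefix-rank analogue,
\[
|\{i:r(i)\geq t^3\}|\leq 5\,|\{i:R(i)\geq t\}|+O(n/t)
\]
for every $t$. Summing over $t=2^j$ gives the theorem; the cubic loss costs only a constant factor after taking logarithms.

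The proof of this inequality works as follows.
\begin{itemize}
\item It follows the whole chain of $\Reg$-fingers $F(i),F(F(i)),\dots$ for up to $t$ steps.
\item It partitions values into buckets of size $t^3$ and indices into epochs of size $t^2$.
\item It uses a barrier argument: a bucket that already held $2t^2$ elements before the previous epoch cannot be jumped over by a chain whose steps are short in both index and prefix rank.
\end{itemize}
Hence if $i$ has no finger within $t^3$ in both coordinates, the chain does one of three things. It ends at one of the $O(n/t)$ early items, or it ends at an item with $R\geq t$, or it places at least $t/3$ items in a single bucket-epoch cell. Each such endpoint receives only $O(1)$ charges.
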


\section{SproutSort in the BST model}
\label{sub:bst_def}

We study the BST model where the input  $X$ is a permutation of $[n]$ with $n$ a power of two. Note that we can transform any input $X'$ with distinct keys into $X$ in $\Reg$-adaptive time by sorting $X'$ using ~\cite{BadoiuColeDemaineIacono2007} and padding.
We show a BST-access algorithm that runs in $\reg$-adaptive time.
We extend the algorithm to allow $X$ to have duplicate keys in \cref{sub:duplicates}.

\subparagraph{The offline BST model.}
We study the cost of accessing $X$ in the offline BST model~\cite{SleatorTarjan1985,DemaineEtAl2009}.
This model maintains a \emph{single} BST containing exactly the distinct values occurring in $X$.
Each value is stored at one node; at every node, all values in the left subtree are smaller and all values in the right subtree are larger.
The algorithm knows $X$ in advance and must access $X[1],X[2],\dots,X[n]$ in order.
Every access starts with a pointer at the root and must visit the node storing the requested value.
At unit cost, the algorithm may move the pointer to its parent or either child, or rotate the current node with its parent.
The tree persists between accesses.
The total cost is $n$ plus the number of pointer moves and rotations.

\subparagraph{BST merge sort.}
A \emph{BST merge sort}~\cite{BlellochDobson2023} operates on \emph{multiple} BSTs and builds a BST that contains the input values.
A size-one input yields a one-node BST.
Otherwise, the algorithm picks $k\in[n-1]$, recursively computes BSTs $T_1$ and $T_2$ for $X[1..k]$ and $X[(k+1)..n]$, and combines them by a \emph{BST merge}:
this operation intuitively merges two trees $T_1$ and $T_2$ by selecting a connected set of nodes $S_1$ and $S_2$ (that each contain the root) and arbitrarily re-arranging their parent and child pointers such that the result is a BST that contains the elements of $T_1$ and $T_2$.
The cost of this merge is then $|S_1| + |S_2|$.
Formally, for $T_1$ and $T_2$, mark a connected set of nodes $S_1$ and $S_2$ containing the root
and the \emph{block boundaries}: nodes in $T_1$ and $T_2$ whose predecessor or successor in $X$ lies in the other $T_i$. 
The merge returns a BST $T$ on exactly the nodes of $T_1$ and $T_2$, minus duplicates, such that $S:=S_1\cup S_2$ is connected and contains the root of $T$.
The total cost is the sum of all merge costs. Blelloch and Dobson~\cite{BlellochDobson2023} show a strong link between merge sort and the offline BST model:

\begin{theorem}[Theorem~6 in~\cite{BlellochDobson2023}]
\label{theo:bst_merge_sort}
A BST merge sort that sorts a permutation $X$ with cost $f(X)$ induces an offline BST algorithm that accesses $X[1],X[2],\dots,X[n]$ with cost $O(f(X))$.
\end{theorem}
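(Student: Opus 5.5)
The statement is Theorem~6 of~\cite{BlellochDobson2023}, which we use as a black box. If I had to re-derive it, I would go through the geometric view of the offline BST model~\cite{DemaineEtAl2009}. Identify an access sequence with the point set $P_X=\{(i,X[i])\}$. A point set $P$ is \emph{arborally satisfied} if every axis-parallel rectangle spanned by two points of $P$ not on a common horizontal or vertical line contains a third point of $P$. By~\cite{DemaineEtAl2009}, every arborally satisfied superset $P\supseteq P_X$ can be turned into an offline BST execution of cost $O(|P|)$. The plan is therefore to convert the recursion tree of the BST merge sort into an arborally satisfied superset of $P_X$ of size $O(f(X))$.

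I would build this superset bottom-up, with one invariant per recursion node. For a node handling $X[a..b]$ with output tree $T$, I maintain a point set $P_{a,b}$ with time coordinates in $[a,b]$ and key coordinates exactly the keys of $T$. The invariant is that $P_{a,b}$ contains $P_{X[a..b]}$, is arborally satisfied, and that at the two boundary times $a$ and $b$ the keys present form a connected subtree of $T$ containing the root. For a leaf, the single point suffices.

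For a merge of $T_1$ (times $[a,k]$) and $T_2$ (times $[k+1,b]$) with marked sets $S_1,S_2$, I add the keys of $S=S_1\cup S_2$ as points at times $k$ and $k+1$, and re-add them at the outer boundary times $a$ and $b$ as needed to restore the boundary invariant for $T$. This adds $O(|S_1|+|S_2|)$ points per merge, which is exactly the merge cost, so summing gives $O(f(X))$.

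The main obstacle is checking arboral satisfaction for rectangles that cross the time boundary between $k$ and $k+1$, i.e.\ pairs $p\in P_{a,k}$ and $q\in P_{k+1,b}$. I would argue via the interval structure of the keys. Every key strictly between $p$ and $q$ in value order lies either in a run (block) belonging to the same tree as its neighbours, or at a block boundary. Block boundaries are in $S$ by definition, and hence are touched at times $k$ and $k+1$. So if any such key lies between $p$ and $q$ in value, the rectangle already contains a point on row $k$ or $k+1$. Otherwise $p$ and $q$ lie in adjacent positions of the sorted order, and the ancestor structure in $T_1$ and $T_2$ must be used: since $S_i$ is connected and contains the root of $T_i$, the path from $p$ toward the root within $T_1$ reaches $S_1$, so there is a point at time $k$ whose key lies between $p$ and $q$ or equals one of them. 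Either case closes the rectangle.

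Rectangles entirely within one side are handled by the inductive hypothesis. The remaining delicate point is that re-adding boundary rows at higher recursion levels must not create new empty rectangles. I would handle this by choosing the rows added at each level to be connected subtrees containing the root, mirroring the merge condition that $S$ is connected and contains the root of $T$.
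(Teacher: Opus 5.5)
Citing Theorem~6 of \cite{BlellochDobson2023} is exactly what the paper does. Its only addition is a remark, in the section on duplicates, that the construction there produces $O(1)$ geometric points per marked node and stays arborally satisfied because unmarked subtrees keep their ``structure and relative priorities''. As a proof by citation, your proposal therefore matches the paper.

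Your sketched re-derivation, however, has a genuine gap at the point you flag as delicate, and already one step earlier. The claim that rectangles entirely within one side are handled by induction is false as stated, because the points you add on rows $a$ and $k$ lie on the left side. A new point $(k,y)$ with $y\in S_1$ and an old point $(t,x)\in P_{a,k}$ with $t<k$ span a rectangle whose two keys both lie in $T_1$. Block boundaries therefore give nothing, and your invariant (boundary rows are root-containing connected subtrees of $T$) does not help either.

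Concretely, write points as (time, key) and take $P=\{(1,1),(1,2),(1,3),(2,2),(2,3),(3,3)\}$. It is arborally satisfied and contains the access points of $X=(1,2,3)$. Let $T$ be the BST with root $3$, left child $1$, and $2$ the right child of $1$. Both boundary rows $\{1,2,3\}$ and $\{3\}$ are root-containing connected subtrees of $T$, and so is $R=\{1,3\}$. Yet adding $(3,1)$ leaves the rectangle spanned by $(2,2)$ and $(3,1)$ empty. So connectivity in $T$ alone cannot carry the induction.

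The missing idea is an invariant linking $T$ to the point set: $T$ is heap-ordered by last-touch time in $P_{a,b}$ (no node is last touched earlier than its children) and by first-touch time (no node is first touched later than its children). In the example above, node $1$ is last touched before its child $2$, which this invariant forbids.

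With this invariant, appending a root-containing connected set $R$ of $T$ to the last row is safe. Suppose the rectangle between a new point $(b,y)$ and the last touch $(\ell(x),x)$ of a key $x\notin R$ were empty. Then $y$ and every key strictly between $x$ and $y$ would be last touched strictly before $\ell(x)$. The lowest common ancestor of $x$ and $y$ has its key between $x$ and $y$ inclusive and, by heap order, is last touched no earlier than $\ell(x)$. So it can only be $x$, which forces $x\in R$, a contradiction. The first row is symmetric.

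The invariant is preserved if you add \emph{all} of $S$, not just what is ``needed'', to rows $a,k,k+1,b$. The keys of $S$ then have first touch $a$ and last touch $b$ and form the top of $T$, while every unmarked subtree keeps its shape and its touch times. Adding all of $S$ to rows $a$ and $b$ is also what closes rectangles between, say, a new row-$a$ point whose key lies in $S_2$ and an old point of $P_{a,k}$: the $T_1$-side block boundary on row $a$ supplies the third point.

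Lastly, in your adjacent-keys case, $x_p$ and $x_q$ are themselves block boundaries, so $(k+1,x_p)$ closes the rectangle directly. The ancestor-path argument is unnecessary, and an ancestor of $x_p$ need not have a key between $x_p$ and $x_q$.
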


Their same proof directly applies to input lists with duplicates! Thus, to obtain good access bounds in the BST model, we may focus on designing a BST merge sort algorithm. \vspace{-0.2cm}

\subsection{SproutSort as an algorithm}
We first introduce SproutSort as a classical algorithm.
In Section~\ref{sub:bst} we change it into a BST merge sort.
Classical merge sort (\cref{algo:merge_sort}) recursively sorts the two halves of $X$ and merges them in linear time. 
On $X$, 
the merge-sort recursion tree $R$ is then a complete binary tree with $n$ leaves. 
Our core idea is to store the sorted sublists corresponding to recursive calls as a forest of \emph{sprout trees}, allowing some merges to take sublinear time.

\begin{algorithm}[H]
  \begin{algorithmic}
    \Function{MergeSort}{$X$}
        \If{$|X| = 1$} \Return $X$
        \EndIf
        \State $n \gets |X|$
        \State $X_\ell \gets \textsc{MergeSort}(X[1], \dots, X[\frac{n}{2}])$
        \State $X_r \gets \textsc{MergeSort}(X[\frac{n}{2}+1], \dots, X[n])$
        \State \Return $\operatorname{Merge}(X_\ell, X_r)$
    \EndFunction
  \end{algorithmic}
  \caption{Recursive Merge Sort } 
  \label{algo:merge_sort} 
\end{algorithm}
\vspace{-0.3cm}

\subparagraph{Sprout forests.}
Let $T$ be a complete binary tree with leaves  $1, \dots, n$, and associate the element $X[i]$ with the leaf $X[i]$.
The \emph{height} of a node $u \in T$ is the path length from $u$ to a leaf in its subtree. 
For an integer $k \ge 0$ and a subset $Y \subseteq X$, the \emph{sprout forest} $T_k[Y]$ is the subgraph of $T$ induced by all nodes $u \in T$ on the root-to-leaf path of some leaf in $Y$ that have height at most $k$.
In particular, the set of leaves of $T_k[Y]$ is precisely $Y$.
See \cref{fig:sprout_trees}.
To represent $T_k[Y]$, for a fixed pair $(k, Y)$, store a root list with all roots of $T_k[Y]$ in-order.
Store  each tree in $T_k[Y]$ as a pointer structure where vertices in $T_k[Y]$ are nodes with two pointers to their left and right children. 
Each node $u \in T_k[Y]$ stores the index of the smallest and largest leaf in its subtree,
plus a pointer to the node corresponding to $u$ in $T$.

\begin{figure}[b]
    \includegraphics[]{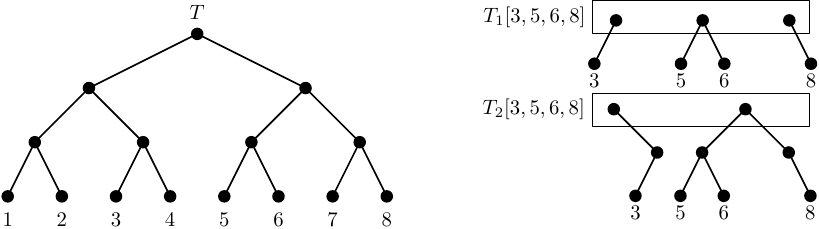}
    \caption{A complete binary tree on $X$ and a sprout forest before and after \textsc{grow}.
}
    \label{fig:sprout_trees}
\end{figure}

 Our sprout trees support two operations.
 \textsc{Grow}$(T_k[Y])$ for $k \geq 0$ and $Y \subseteq X$ computes $T_{k+1}[Y]$.
 \textsc{Merge}$(T_k[Y_1], T_k[Y_2])$ for $k \geq 0$ and $Y_1, Y_2 \subseteq X$ computes $T_k[Y_1 \cup Y_2]$ (see Figure~\ref{fig:grow_and_merge}). Both operations consume their input forests.
We analyse sequences of these operations with the accounting method~\cite{CLRS2009}.
We assign one coin to each node which can pay for $O(1)$ work.
The amortized running time of \textsc{Merge} is $O(m)$ where $m$ is the number of trees in the output.

\begin{figure}[b]
    \includegraphics[]{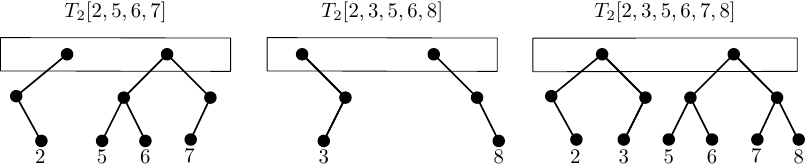}
    \caption{Two sprout forests and their merge.}
    \label{fig:grow_and_merge}
\end{figure}

\begin{theorem}\label{theo:grow_merge}
Consider any sequence of \textsc{Grow} and \textsc{Merge} operations, with $k \leq \log n$, starting with a collection of sprout forests that have $n$ nodes in total. Let $m_i$ be the size of the root-list of the forest produced by operation $i$. The total running time is $
    O\left(n+\sum_i m_i\right)$.
Moreover, \textsc{Merge} never increases the number of nodes.
\end{theorem}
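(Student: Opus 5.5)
We first fix concrete implementations of both operations on the stated representation (root list plus pointer trees whose nodes know their leaf interval and their counterpart in $T$), and then analyse them with the accounting method: every node of every forest holds one coin, and the initial $n$ coins account for the $O(n)$ term.

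\textsc{Grow}$(T_k[Y])$ scans the root list from left to right. Each root $u$ of height $k$ gets its parent $p$ in $T$ through the stored pointer into $T$. Two consecutive roots in the list share a parent exactly when they are the left and right child of the same $p$. This can be checked in $O(1)$ from the stored leaf intervals, or by comparing the $T$-pointers of their parents. For each distinct parent we create one new node, attach one or two children, and set its leaf interval as the union of theirs. Roots of height less than $k$ cannot exist unless $k$ exceeds the height of their subtree, which only happens for trees that are already full (i.e.\ at the top of $T$). Since $k\le\log n$, such roots are simply copied. The scan costs $O(m_{\text{in}})$, and the new root list has $m_{\text{out}}\ge m_{\text{in}}/2$ entries. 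We charge $O(m_{\text{out}})$ to the operation, which is allowed because the bound includes $\sum_i m_i$, and we give each newly created node its coin from the same budget. So \textsc{Grow} costs $O(m_i)$, including the coins for at most $m_i$ new nodes.

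\textsc{Merge}$(T_k[Y_1],T_k[Y_2])$ merges the two root lists by leaf interval. Roots at height $k$ correspond to aligned dyadic blocks, so two roots either have disjoint intervals or equal intervals, i.e.\ they are the same node of $T$. For disjoint roots we splice the tree in unchanged. For a pair of roots that correspond to the same node of $T$, we recursively merge their trees: we keep one node, and for each child side we either take the single existing child or recurse on both. Every recursive call on a coinciding pair destroys one of the two identical nodes, and its coin pays for the $O(1)$ work of that call. The linear scan of the root lists costs $O(m_1+m_2)$. The main obstacle is that this must be $O(m_i)$, where $m_i$ is the size of the output root list. That holds because every output root is either an input root or the fusion of exactly two input roots, so $m_1+m_2\le 2m_i$. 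Hence the amortised cost of \textsc{Merge} is $O(m_i)$. Moreover, nodes are only identified with each other and never created, so \textsc{Merge} never increases the number of nodes. This proves the final sentence of the statement.

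Summing over all operations, the total time is at most the initial $n$ coins plus $O(m_i)$ per operation, where the coins for nodes created by \textsc{Grow} were already paid inside its $O(m_i)$ charge. Every coin is spent at most once, namely when its node is destroyed in a merge, so the total is $O(n+\sum_i m_i)$. The one point needing care is that the recursive merge of coinciding trees never does unpaid work. A call is only made on a pair of coinciding nodes, and each such call destroys a node. Children that appear on only one side are attached in $O(1)$ and charged to the parent's call.
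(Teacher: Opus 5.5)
Your proof is correct and follows the paper's argument. Both use one coin per node. \textsc{Grow} pays $O(m_i)$ for its root-list scan and for the coins of its at most $m_i$ new roots. \textsc{Merge} identifies colliding roots in the two in-order root lists in $O(m_i)$ time, since $m_1+m_2\le 2m_i$. Each recursive \textsc{TreeMerge} step on a coinciding pair is paid by the coin of the duplicate node it destroys, so no node is ever added.

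Your aside about roots of height below $k$ is unnecessary but harmless. For $k\le\log n$, every root of $T_k[Y]$ has height exactly $k$. The only exception is the degenerate no-op case, where the forest is already the single tree rooted at the root of $T$.
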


\begin{proof}
Initially, we create one coin for each existing node.
Whenever we remove a node, we spend its coin to do $O(1)$ work.
Additionally, the $i$-th operation can take $O(m_i)$ time.

A \textsc{Grow}($T_k(Y)$) operation scans all roots in  $T_k(Y)$.
For each root $r$ it encounters, it either creates a new root with $r$ as its singular child or a new root with two children, one of which is $r$.
Thus, \textsc{Grow} inspects at most $3 m_i$ nodes, and, since the new root list follows the in-order traversal of the old root list, \textsc{Grow} takes $O(m_i)$ time.
\textsc{Grow} also creates $m_i$ coins.

For $\textsc{Merge}(T_k(Y_1), T_k(Y_2))$, observe that each root in $T_k(Y)$ is at least a root in either $T_k(Y_1)$ or $T_k(Y_2)$, and at most a root in both.
Thus, the total number of trees in $T_k(Y_1)$ and $T_k(Y_2)$ is at most $2 m_i$, and since root-lists of $T_k(Y_1)$ and $T_k(Y_2)$ are in-order, we can identify these roots in $O(m_i)$ time. 
Having identified colliding roots, we invoke our \textsc{TreeMerge} procedure on all trees that share a root (see Algorithm~\ref{algo:trie_merge}).
Whenever this recursive merge replaces two nodes by one, we transfer one node's coins to the replacement and spend the other's on the $O(1)$ work needed for this merge. \textsc{TreeMerge} does not create any coins.

In total, we create $O(n + \sum_i m_i)$ coins and spend $O(m_i)$ time beyond the coin budget.
\end{proof}

\begin{algorithm}[t]
  \begin{algorithmic}
    \Function{TreeMerge}{$T_1, T_2$}
        \If{$T_1$ is nil}
            \Return $T_2$
        \EndIf
        \If{$T_2$ is nil}
            \Return $T_1$
        \EndIf
        \State \Return a new node with children $\textsc{TreeMerge}(T_1.\ell, T_2.\ell)$ and $\textsc{TreeMerge}(T_1.r, T_2.r)$
    \EndFunction
  \end{algorithmic}
  \caption{Recursively merging sprout trees.} \label{algo:trie_merge}
\end{algorithm}

\vspace{-1.5em}
\subparagraph{SproutSort}
We now design an algorithm \textsc{SproutSort} that uses \textsc{Grow} and \textsc{Merge} to compute $T_k[X]$ where $|X|=2^k$.
Similar to recursive merge sort, the algorithm splits $X=X_\ell\sqcup X_r$ into equal halves and recursively computes $T_{k-1}[X_\ell]$ and $T_{k-1}[X_r]$.
It then returns $T_k[X]=\textsc{Grow}(\textsc{Merge}(T_{k-1}[X_\ell],T_{k-1}[X_r]))$.
See \cref{algo:sprout_sort}.

\begin{definition}
Recall that a \emph{finger function} is a map $f:[n]\setminus\{1\}\to[n]$ satisfying $f(i)<i$.
For $i\geq 2$, we define the graph distances:
\begin{itemize}
\item $\texttt{dist}_R(i,f(i))$, between leaves $i$ and $f(i)$ in the complete binary tree $R$ on $[n]$.
\item $\texttt{dist}_T(i,f(i))$, between leaves $X[i]$ and $X[f(i)]$ in $T$.
\end{itemize}
\end{definition}

\noindent
For example, for $n\geq4$, $\texttt{dist}_R(\frac{n}{2}, \frac{n}{2}+1) = 2\log n$ and $\texttt{dist}_R(\frac{n}{2} + 1, \frac{n}{2}+2) = 2$. 

{
\begin{algorithm}[t]
\begin{algorithmic}
\Function{SproutSort}{$X$}
\If{$|X|=1$}
\Return $T_0[X]$
\EndIf
\State $m\gets |X|,\ k\gets\log_2 m$
\State $X_\ell\gets X[1],\dots,X[m/2]$, $X_r\gets X[m/2+1],\dots,X[m]$
\State $T_{k-1}[X_\ell]\gets\textsc{SproutSort}(X_\ell)$
\State $T_{k-1}[X_r]\gets\textsc{SproutSort}(X_r)$
\State \Return $\textsc{Grow}(\textsc{Merge}(T_{k-1}[X_\ell],T_{k-1}[X_r]))$
\EndFunction
\end{algorithmic}
\caption{Recursive SproutSort.}\label{algo:sprout_sort}
\end{algorithm}
}

\begin{lemma}\label{lemm:sprout_sort}
Let $n\geq 2$ and let $f$ be any finger function.
The running time of \textsc{SproutSort} is
$O\bigl(\sum_{i=2}^{n}(1+\texttt{dist}_R(i,f(i))+\texttt{dist}_T(i,f(i)))\bigr)$.
\end{lemma}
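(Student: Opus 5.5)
The plan is to reduce everything to \cref{theo:grow_merge} and then bound the root-list sizes with a charging argument driven by the finger function $f$. \textsc{SproutSort} starts from $n$ one-node forests $T_0[\{X[i]\}]$ and performs only \textsc{Grow} and \textsc{Merge} operations with $k\le\log n$. Hence, by \cref{theo:grow_merge}, its running time is $O(n+\sum_j m_j)$, where $m_j$ is the size of the root list produced by the $j$-th operation. The recursion node $v$ of $R$ at height $k$ has an index interval $I_v$ of size $2^k$ and corresponding value set $Y_v=\{X[i]: i\in I_v\}$. It produces two forests: $\textsc{Merge}$ outputs $T_{k-1}[Y_v]$, and \textsc{Grow} outputs $T_k[Y_v]$. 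So it suffices to bound, for every $v$ and for $h\in\{k-1,k\}$, the number of roots of $T_h[Y_v]$.

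The key claim concerns the roots of $T_h[Y_v]$. These roots are exactly the distinct height-$h$ ancestors in $T$ of the leaves $X[i]$ with $i\in I_v$. Call $i\in I_v$ \emph{charged at $(v,h)$} if at least one of the following holds:
\begin{itemize}
\item $i=1$;
\item $f(i)\notin I_v$;
\item $X[i]$ and $X[f(i)]$ have different height-$h$ ancestors in $T$.
\end{itemize}
I claim that the number of roots of $T_h[Y_v]$ is at most the number of charged indices. To see this, consider the forest on $I_v$ with edges $\{i,f(i)\}$ for all uncharged $i$. Since $f(i)<i$, this is a forest in which every component contains exactly one charged index, namely its minimum. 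Along every edge the height-$h$ ancestor is preserved, so each component maps into a single root. Therefore the number of roots is at most the number of components, which equals the number of charged indices. Verifying this cleanly, in particular that every component's minimum is charged, is the step I expect to need the most care; it follows because the minimum $i$ of a component either equals $1$ or has $f(i)$ outside the component, hence is charged.

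Next I sum charges over all pairs $(v,h)$. Fix $i\ge2$. The index $i$ lies in exactly one recursion node per height, and each node contributes two values of $h$.
\begin{itemize}
\item The condition $f(i)\notin I_v$ holds exactly for the ancestors $v$ of leaf $i$ strictly below $\LCA_R(i,f(i))$. There are $\texttt{dist}_R(i,f(i))/2$ of these.
\item The condition that $X[i]$ and $X[f(i)]$ have different height-$h$ ancestors in $T$ holds exactly when $h$ is smaller than the height of their lowest common ancestor in $T$. There are $\texttt{dist}_T(i,f(i))/2$ such values of $h$, and each is counted by at most two recursion nodes, namely those at heights $h$ and $h+1$.
\end{itemize}
So index $i$ receives $O(\texttt{dist}_R(i,f(i))+\texttt{dist}_T(i,f(i)))$ charges in total. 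Index $1$ is charged $O(\log n)$ times in total, which is $O(n)$.

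Adding the $O(n)$ term from \cref{theo:grow_merge} gives a running time of $O\bigl(\sum_{i=2}^n(1+\texttt{dist}_R(i,f(i))+\texttt{dist}_T(i,f(i)))\bigr)$, as claimed.
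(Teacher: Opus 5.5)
Your proposal is correct and takes essentially the paper's route. You reduce to \cref{theo:grow_merge}, bound the number of sprout trees by the number of indices that are $1$ or whose finger lies outside their tree, and count for each $i$ the heights at which $i$ and $f(i)$ are separated in $R$ or in $T$. The only differences are cosmetic: the paper charges each node created by \textsc{Grow} to the smallest such index in its subtree and bounds the root lists through these nodes using the factor-$2$ shrinkage of \textsc{Grow} and \textsc{Merge}, whereas you bound the \textsc{Merge} and \textsc{Grow} output root lists directly; your component count amounts to noting that each tree's minimum index is charged.
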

\begin{proof}
Let $g$ be the total number of nodes created by \textsc{Grow}.
Observe that \textsc{Grow} creates one node per tree in its output forest, \textsc{Merge} does not increase the number of nodes,
and both \textsc{Grow} and \textsc{Merge} reduce the size of the root list by at most a factor $2$.
Hence, an invocation of $\textsc{Grow}(\textsc{Merge}(\dots))$ in \textsc{SproutSort} that creates $g_i$ nodes,
involves root lists of size at most $4 g_i$.
Thus, by \cref{theo:grow_merge}, the total running time is $O(n + \sum_i g_i) = O(n + g)$.

Whenever \textsc{Grow} creates a node $u$, we denote by $\subtree(u)$ the resulting sprout tree rooted at $u$. We charge the creation of $u$ to the smallest $i\geq 2$ such that $X[i]\in\subtree(u)$ but $X[f(i)]\notin\subtree(u)$.
If no such $i$ exists, we charge $u$ to item $1$.
This happens only if $X[1]\in\subtree(u)$, which itself happens for at most one node $u$ per level of recursive depth in our algorithm, for a total charge of $O(\log n)$ to item $1$.

For all other $i \geq 2$, we bound the charge they receive.
At every recursion height $k\geq \texttt{dist}_R(i,f(i))/2$, items $i$ and $f(i)$ belong to the same recursive input $X'$.
Thus, they belong to the same sprout forest $T_k(X')$. 
If, furthermore $k\geq  \texttt{dist}_T(i,f(i))/2$ then they also lie in the same sprout tree of $T_k(X')$. 
Item $i$ participates in only one \textsc{Grow} operation per recursive height, and stops being charged once items $i$ and $f(i)$ share a sprout tree.
Item $i$ therefore receives at most  $O(1 + \texttt{dist}_R(i,f(i)) + \texttt{dist}_T(i,f(i)))$ charges. 
Consequently, $g \in O\bigl(\log n+\sum_{i=2}^{n}(1+\texttt{dist}_R(i,f(i))+\texttt{dist}_T(i,f(i)))\bigr)$.
The sum is at least $n-1$, so it absorbs the $O(n)$ initialization cost and the $O(\log n)$ charges to item $1$.
\end{proof}

\begin{lemma}\label{lemm:complete_distance}
Let $X$ be a permutation of $[n]$ and let $f$ be any finger function.
Then
\scalebox{0.92}[1]{$\sum\limits_{i=2}^{n}\texttt{dist}_R(i,f(i))\in O(n+\sum\limits_{i=2}^{n}\log(i-f(i)))$ and $\sum\limits_{i=2}^{n}\texttt{dist}_T(i,f(i))\in O(n+\sum\limits_{i=2}^{n}\log|X[i]-X[f(i)]|)$.}
\end{lemma}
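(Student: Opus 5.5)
The plan is to prove both bounds with the same amortised counting argument. The second bound is the first one transported through the bijection $i\mapsto X[i]$. Pointwise, $\texttt{dist}_R(i,f(i))$ can be $2\log n$ even when $i-f(i)=1$, since the leaves $n/2$ and $n/2+1$ are adjacent. So we cannot bound each term by $O(1+\log(i-f(i)))$ individually. Instead, we show that such "bad" pairs are rare, in a geometrically decaying way.

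\emph{Setup.} For two leaves $a<b$ of a complete binary tree, the distance is $2h$, where $h$ is the height of their lowest common ancestor. At that node the two leaves are separated by its middle split point: the boundary between its left and right halves, each of size $2^{h-1}$. Write $d_i=i-f(i)\geq 1$ and let $h_i$ be the height of the LCA of the leaves $i$ and $f(i)$ in $R$, so $\texttt{dist}_R(i,f(i))=2h_i$.

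We split the indices $i\geq 2$ into two classes.
\begin{itemize}
\item \emph{Good indices}, with $h_i\leq 2\log d_i+2$. Each contributes $O(1+\log d_i)$, which is within the claimed bound.
\item \emph{Bad indices}, with $h_i>2\log d_i+2$. Here we bound the total contribution $\sum 2h_i$ by $O(n)$ directly.
\end{itemize}

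\emph{Counting bad indices.} Fix a height $h$ and an integer $a\geq 0$, and count the indices $i$ with $h_i=h$ and $d_i\leq 2^a$. Since $f(i)<i$ straddle a height-$h$ split point $s$, the index $i$ lies among the $d_i\leq 2^a$ positions immediately to the right of $s$. There are $n/2^h$ height-$h$ split points. Moreover, distinct $i$ are distinct leaves, since every $i$ contributes exactly one pair. Hence there are at most $n2^{a-h}$ such indices.

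For a bad index with $h_i=h$ we have $d_i\leq 2^{h/2-1}$, so taking $a=h/2-1$ shows there are at most $n2^{-h/2}$ bad indices with $h_i=h$. Their total contribution is therefore at most
\[
\sum_{h\geq 1}2h\cdot n2^{-h/2}\in O(n).
\]
Adding the good indices gives $\sum_{i}\texttt{dist}_R(i,f(i))\in O\bigl(n+\sum_{i}\log(i-f(i))\bigr)$.

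\emph{The tree $T$.} The same argument applies to $T$ with positions replaced by values. The pair for index $i$ is the pair of leaves $X[i]$ and $X[f(i)]$ of $T$, at distance $e_i=|X[i]-X[f(i)]|\geq 1$, which is nonzero since $X$ is a permutation. Now the leaf $X[i]$ may lie on either side of the separating split point, so there are at most $2e_i$ candidate positions and the count at most doubles. Crucially, $X$ is a permutation, so the leaves $X[i]$ are pairwise distinct. Thus the count is again at most $2n2^{a-h}$, and the same geometric sum gives the second bound.

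\emph{Main obstacle.} The delicate step is the injectivity used in the counting: each $i$ must be charged to a distinct leaf near a split point. The natural pitfall is to charge via $f(i)$, which need not be injective. Charging via $i$ (respectively $X[i]$) is what makes the count $n2^{a-h}$ valid, and it is the only place where we use that $X$ is a permutation.
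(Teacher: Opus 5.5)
Your proof is correct and is essentially the paper's argument. The paper also splits the levels at which $X[i]$ and $X[f(i)]$ are still separated into at most $1+2\log\delta_i$ cheap levels and the levels with $2^k>\delta_i^2$, and bounds the latter globally: such an $X[i]$ must lie within $\delta_i<2^{k/2}$ of a group boundary, which by injectivity of $i\mapsto X[i]$ happens for at most $2n/2^{k/2}$ indices per level. The only difference is cosmetic: the paper charges each expensive level separately (a sum of $2^{-k/2}$), whereas you classify whole indices as bad and charge their full distance $2h_i$ at once (a sum of $h2^{-h/2}$), which converges just as well.
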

\vspace{-0.3cm}
\begin{proof}
We prove the bound for $\texttt{dist}_T$, which is the more difficult case because $X[f(i)]$ may lie on either side of $X[i]$.
Construct $T$ from the leaves upwards.
In round $k$, the leaves are partitioned into groups of $2^k$ consecutive values.
If two leaves start belonging to the same group in round $t$, their lowest common ancestor is $t$ edges above them.
Thus, $\texttt{dist}_T(i,f(i))$ is twice the number of rounds in which $X[i]$ and $X[f(i)]$ belong to different groups.

Write $\delta_i=|X[i]-X[f(i)]|$. For a fixed $i$, there are at most $1+2\log\delta_i$ rounds with $2^k\leq\delta_i^2$. We bound the remaining rounds by counting how often they occur across all indices.

Fix a round $k$ with $2^k>\delta_i^2$, and suppose $X[i]$ and $X[f(i)]$ belong to different groups. Let $[a,b]$ be the group containing $X[i]$. If $X[f(i)]<a$, then $X[i]-a+1\leq\delta_i$; if $X[f(i)]>b$, then $b-X[i]+1\leq\delta_i$. Hence $X[i]$ is among the first or last $\delta_i<2^{k/2}$ values of its group. There are $n/2^k$ groups, and $X$ is a permutation, so this happens for at most $2n/2^{k/2}$ indices in round $k$. Summing over all rounds gives at most $2n\sum_{k=0}^{h-1}2^{-k/2}=O(n)$ such occurrences.

The first kind of round contributes $O\bigl(n+\sum_{i=2}^{n}\log\delta_i\bigr)$ occurrences, and the second contributes $O(n)$. Each occurrence contributes two to the distance, proving the lemma.
\end{proof}

\begin{theorem}
    If $X$ is a permutation and $T$ is a complete binary tree,
    then the running time of \textsc{SproutSort} is $O(n + \log \reg)$.
\end{theorem}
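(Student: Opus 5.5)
The proof combines \cref{lemm:sprout_sort} and \cref{lemm:complete_distance}, applied to a finger function that attains $\reg(X)$. Since $X$ is a permutation of $[n]$, the ranks equal the values, and $\reg(X)$ is a minimum over finger functions $f$ of
\[
\prod_{i=2}^n (i-f(i))\max\{1,|X[i]-X[f(i)]|\}.
\]
We fix such a minimising $f$, which exists because there are finitely many finger functions. Then
\[
\log\reg(X)=\sum_{i=2}^n\bigl(\log(i-f(i))+\log\max\{1,|X[i]-X[f(i)]|\}\bigr).
\]
Because $X$ is a permutation and $f(i)\neq i$, we have $|X[i]-X[f(i)]|\geq 1$, so the $\max$ is redundant. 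Both terms in each summand are nonnegative.

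\cref{lemm:sprout_sort} applied to this $f$ bounds the running time by
\[
O\Bigl(n+\sum_{i=2}^n\texttt{dist}_R(i,f(i))+\sum_{i=2}^n\texttt{dist}_T(i,f(i))\Bigr).
\]
\cref{lemm:complete_distance} bounds the two sums by $O(n+\sum_i\log(i-f(i)))$ and $O(n+\sum_i\log|X[i]-X[f(i)]|)$ respectively. Adding these gives $O(n+\log\reg(X))$.

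The only point to check is that the hypotheses of the lemmas hold. The trees $R$ and $T$ are the complete binary trees on $[n]$ with $n$ a power of two, as assumed at the start of the section. This is also why the theorem asks for $T$ to be complete.

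One might worry about $\texttt{dist}_T$: the proof of \cref{lemm:complete_distance} writes the bound with $\log\delta_i$, which could be problematic if $\delta_i=0$. That case cannot arise for a permutation, since $\delta_i\geq1$ and so $\log\delta_i\geq 0$. The argument is therefore a direct substitution and I expect no real obstacle.
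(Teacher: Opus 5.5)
Your proposal is correct and follows the paper's proof: fix a finger function attaining $\reg(X)$, bound the running time via \cref{lemm:sprout_sort}, and convert the two distance sums with \cref{lemm:complete_distance}. The paper leaves one detail implicit that you make explicit: for a permutation, $|X[i]-X[f(i)]|\geq 1$, so the $\max\{1,\cdot\}$ in $\reg$ can be dropped.
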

\begin{proof}
    Let $f$ be an arbitrary finger function.
    By \cref{lemm:sprout_sort}, the running time is $O\bigl(\sum_{i=2}^{n}(1+\texttt{dist}_R(i,f(i))+\texttt{dist}_T(i,f(i)))\bigr)$.
    By Lemma~\ref{lemm:complete_distance}, the running time is therefore upper-bounded by 
    $O\bigl(n + \sum_{i=2}^{n}\log(i-f(i)) +  \sum_{i=2}^{n}\log |X[i]-X[f(i)]| \bigr)$.
    By the definition of $\reg$, there exists some finger function $f$ such that $O(n + \log \reg) = O\bigl(n + \sum_{i=2}^{n}\log(i-f(i)) +  \sum_{i=2}^{n}\log |X[i]-X[f(i)]| \bigr)$ and so the theorem follows.
\end{proof}

\subsection{Running \textsc{SproutSort} in the BST Model} \label{sub:bst}

What remains is to implement \textsc{SproutSort} as a BST merge sort algorithm, which then implies an algorithm for BST access.
That is, we must implement our recursive algorithm so that the output of each recursive call is a BST whose nodes are exactly the input of the call.
The input is a list of singleton BSTs whose left-to-right order is $X$.
Each recursive call combines its two output BSTs using the \textsc{Merge} operator from the beginning of the section.

Consider a recursive call of \textsc{SproutSort} at recursion height $k$ on a contiguous sublist $X'$ of $X$.
To represent a sprout tree $T' \in T_k[X']$ as a BST, first contract all its maximal unary paths, yielding a tree $T''$.
Assign each internal node $u$ of $T''$ the key of the rightmost leaf in its left subtree, unless this subtree consists of a single leaf.
Similarly, assign it the key of the leftmost leaf in its right subtree, unless this subtree consists of a single leaf.
Split every node that is assigned two keys, putting the larger key above the smaller.
Remove the original leaves and any internal nodes that were not assigned a key, yielding a BST $T'''$ containing all keys except the leftmost and rightmost.
Finally, make the rightmost key $v$ a new root, the leftmost key $u$ the left child of $v$, and $T'''$ the right subtree of $u$.
A one-leaf sprout tree becomes a singleton BST; see \cref{fig:sprout_tree_bst}.
To represent a forest $T_k[X']$, simply represent each sprout tree in this way and connect the roots in a linked list via their right child pointers.
This yields a BST, as the root list of a sprout forest contains the items in increasing order.

\begin{figure}[t]
    \includegraphics[width=\linewidth]{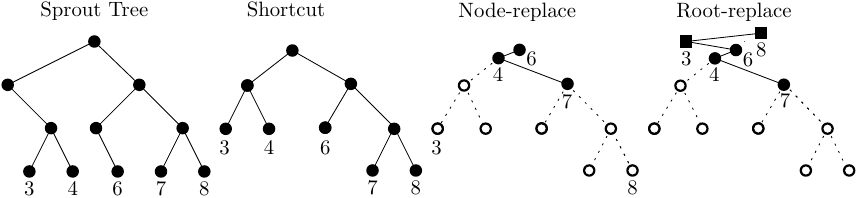}
    \caption{Converting a sprout tree to a BST in three steps. White nodes are implicit; each root-to-leaf path has a corresponding BST traversal.}
    \label{fig:sprout_tree_bst}
\end{figure}

\subparagraph{Our operations.}
The \textsc{Grow} operation traverses the root list of a sprout forest and may attach pairs of adjacent trees to a new root.
In the BST representation, this corresponds to traversing the root list through right-child pointers and performing $O(1)$ rotations per pair.
The number pointer traversals and rotations performed is bounded by the root list size.

\begin{figure}[b]
    \includegraphics[width=\textwidth]{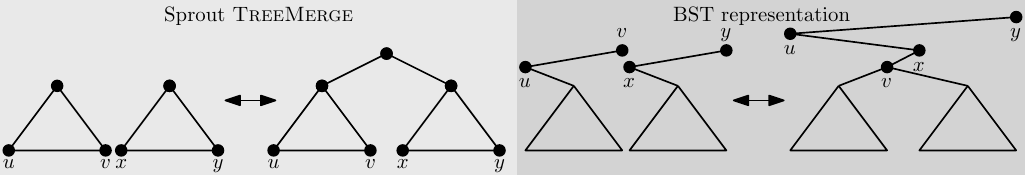}
    \caption{The primitives used in Sprout \textsc{Tree-Merge} (left) and their BST representation (right).}
    \label{fig:sprout_tree_bst_merge}
\end{figure}

The \textsc{Merge} operation, when given two disjoint BSTs, traverses their root lists and calls \textsc{TreeMerge} (\cref{algo:trie_merge}) on sprout trees that share the same root.
Each recursive call to \textsc{TreeMerge} separates the root of each sprout subtree $T_i$ from its children $T_i.\ell,T_i.r$, recursively merges the left and right children, and then re-attaches the results to a new root.
In the BST representation, separating its root from its two children involves up to four nodes near the root.
Re-attaching to a new root is the reverse (see \cref{fig:sprout_tree_bst_merge}).
For each separation operation, we mark the roots of the resulting binary search trees, plus their left children (nodes $u,v,x,y$ in \cref{fig:sprout_tree_bst_merge}).
At the start of \textsc{Merge}, we also mark all nodes on the right spine of each input BST (i.e., the root list plus their left children).
With these markings, the marked nodes of each input BST form a connected subtree containing its root, and unmarked subtrees remain unchanged.
Moreover, every block boundary between the two input BSTs is exposed during \textsc{TreeMerge} and therefore marked.
Thus, $\textsc{Grow}(\textsc{Merge}(\dots))$ in \textsc{SproutSort} is a BST merge.
The number of marked nodes is bounded by the number of root-list entries traversed and nodes processed by \textsc{TreeMerge}.
By \cref{theo:grow_merge} and our analysis of \textsc{SproutSort}, the total BST merge cost is $O(n+\log\reg(X))$, which shows:

\begin{theorem}
    \textsc{SproutSort} is a BST merge sort~\cite{BlellochDobson2023}.
    Thus, it yields an offline BST algorithm for accessing a permutation $X$
    in $O(n+\log\reg(X))$ node traversals and rotations.
\end{theorem}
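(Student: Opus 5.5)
The plan is to combine three earlier results: \cref{theo:bst_merge_sort}, the merge-cost accounting from \cref{theo:grow_merge}, and the running-time analysis of \cref{lemm:sprout_sort} with \cref{lemm:complete_distance}.

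The first step is to check that each call $\textsc{Grow}(\textsc{Merge}(T_{k-1}[X_\ell],T_{k-1}[X_r]))$, once written in the BST representation above, is a valid BST merge in the sense of Blelloch and Dobson. This needs three properties.
\begin{itemize}
\item The output is a BST on exactly the keys of $X_\ell\cup X_r$. This holds because each sprout tree is converted to a BST whose in-order is its leaf order. The trees are then chained along the root list through right-child pointers, and the root list is sorted.
\item For each input tree, the marked nodes form a connected set containing the root. The right spine is marked first, and every separation in \textsc{TreeMerge} marks the up to four nodes $u,v,x,y$ of \cref{fig:sprout_tree_bst_merge}. These are always adjacent to already-marked nodes, since \textsc{TreeMerge} only descends from nodes it has exposed. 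Unmarked subtrees are never touched, so $S_1\cup S_2$ is connected in the output and contains its root.
\item Every block boundary is marked. If a key's predecessor or successor in value order lies in the other input, then their leaves collide in a common sprout tree, or sit in adjacent root-list trees. In the first case \textsc{TreeMerge} must descend to the point where the two leaf sets interleave, which exposes the extreme keys of the relevant subtrees. In the second case the keys sit on the right spine or are its left children.
\end{itemize}
The \textsc{Grow} step adds only $O(1)$ rotations per root-list pair, and these touch only spine nodes, which are already marked.

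The second step bounds the merge cost $|S_1|+|S_2|$ by $O(1)$ times the work of the classical \textsc{Merge} and \textsc{Grow}. Each root-list entry and each \textsc{TreeMerge} recursive call marks $O(1)$ nodes, so the total marked count is $O(n+\sum_i m_i)$ by the coin argument of \cref{theo:grow_merge}. By the proof of \cref{lemm:sprout_sort}, this is $O(\sum_{i\ge2}(1+\texttt{dist}_R(i,f(i))+\texttt{dist}_T(i,f(i))))$ for every finger function $f$. By \cref{lemm:complete_distance}, and choosing $f$ optimal for $\reg$, the total is $O(n+\log\reg(X))$.

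The final step applies \cref{theo:bst_merge_sort} to convert this BST merge sort into an offline access algorithm with cost $O(n+\log\reg(X))$.

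I expect the main obstacle to be the block-boundary condition. The argument must show that whenever two value-adjacent keys come from different halves, some separation step exposes both of them. The case to check is a key hidden deep in an unmarked subtree whose value-neighbour lies in the other tree. Such a key is the extreme key of a maximal subtree present in only one input. In the BST representation, extreme keys of a sprout subtree sit at the top: the rightmost key is the root and the leftmost is its left child. Hence exactly these keys are marked when that subtree is separated or reached via \textsc{TreeMerge}'s nil base case. I would verify this by induction over the \textsc{TreeMerge} recursion.
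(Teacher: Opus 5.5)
Your proposal is correct and follows the paper's argument. Like the paper, you show that each \textsc{Grow}(\textsc{Merge}(\dots)) step, with the right-spine markings and the $u,v,x,y$ markings at every \textsc{TreeMerge} separation, is a valid BST merge. You then bound the number of marked nodes by the work accounted for in \cref{theo:grow_merge} and \cref{lemm:sprout_sort}, and invoke \cref{theo:bst_merge_sort}. Your block-boundary argument fills in the step the paper only asserts in one line: a boundary key is the extreme key of a maximal subtree present in only one input, so it sits at the root or the root's left child of that subtree's BST, and is therefore marked. More precisely, it is marked by the separation of that subtree's parent, not by the nil base case itself.
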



\subsection{Duplicate Items}
\label{sub:duplicates}

So far, we assumed that $X$ has no duplicates.
We now show that \textsc{SproutSort} can be adapted to the case where $X$ contains duplicate items.
In the offline BST model for accessing elements, this corresponds to having an access sequence $X$ that contains the same access multiple times, where the BST stores the set $Y$ of distinct keys in $X$.
Although input elements can only be compared, we can transform any input $X'$ into a list $X \in [U]^n$, where $U$ is the number of distinct keys in $X$ (via~\cite{BadoiuColeDemaineIacono2007}) in $O(n + \log \reg)$ time. 
Furthermore, we can assume that we know for each $x \in [U]$ how many times $x$ occurs in $X$.

\begin{definition}
    \label{def:tree_T}
    We denote by $T$ a specific binary tree whose leaves are $1,\dots,U$ in-order and have the same depth. $T$ could have many shapes, as we allow nodes with only one child. We construct $T$ cleverly later, and associate each $X[i]$ with the leaf that stores this value. 
\end{definition}

Given a specifically-shaped, currently-unspecified $T$, we run \textsc{SproutSort} as before where we use $T$ as the skeleton for the \textbf{Sprout Trees}, as opposed to the complete binary tree. 
Each recursive call on $X[i..j]$ returns a sprout forest of the distinct values in $X[i..j]$.
Separate recursive calls store separate copies of their forests, and different forests may contain leaves with the same value.
Each leaf with value $x\in[U]$ stores a linked list of the occurrences of $x$ in the corresponding recursive input $X[i..j]$.
When \textsc{TreeMerge} encounters two copies of the same leaf, it replaces them by one and concatenates their occurrence lists in $O(1)$ time.

\subparagraph{Recreating \cref{theo:grow_merge} and \cref{lemm:sprout_sort}}
The amortisation proof of \cref{theo:grow_merge} applies as-is: identifying two copies of a leaf spends the deleted copy's coin on the $O(1)$ work required to identify the duplicate, and the linked list merge. 
We do have to extend \cref{lemm:sprout_sort}:

\begin{lemma}\label{lemm:sprout_sort_general}Let $T$ be the tree from Definition~\ref{def:tree_T}, $f$ be any finger function and $n \geq 2$.
\textsc{SproutSort} on  input $X \in [U]^n$ runs in time
$O\bigl(n+\sum_{i=2}^{n}(\log(i-f(i))+\texttt{dist}_T(i,f(i)))\bigr)$.
\end{lemma}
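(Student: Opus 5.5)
We follow the proof of \cref{lemm:sprout_sort} almost verbatim, replacing the complete binary tree by the tree $T$ of Definition~\ref{def:tree_T} and removing the $\texttt{dist}_R$ term through \cref{lemm:complete_distance}. Here $\texttt{dist}_T(i,f(i))$ is the distance between the leaves of $T$ storing $X[i]$ and $X[f(i)]$; it is $0$ when $X[i]=X[f(i)]$.

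\emph{Step 1: reduce the running time to node creation.} As before, let $g$ be the total number of nodes created by \textsc{Grow}. \textsc{Merge} never increases the number of nodes; identifying two copies of a leaf only deletes one of them and concatenates occurrence lists. Each \textsc{Grow} creates one node per output tree. Both operations shrink root lists by at most a factor of two, but in $T$ a \textsc{Grow} may attach a single tree to a unary parent, so the factor can be even smaller. Hence a call $\textsc{Grow}(\textsc{Merge}(\dots))$ that creates $g_i$ nodes handles root lists of size $O(g_i)$. By the amortisation of \cref{theo:grow_merge}, which the text above notes carries over to duplicates, the running time is $O(n+g)$. The initial forests consist of $n$ leaf copies, one per position, so the initial coin count is $n$, not $U$.

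\emph{Step 2: charge each created node.} A created node $u$ is charged to the smallest $i\ge2$ such that $X[i]$ lies in the recursive input of the call, its value is in $\subtree(u)$, and $X[f(i)]$ is not both in the same recursive input and in $\subtree(u)$. If no such $i$ exists, $u$ is charged to item $1$, which happens $O(\log n)$ times in total. Fix $i$ and consider two cases.
\begin{itemize}
\item At recursion heights $k<\lceil\log(i-f(i)+1)\rceil+1$, positions $i$ and $f(i)$ may lie in different recursive inputs. Item $i$ takes part in one \textsc{Grow} per height, so this contributes $O(1+\log(i-f(i)))$ charges. This replaces the $\texttt{dist}_R$ bound, since the recursion tree $R$ is still complete on $[n]$; a direct count of heights suffices, or one can cite the first half of \cref{lemm:complete_distance}.
\item At larger heights $k$, positions $i$ and $f(i)$ share the recursive input $X'$. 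Once $k\ge\texttt{dist}_T(i,f(i))/2$, their values lie in the same sprout tree of $T_k[X']$, so $i$ is no longer charged. Because all leaves of $T$ have equal depth, the lowest common ancestor of the two values has height exactly $\texttt{dist}_T/2$, so this threshold is correct even when $T$ has unary nodes. If $X[i]=X[f(i)]$, the two positions share a leaf and $i$ is never charged at these heights.
\end{itemize}

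\emph{Step 3: sum up.} The charges give $g\in O\bigl(\log n+\sum_{i\ge2}(1+\log(i-f(i))+\texttt{dist}_T(i,f(i)))\bigr)$. Together with the $O(n)$ term this yields the claimed bound, with the $O(\log n)$ charges to item $1$ absorbed by $n$.

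\emph{Main obstacle.} The delicate point is the charging rule when several positions share a leaf value. A node $u$ must be charged to an item $i$ that is genuinely separated from its finger, and that item must stop being charged once it joins its finger's tree. The charging rule must therefore be phrased in terms of positions within the current recursive input, as in Step 2, rather than values. We also need that each position appears in exactly one forest per height, despite duplicate leaf copies across forests. This holds because separate recursive calls at the same height have disjoint sets of positions.
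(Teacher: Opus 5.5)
\paragraph{A false per-item step in the recursion-tree term.} Your overall plan matches the paper's proof: reduce to the number of nodes created by \textsc{Grow}, charge each created node $u$ to the smallest $i\ge2$ with $i\in C_u$ but $f(i)\notin C_u$, and stop charging once $i$ and $f(i)$ share a sprout tree. Your handling of the $\texttt{dist}_T$ part, including equal-depth leaves and equal values, is also fine. The problem is how you get rid of the recursion-tree term.

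Your first bullet asserts, and your second bullet uses, that positions $i$ and $f(i)$ lie in the same recursive input at every height $k\ge\lceil\log(i-f(i)+1)\rceil+1$. You also claim that ``a direct count of heights suffices.'' This is false. Two positions first share a recursive input at the height of their lowest common ancestor in $R$, which is $\texttt{dist}_R(i,f(i))/2$. That height can be $\log n$ even when $i-f(i)=1$; take $i=n/2+1$ and $f(i)=n/2$. Such an item can be charged at every height below $\log n$, so your per-item bound $O(1+\log(i-f(i))+\texttt{dist}_T(i,f(i)))$ does not hold.

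\paragraph{The repair.} What holds per item is $O(1+\texttt{dist}_R(i,f(i))+\texttt{dist}_T(i,f(i)))$, which is the bound the paper proves. The estimate $\sum_{i}\texttt{dist}_R(i,f(i))\in O\bigl(n+\sum_i\log(i-f(i))\bigr)$ is true only after summing over all $i$. It comes from the amortised round-by-round count of \cref{lemm:complete_distance}. In round $k$, an index separated from its finger with $i-f(i)<2^{k/2}$ must sit among the first $2^{k/2}$ positions of its block, so only $O(n/2^{k/2})$ indices qualify. The paper keeps $\texttt{dist}_R$ through the whole charging argument and invokes \cref{lemm:complete_distance} only at the end. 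Your parenthetical ``or cite the first half of \cref{lemm:complete_distance}'' points at this fix. It works only after you replace the threshold in your second bullet by $\texttt{dist}_R(i,f(i))/2$; it is not interchangeable with a direct per-item count.

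\paragraph{A smaller slip in Step 1.} With duplicates, the height of $T$ can be smaller than the recursion height, for example when $U<\sqrt n$ under the construction of \cref{lemm:weighted_tree}. In that case \textsc{Grow} receives a forest that already contains the root of $T$, creates no node, and returns it unchanged. So ``root lists of size $O(g_i)$'' fails when $g_i=0$. The correct bound is $O(g_i+1)$. The extra $+1$ per call is absorbed because there are only $n-1$ non-base recursive calls, which is how the paper obtains $O(n+g)$.
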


\begin{proof}
Let $g$ be the total number of nodes created by \textsc{Grow}.
Let $g_j$ be the number of nodes created by \textsc{Grow} in the $j$-th recursive call; then $g:=\sum_j g_j$.
If \textsc{Grow} creates nodes, it creates one per tree in its output forest.
Otherwise, its input forest already contains the root of $T$, and it returns this single tree unchanged.
Thus, its output root list has size at most $g_j+1$.
Since \textsc{Grow} combines at most two input trees into each output tree, the preceding \textsc{Merge} has an output root list of size at most $2(g_j+1)$.
By \cref{theo:grow_merge}, the total running time is
$O(n+\sum_j(g_j+1))=O(n+g)$,
since there are $n-1$ non-base recursive calls.

Whenever \textsc{Grow} creates a node $u$ in a recursive call on $X[a..b]$, let
$C_u:=\{i\in[a,b]\mid X[i]\in\subtree(u)\}$ be all input keys in the current recursive call that lie below $u$.
We charge the creation of $u$ to the smallest $i\geq2$ such that $i\in C_u$ but $f(i)\notin C_u$.
If no such $i$ exists, the smallest index in $C_u$ must be $1$, since $f(i)<i$.
We then charge $u$ to item $1$.
At each recursion height, only one recursive input contains item $1$, and only one newly created sprout tree contains its leaf.
Thus, item $1$ receives $O(\log n)$ charges.

For all other $i\geq2$, at every recursion height $k\geq\texttt{dist}_R(i,f(i))/2$, items $i$ and $f(i)$ belong to the same recursive input.
If, furthermore, $k\geq\texttt{dist}_T(i,f(i))/2$, their leaves belong to the same sprout tree.
Item $i$ therefore receives at most $O(1+\texttt{dist}_R(i,f(i))+\texttt{dist}_T(i,f(i)))$ charges.
This also holds when $X[i]=X[f(i)]$, in which case $\texttt{dist}_T(i,f(i))=0$.
Every node created by \textsc{Grow} charges exactly one item, so the total number of charges is $\sum_j g_j$.
Summing the bounds above, including the $O(\log n)$ charges to item $1$, gives
$\sum_j g_j\in O\bigl(\log n+\sum_{i=2}^{n}(1+\texttt{dist}_R(i,f(i))+\texttt{dist}_T(i,f(i)))\bigr)$.
Since we could upper bound our running time by $O(n+\sum_j g_j)$, we can now upper bound our running time by 
$O\bigl(n+\sum_{i=2}^{n}(\texttt{dist}_R(i,f(i))+\texttt{dist}_T(i,f(i)))\bigr)$.
What remains is to replace all $\texttt{dist}_R$ terms.
For this, we can directly invoke \cref{lemm:complete_distance}, as $R$ is completely unaffected by $X$ containing duplicates.
\end{proof}

\subparagraph{Choosing the tree of Definition~\ref{def:tree_T}.}
By \cref{lemm:sprout_sort_general}, what remains is to bound the total distance between leaves representing $X[i]$ and $X[f(i)]$ in our custom-made tree $T$. 
We choose $T$ using the number of occurrences of each value via some weight function:

\begin{lemma}\label{lemm:weighted_tree}
Let $w:[U]\to[n]$ be a weight function with $\sum_{x=1}^{U}w(x)=n$.
There is a binary tree $T$, possibly containing nodes with one child,
whose leaves are $1,\dots,U$ in-order and have the same depth, such that,
putting
$E(x):=\max_{y\in[U]}\{\texttt{dist}_T(x,y)-8\log\max\{1,|x-y|\}\}$,
we have $\sum_{x=1}^{U}w(x)E(x)\in O(n)$.
$T$ has $O(U)$ nodes and can be constructed in $O(U)$ time.
\end{lemma}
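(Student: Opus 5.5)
The plan is to build $T$ top-down by recursive bisection of the value range $[1,U]$, placing each cut where little weight lies near it. I would then adapt the counting argument from the proof of \cref{lemm:complete_distance}, measuring the boundary mass by $w$ rather than by the number of values. The tree has $h=\lceil\log U\rceil$ levels. At level $h$ there is a single interval $[1,U]$. Each interval $I$ at level $k$ has length $\Theta(2^k)$ and is split at level $k-1$ into two subintervals of length $\Theta(2^{k-1})$ at a cut point $c_I$. Nodes of $T$ are these intervals, and an interval that is a single value keeps descending through unary nodes, so all leaves have the same depth $h$. This gives $O(U)$ nodes, provided unary chains are only created where needed, i.e.\ a chain of length $h-k$ only hangs below a singleton at level $k$, which is at most one per interval.

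\textbf{Choice of cut.} Inside $I$ (length $\ell\approx 2^k$), the candidate cuts are the $\Theta(\ell)$ positions in the middle half of $I$. Any such cut keeps both halves of length between $\ell/4$ and $3\ell/4$. I use the constant-fraction version instead of exact halving, so the depth stays $O(\log U)$; I would adjust the level indexing so that lengths are $\Theta(2^k)$ at level $k$. Put $r_k:=2^{k/4}$. For a candidate $c$, let $N(c)$ be the total weight $w$ of values within distance $r_k$ of $c$. Each value lies in the $r_k$-window of at most $2r_k+1$ candidates. Hence the average of $N(c)$ over the candidates is at most $O(w(I)\,r_k/2^k)$, and I take $c_I$ to be a minimising candidate. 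Summing over the disjoint intervals of one level, the total weight within $r_k$ of some level-$k$ cut is $O(n\,2^{-3k/4})$. Computing all prefix sums of $w$ once lets each $N(c)$ be evaluated in $O(1)$, so one interval costs $O(|I|)$. The total is then $O(U\log U)$ rather than $O(U)$. To reach $O(U)$ time I would only consider candidates at $O(|I|/r_k)$ equally spaced offsets, which preserves the averaging bound up to constants, or argue that the recurrence with geometrically shrinking windows sums to $O(U)$.

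\textbf{Bounding $E(x)$.} Define $\mathrm{bad}(x)$ to be the number of pairs $(k,j)$ with $j\ge k$ such that $x$ lies within $r_k\le r_j$ of a cut made at level $j$. Take any $y\ne x$ with $\delta=|x-y|$, and suppose their LCA is at height $L$, so $\texttt{dist}_T(x,y)=2L$. For every round $k<L$, the values $x$ and $y$ lie in different level-$k$ intervals, so some cut of level $j\ge k$ lies between them and hence within $\delta$ of $x$.
\begin{itemize}
\item If $2^k\le\delta^4$, the round is charged to $\delta$. There are at most $4\log\delta+1$ such rounds.
\item Otherwise $\delta<r_k$, so the pair $(k,j)$ is counted in $\mathrm{bad}(x)$.
\end{itemize}
Therefore $2L\le 8\log\max\{1,\delta\}+2+2\,\mathrm{bad}(x)$, and $E(x)\le 2+2\,\mathrm{bad}(x)$ independently of $y$. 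Since $r_k\le r_j$, every pair $(k,j)$ counted in $\mathrm{bad}(x)$ has $x$ within $r_j$ of a level-$j$ cut. So $\sum_x w(x)\,\mathrm{bad}(x)\le\sum_j (j+1)\cdot O(n\,2^{-3j/4})=O(n)$, which gives $\sum_x w(x)E(x)=O(n)$.

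\textbf{Main obstacle.} The delicate step is the averaging choice of cuts. It has to keep the boundary weight summable across levels even though a single cut is reused by all lower levels $k\le j$, which is where the factor $(j+1)$ comes from. At the same time it has to keep the interval lengths geometric so that equal leaf depth and $O(\log U)$ height hold, and it has to be implementable in linear total time. I would first verify the length invariant and the cost accounting with the middle-half candidate window, and fall back on a sparser candidate grid if the linear-time bound fails.
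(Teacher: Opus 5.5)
Your window-based cut rule and your per-round charging are sound. The gap is the step that makes all leaves equally deep while keeping each node's height tied to its interval length: what you describe does not do this.

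\emph{Padding only below singletons fails.} With this construction, the height of a node is the global depth $D$ minus its own depth, and that is unrelated to its length once cuts are off-centre. Take $w\equiv1$ and break ties to the left, so every cut lands at the $1/4$ point. Then $D\approx\log_{4/3}U\approx2.4\log U$, while a typical leaf has depth about $1.2\log U$. Summing $\texttt{dist}_T(x,x+1)=2\,\height(\LCA(x,x+1))$ over the $U-1$ adjacent pairs already gives $\sum_x E(x)=\Omega(U\log U)$ with $n=U$, and the padding chains alone have $\Theta(U\log U)$ nodes.

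\emph{Your proposed remedy is impossible.} You suggest re-indexing so that level-$k$ intervals have length $\Theta(2^k)$. But every edge lowers the height by at least one, and a child may keep $3/4$ of its parent's keys, so a node with $\ell$ keys can be forced to height at least $\log_{4/3}\ell$. Meanwhile, along a branch of $1/4$-cuts, lengths shrink by a factor $4$ per edge. Only exact halving gives height $\log_2\ell+O(1)$, and exact halving removes your freedom to choose the cut. Both your bad-mass estimate $O(w(I)r_k/2^k)$ and your $O(U)$ node count rest on this unattainable invariant.

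\emph{The missing idea is the paper's edge subdivision.} Give every original node $u$ with $m_u$ keys the target height $\lfloor c\log m_u\rfloor$, and replace each edge $pu$ by a path of $\lfloor c\log m_p\rfloor-\lfloor c\log m_u\rfloor$ edges. These lengths are positive as long as each child keeps at most a $2^{-1/c}$ fraction of its parent's keys. The paper cuts in the middle third and takes $c=2$; your middle-half cuts would need $c\geq1/\log(4/3)\approx2.41$. The path lengths telescope, so all leaves have the same depth, $\texttt{dist}_T(x,y)=2\lfloor c\log m\rfloor$ where $m$ is the size of the LCA's interval, and the $O(\log m_u)$ inserted nodes per original node sum to $O(U)$.

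\emph{With this fix your counting argument survives,} provided you index $r_k=2^{k/4}$ by height rather than by length. An original node of height $k$ has at least $2^{k/c}$ keys, so its bad mass is $O(w(I)\,2^{k/4-k/c})$. This is still summable against the factor $k+1$ whenever $c<4$. The count of $4\log\delta+1$ rounds is unchanged, and hence so is the constant $8$. Your sparse grid of candidates then gives $O(U)$ construction time.

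\emph{The paper's analysis is more direct.} It uses windows of half-width $\lceil\sqrt m\rceil$. If $x$ lies outside the window at $\LCA(x,y)$, then $|x-y|\geq\sqrt m$, which gives $\texttt{dist}_T(x,y)\leq4\log m\leq8\log|x-y|$ immediately, with no round counting. Otherwise the excess is at most $4\log m$. The paper charges this as $O(w(x)\log m/\sqrt m)$ to every key in the interval and sums along the ancestors of $x$.
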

\begin{proof}
For an integer interval $[a,b]\subseteq[U]$, we construct a tree
$T'_{a,b}$ whose leaves are $a,\dots,b$.
If $a=b$, then $T'_{a,b}$ consists of a single leaf labelled $a$.
Otherwise, choose an index $r\in\{a,\dots,b-1\}$ by the rule below.
Construct $T'_{a,r}$ and $T'_{r+1,b}$ recursively, and make their roots
the left and right children of a new root.
The resulting tree is $T'_{a,b}$.
Finally, put $T'=T'_{1,U}$.

We choose $r$ as follows.
Put $J=[a,b]$, $m=b-a+1$, and $W=\sum_{x\in J}w(x)$.
Require $m/3\leq r-a+1\leq2m/3$, so both child intervals contain
at most $2m/3$ keys.
Put $q=\lceil\sqrt m\rceil$.
Starting with the smallest permitted index $r_0$, consider
$r_0,r_0+2q,r_0+4q,\dots$, stopping when the index is no longer permitted.
For each candidate $r$, put $B_r=J\cap[r-q+1,r+q]$.
These are disjoint, and there are $\Theta(\sqrt m)$ candidates with total weight at most $W$, so one has weight $O(W/\sqrt m)$.
Choose an $r$ minimizing this weight, and call the keys in $B_r$ \emph{bad for $T'_{a,b}$}.

We now turn $T'$ into $T$ by lengthening its edges.
For each node $u$ of $T'$, let $m_u$ be its number of descendant leaves
and assign it the target height $h(u)=\lfloor2\log m_u\rfloor$.
Replace every parent--child edge $pu$ by a path of $h(p)-h(u)$ edges,
inserting nodes with one child.
This length is positive because $m_u\leq2m_p/3$.
Along any path from an original node $u$ to a descendant leaf,
the new edge lengths telescope to $h(u)$, since leaves have target height zero.
Thus all leaves have the same depth. 
What remains is to bound $\sum_x w(x)E(x)$ in $T$.

For distinct leaves $x,y$, their lowest common ancestor is the same node $u$ in both trees, and
$\texttt{dist}_T(x,y)=2h(u)$.
Suppose leaves in the subtree of $u$ form an interval of size $m$,
partitioned at the chosen index $r$.
The keys $x,y$ lie in different child intervals.
If $x\notin B_r$, then $|x-y|\geq q\geq\sqrt m$, so
$\texttt{dist}_T(x,y)=2h(u)\leq4\log m\leq8\log|x-y|$.
This pair contributes no positive excess to $E(x)$.
If $x\in B_r$, its excess is at most $4\log m$.
Since $y=x$ contributes zero, $E(x)$ is at most the sum of $4\log m$,
over all nodes at which $x$ is bad.

Consider a node whose descendant interval has size $m$ and weight $W$.
Its bad keys have weight $O(W/\sqrt m)$.
The contribution of these bad keys to the weighted sum is therefore $O(W\log m/\sqrt m)$.
Charge each key $x$ (good or bad) in this interval $O(w(x)\log m/\sqrt m)$.
The nodes whose intervals contain $x$ are the ancestors of $x$ in $T'$,
and their interval sizes shrink by a factor at most $2/3$.
For each $k$, there are $O(1)$ nodes with $2^k\leq m<2^{k+1}$.
Hence the total charge to $x$ is
$O(w(x)\sum_{k\geq1}(k+1)2^{-k/2})=O(w(x))$.
Summing over $x$ gives $O(n)$.

To get the running time, compute for $t \in [U]$ the prefix sums $P(t)$ where $P(0)=0$ and
$P(t)=\sum_{x=1}^{t}w(x)$ in $O(U)$ time.
The weight of each candidate's bad interval $[L,R]$ is then
$P(R)-P(L-1)$, which takes $O(1)$ time to compute. 
Each leaf $T'_{a,a}$ is created in constant time.
At an internal node with $m$ descendant leaves, choosing $r$ takes
$O(\sqrt m)$ time, and lengthening its outgoing edges inserts
$O(\log m)$ nodes.
There are $O(U/2^k)$ nodes of $T'$ with $2^k\leq m<2^{k+1}$:
each has at least $2^k$ descendant leaves, and each leaf lies below
only $O(1)$ such nodes.
Summing over $k$ gives $O(U)$ construction time and nodes.
\end{proof}

\begin{theorem}\label{theo:sprout_sort_general}
Let $X\in[U]^n$, where we know for each element $x\in X$ its multiplicity in $X$.
With $T$ constructed by \cref{lemm:weighted_tree}, \textsc{SproutSort} sorts $X$ in $O(n+\log\reg(X))$ time.
\end{theorem}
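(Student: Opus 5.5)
Fix a finger function $f$ that attains $\reg(X)$. By \cref{lemm:sprout_sort_general}, the running time of \textsc{SproutSort} with the tree $T$ from \cref{lemm:weighted_tree} is
\[
O\Bigl(n+\sum_{i=2}^{n}\bigl(\log(i-f(i))+\texttt{dist}_T(i,f(i))\bigr)\Bigr).
\]
The index term $\sum_i\log(i-f(i))$ already appears in $\log\reg(X)$. It therefore suffices to show
\[
\sum_{i=2}^n\texttt{dist}_T(i,f(i))\in O\Bigl(n+\sum_{i=2}^n\log\max\{1,|X[i]-X[f(i)]|\}\Bigr).
\]
Summing the two bounds then gives $O(n+\log\reg(X))$, since for this $f$ we have $\log\reg(X)=\sum_i\bigl(\log(i-f(i))+\log\max\{1,|X[i]-X[f(i)]|\}\bigr)$.

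To get this, instantiate \cref{lemm:weighted_tree} with $w(x)$ equal to the multiplicity of $x$ in $X$. This is known by assumption, satisfies $w:[U]\to[n]$ and $\sum_x w(x)=n$, and the lemma builds $T$ in $O(U)\subseteq O(n)$ time. The definition of $E$ gives, for every $i$,
\[
\texttt{dist}_T(X[i],X[f(i)])\leq 8\log\max\{1,|X[i]-X[f(i)]|\}+E(X[i]).
\]
When $X[i]=X[f(i)]$ the distance is $0$ and the inequality is trivial; note $E(x)\geq0$ by taking $y=x$. Summing over $i$, each value $x$ occurs as $X[i]$ at most $w(x)$ times. Hence the sum of the error terms is at most $\sum_x w(x)E(x)\in O(n)$ by \cref{lemm:weighted_tree}. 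The $\log$ terms sum to at most $8\log\reg(X)$ after adding the index part. This is exactly where the weighting is needed: bad keys of heavy values must be rare, which is what the multiplicity-weighted choice of split points guarantees.

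Two remaining points need checking. First, the conversion to ranks and the multiplicities use the unified bound dictionary from \cref{sect:discovery}; the theorem statement assumes these are given, so I would only remark on it. Second, $\reg$ uses ranks among distinct values, $X\in[U]^n$ uses the same values, and $U\leq n$, so the lengthened tree $T$ with leaves at depth $\lfloor 2\log U\rfloor$ has height $O(\log n)$. This keeps the charging in \cref{lemm:sprout_sort_general} valid.

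I expect no real obstacle, since the hard work is in \cref{lemm:weighted_tree}. The one subtle step is making sure the charge to item $i$ is bounded by $\texttt{dist}_T$ between the leaves of $X[i]$ and $X[f(i)]$. The error $E$ must be indexed by the value $X[i]$ and not by $X[f(i)]$, so that summing uses $w(X[i])$ once per occurrence. The definition of $E(x)$ as a maximum over all $y$ handles this directly.
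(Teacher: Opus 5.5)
Your proposal is correct and follows the paper's proof essentially step for step. You instantiate \cref{lemm:weighted_tree} with multiplicities as weights, bound $\texttt{dist}_T(i,f(i))$ by $8\log\max\{1,|X[i]-X[f(i)]|\}+E(X[i])$, sum the error terms to $O(n)$, and plug the result into \cref{lemm:sprout_sort_general} with an optimal finger function. The only step you omit is the paper's $O(n)$-time traversal of the output leaves and their occurrence lists, which actually produces the sorted sequence; note also that the paper writes $2E(X[i])$ where your $E(X[i])$ already suffices.
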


\clearpage
\begin{proof}
For each $x\in[U]$, let $w(x)$ be its multiplicity in $X$, and construct $T$ by \cref{lemm:weighted_tree}; this takes $O(n)$ time.
Let $f$ be any finger function and put $\delta_i:=\max\{1,|X[i]-X[f(i)]|\}$.
Since all leaves of $T$ have the same depth,
$\texttt{dist}_T(i,f(i))=2\height(\LCA_T(X[i],X[f(i)]))\leq8\log\delta_i+2E(X[i])$.
This includes equal-value pairs, whose tree distance is zero.
Furthermore, $\sum_{i=2}^{n}E(X[i])\leq\sum_{x=1}^{U}w(x)E(x)\in O(n)$.
By \cref{lemm:sprout_sort_general}, the running time is therefore
$O\bigl(n+\sum_{i=2}^{n}\log(i-f(i))+\sum_{i=2}^{n}\log\delta_i\bigr)$.
By the definition of $\reg$, there is a finger function $f$ for which these two sums equal $\log\reg(X)$.
Traversing the output leaves and their occurrence lists takes $O(n)$ additional time and yields a sorted ordering of $X$.
\end{proof}

Finally, we transform the sorting algorithm into a BST access algorithm in the same manner as in Section~\ref{sub:bst}.
We use the BST representation and markings of \cref{sub:bst}, with one node per distinct key in each recursive output.
When \textsc{TreeMerge} identifies two copies of a leaf, both corresponding BST nodes are marked before they are identified.
Thus, every key present in both input trees is marked in both.
Each input BST contains one node per distinct key. However, the same key may occur in both input trees, so a merge may encounter two nodes storing that key. We mark both nodes and combine them into a single node in the output BST. For example, if both input trees contain a node storing $3$, the output tree contains only one node storing $3$. Both input nodes count towards the merge cost. Unmarked subtrees retain their nodes and children. Combining equal-key nodes does not remove occurrences from the access sequence: the resulting BST still accesses a key once for each occurrence in $X$.

The extension preserves the access bound of \cref{theo:bst_merge_sort}.
In Blelloch and Dobson's construction~\cite{BlellochDobson2023}, marked nodes with the same key produce identical geometric points.
Unmarked subtrees contain no shared keys, so their structure and relative priorities remain unchanged.
The resulting set is therefore still arborally satisfied, with $O(n)$ points plus $O(1)$ per marked input node, giving the same asymptotic BST access cost.

\begin{theorem}
    \textsc{SproutSort} yields a deterministic offline BST algorithm for accessing $X$ with cost $O(n+\log\reg(X))$, also when $X$ contains duplicates.
\end{theorem}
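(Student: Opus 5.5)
The statement is essentially the combination of three earlier results, so the plan is to chain them.

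First, we obtain a BST merge sort. We run \textsc{SproutSort} on $X\in[U]^n$ using the tree $T$ of \cref{lemm:weighted_tree}, with weights $w(x)$ equal to the multiplicity of $x$. We represent every sprout forest in the BST form of \cref{sub:bst}, with one node per distinct key in each recursive output. We must check that each step $\textsc{Grow}(\textsc{Merge}(\cdot,\cdot))$ is a valid BST merge in the sense of Blelloch and Dobson:
\begin{itemize}
\item the marked nodes, namely the right spines plus the nodes touched by \textsc{TreeMerge} separations, form connected root-containing sets $S_1,S_2$;
\item every block boundary is marked, since \textsc{TreeMerge} must descend to any point where the two inputs interleave;
\item any key present in both inputs is exposed as a pair of colliding leaves, so both copies are marked and can be fused into a single output node.
\end{itemize}
Unmarked subtrees are untouched, so their contents are pairwise disjoint in keys.

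Second, we bound the cost. The number of marked nodes per merge is $O(1)$ times the root-list entries traversed plus the \textsc{TreeMerge} nodes processed. By the accounting of \cref{theo:grow_merge}, with the extra coin spent when two leaf copies are identified, the total merge cost is bounded by the running time. That running time is $O(n+\log\reg(X))$ by \cref{lemm:sprout_sort_general} together with \cref{theo:sprout_sort_general}. The latter uses $\texttt{dist}_T\le 8\log\delta_i+2E(X[i])$ and $\sum_x w(x)E(x)=O(n)$. Computing ranks and multiplicities beforehand costs $O(n+\log\Reg)$ via~\cite{BadoiuColeDemaineIacono2007}. That step is offline preprocessing and is not charged to the BST cost.

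Third, we transfer the merge cost to an access algorithm via \cref{theo:bst_merge_sort}, and this is the main obstacle. Blelloch and Dobson state the theorem for permutations, so I would re-examine their geometric construction. It maps each marked node in each merge to a point (key, time) and argues the resulting point set is arborally satisfied; this gives an offline BST execution with cost linear in the number of points. With duplicates, two marked nodes carrying the same key in one merge produce the same point, so the point set only shrinks. The satisfaction argument is local to rectangles spanned by pairs of points, and I would check that it uses only that unmarked subtrees keep their relative structure, which still holds because they share no keys. Each repeated access to $x$ in $X$ corresponds to a point at key $x$ at its time, so all $n$ accesses are represented. This yields $O(n)$ points plus $O(1)$ per marked node, and hence total cost $O(n+\log\reg(X))$. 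The algorithm is deterministic because neither \textsc{SproutSort} nor the tree of \cref{lemm:weighted_tree} uses randomness.
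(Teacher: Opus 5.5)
Your proposal is correct and follows essentially the paper's own argument. You build \textsc{SproutSort} as a BST merge sort with the duplicate-aware markings, bound the total merge cost by the running time of \cref{theo:sprout_sort_general}, and transfer it through the duplicate extension of \cref{theo:bst_merge_sort}, using that equal-key marked nodes give identical points and unmarked subtrees share no keys; the paper asserts this extension at the same level of detail you sketch.
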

\begin{proof}
We use the BST representation and markings of \cref{sub:bst}, identifying marked copies of the same key as described above.
Each recursive call performs a BST merge with the duplicate extension described above.
The number of marked input nodes is bounded by the work performed by \textsc{Grow} and \textsc{Merge}.
By \cref{theo:sprout_sort_general}, the total merge cost is therefore $O(n+\log\reg(X))$.
The duplicate extension of \cref{theo:bst_merge_sort} then gives an offline BST algorithm with the claimed access cost.
\end{proof}

\newpage
\appendix

\newpage
\section{SkeletonSort with no duplicates}
Let $X$ be an input list of arbitrary comparable elements that contains no duplicates. 
We present a simple, randomised algorithm that sorts $X$ in expected $O(n + \log \reg)$ time, without any prior knowledge of the ranks of elements in $X$. 
Our algorithm is based on Quicksort, which picks a uniformly random pivot $p$ and partitions $X$ according to whether elements are smaller or larger than $p$ (recursing on each part).
We design a tree data structure that supports both pivot sampling and partitioning, in expected $O(n + \log \reg)$ total time. 
By Theorem~\ref{theo:Reg_reg}, the algorithm is not only $\reg$-adaptive but also $\Reg$-adaptive.

\begin{algorithm}[H]
  \begin{algorithmic}
    \Function{Quicksort}{$X$}
        \If{$X$ is empty}
            \State \Return an empty list
        \EndIf
        \State $p \gets \operatorname{random-item}(X)$
        \State $L_\ell \gets \textsc{Quicksort}(\{x\in X:x<p\})$
        \State $L_r \gets \textsc{Quicksort}(\{x\in X:x>p\})$
        \State \Return the concatenation of $L_\ell$, the singleton list $[p]$, and $L_r$
    \EndFunction
  \end{algorithmic}
  \caption{Quicksort.}\label{algo:quicksort}
\end{algorithm}

\subparagraph{Skeleton Trees.}
We again assume that $n=2^k$ is a power of two (we may always pad the input with distinct increasing sentinel items declared larger than every key in $X$, without it asymptotically changing $O(n + \log \reg)$).
Let $R$ be the complete binary tree whose leaves are the indices $1,\dots,n$ in-order.
For a subset of indices $I\subseteq[n]$, we define the \emph{$I$-skeleton of $R$}, denoted $R_{I}$, as the tree obtained by keeping the leaves in $I$ and their ancestors, and shortcutting every node with only one child.
For $I=\emptyset$, put $R_{I}=\mathrm{nil}$.
We call such trees \emph{skeleton trees}, and they are the basis of our \textsc{SkeletonSort} algorithm.
Each internal node of a skeleton tree has exactly two children, and each node $u$ corresponds to an original node of $R$; the \emph{raw height} of $u$ is the number of edges from that original node to any of its descendant leaves in $R$.
We store each skeleton tree as a pointer structure with child and parent pointers.
A leaf labelled $i$ stores a pointer to the input item $X[i]$.
Each node stores its number of descendant leaves, together with the minimum and maximum of the input values $X[i]$ represented by those leaves.
For each internal node, these fields can be computed from the two children in constant time.

\begin{figure}[b]
    \includegraphics[]{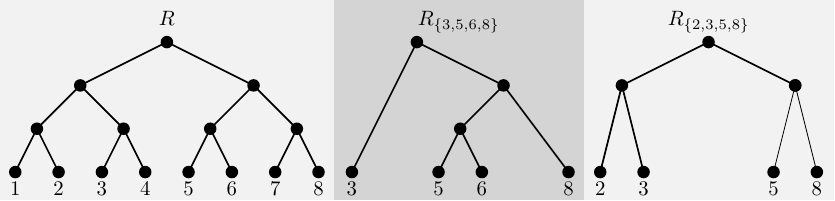}
    \caption{A complete binary tree and two skeleton trees.}
    \label{fig:skeleton_trees}
\end{figure}

\subparagraph{Random sampling.}
To simulate Quicksort, we need for any $I \subset [n]$ to be able to generate a uniformly random index $i \in I$. 
We show how to do this given $R_{I}$:
the root of this tree has stored $|I|$. Choose a uniformly random integer $j\in[|I|]$ and find the $j$-th leaf in-order, using the leaf counts to choose the next child.
This procedure \textsc{Random-Leaf} returns a pointer to an item sampled uniformly from $I$ and takes time proportional to the number of nodes on its search path. We allow constant-time arithmetic on $O(\log n)$-bit counters and generation of a uniformly random integer in a specified range of size at most $n$.

\subparagraph{Splitting along a pivot.}
Given some random index $i \in I$, this yields some pivot value $p = X[i]$. Quicksort then splits the input along this pivot.  
We simulate this split by defining $I^{<p}=\{i\in I:X[i]<p\}$ and $I^{=p}$ and $I^{>p}$ analogously.
The operation \textsc{Skeleton-Split}$(R_{I},p)$ (Algorithm~\ref{algo:skeleton_split}) consumes its input and returns the three disjoint skeletons
$R_{I^{<p}}$, $R_{I^{=p}}$, and $R_{I^{>p}}$.
In this code, whenever \textsc{Create-Node} creates a node, it updates the stored fields and the parent pointers of its children in constant time. Every returned non-nil tree has its root's parent pointer reset to nil.
When both children are non-nil, the new node corresponds to the same original node of $R$ as the node currently being partitioned.

\begin{algorithm}[H]
  \begin{algorithmic}
    \Function{Create-Node}{$S_\ell,S_r$}
        \If{both $S_\ell$ and $S_r$ are nil}
            \State \Return nil
        \EndIf
        \If{exactly one of $S_\ell,S_r$ is nil}
            \State \Return the non-nil tree
        \EndIf
        \State $u\gets$ a new node with children $S_\ell$ and $S_r$
        \State Update the stored fields of $u$ and the parent pointers of its children
        \State \Return $u$
    \EndFunction
    \vspace{0.5em}
    \Function{Skeleton-Split}{$S,p$}
        \If{$S$ is nil}
            \State \Return nil, nil, nil
        \EndIf
        \If{all leaves $i$ of $S$ satisfy $X[i]=p$}
            \State \Return nil, $S$, nil
        \EndIf
        \If{all leaves $i$ of $S$ satisfy $X[i]<p$}
            \State \Return $S$, nil, nil
        \EndIf
        \If{all leaves $i$ of $S$ satisfy $X[i]>p$}
            \State \Return nil, nil, $S$
        \EndIf
        \State $S_\ell^{<},S_\ell^{=},S_\ell^{>} \gets \textsc{Skeleton-Split}(S.\ell,p)$
        \State $S_r^{<},S_r^{=},S_r^{>} \gets \textsc{Skeleton-Split}(S.r,p)$
        \State $S^{<}\gets\textsc{Create-Node}(S_\ell^{<},S_r^{<})$
        \State $S^{=}\gets\textsc{Create-Node}(S_\ell^{=},S_r^{=})$
        \State $S^{>}\gets\textsc{Create-Node}(S_\ell^{>},S_r^{>})$
        \State Discard the old node $S$
        \State \Return $S^{<},S^{=},S^{>}$
    \EndFunction
  \end{algorithmic}
  \caption{Partitioning skeleton trees.}\label{algo:skeleton_split}
\end{algorithm}

\subsection{SkeletonSort.}
Given the above two primitives, we can simulate \emph{Quicksort} without using linear time per recursive call: 
We construct $R_{[n]}=R$ and its stored fields in $O(n)$ time, then run \textsc{SkeletonSort} on its root (\Cref{algo:skeleton_quicksort}).
Since the values are distinct, $S^{=}$ consists of the single leaf storing the pivot. Output lists have head and tail pointers, so each concatenation takes $O(1)$ time.

\begin{algorithm}[h]
  \begin{algorithmic}
    \Function{SkeletonSort}{$S$}
        \If{$S$ is nil}
            \State \Return an empty list
        \EndIf
        \State $p\gets$ the item returned by $\textsc{Random-Leaf}(S)$
        \State $S^{<},S^{=},S^{>} \gets \textsc{Skeleton-Split}(S,p)$
        \State $L_\ell \gets \textsc{SkeletonSort}(S^{<})$
        \State $L_r \gets \textsc{SkeletonSort}(S^{>})$
        \State \Return the concatenation of $L_\ell$, the singleton list $[p]$, and $L_r$
    \EndFunction
  \end{algorithmic}
  \caption{SkeletonSort.}\label{algo:skeleton_quicksort}
\end{algorithm}

To analyse this algorithm,  fix a random seed and form a tree $T_0$ as follows.
Create one node for every non-empty recursive invocation of \textsc{SkeletonSort}.
Its children are the nodes for its non-empty recursive invocations, and the node is labeled by its pivot.
Then, $T_0$ is a binary search tree that contains the values of $X$ in sorted order.

For each node $v$ of $T_0$, let $h(v)$ be the height of the subtree of $v$, i.e. the maximum number of edges from $v$ to a descendant leaf.
The node $v$ is labeled by some pivot $x$.
Remove this label and instead attach a path of length $h(v)$ to $v$ whose leaf is labeled by $x$.
Finally, replace every parent--child edge $vu$ by a path of $h(v)-h(u)$ edges, inserting nodes with one child, and call the resulting tree $T$.
Every path from an original node $v$ to a descendant leaf now has length $h(v)$, so all leaves of $T$ have the same depth.
The resulting tree $T$ is used only in the analysis; the algorithm does not construct it.

\begin{tcolorbox}[
    colback=gray!8,
    colframe=gray!35,
    boxrule=0.4pt,
    arc=2mm
]
We again use graph distances and slightly overload our notation from previous sections.
Here, $\texttt{dist}_R(i,j)$ is the distance between the index leaves $i,j$ in $R$ which is the complete binary tree, and
$\texttt{dist}_T(i,j)$ is the distance between the value leaves $X[i],X[j]$ in $T$.
Recall that a finger function is a map $f:[n]\setminus\{1\}\to[n]$ with $f(i)<i$.
\end{tcolorbox}
\begin{lemma}\label{lemm:quicksort_analysis}
Fix a random seed and let $R$ and $T$ be as above.
For any finger function $f$, the running time of \textsc{SkeletonSort} is
$O\bigl(n+\sum_{i=2}^{n}(\texttt{dist}_R(i,f(i))+\texttt{dist}_T(i,f(i)))\bigr)$.
\end{lemma}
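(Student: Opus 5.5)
The plan is to mirror the charging argument of \cref{lemm:sprout_sort}, with the roles of index space and value space swapped. Each recursive call of \textsc{SkeletonSort} is a node $v$ of $T_0$. Its work is \textsc{Random-Leaf} plus \textsc{Skeleton-Split} on a skeleton $S=R_I$, where $I$ is the set of indices whose values lie in the value interval of $v$.

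\textbf{Step 1: bound each call by the split work.} \textsc{Random-Leaf} follows one root-to-leaf path in $S$. That path is also visited by \textsc{Skeleton-Split}, because every node on it contains the pivot leaf. Such a node is either split further or is the node where all leaves equal $p$, which is the pivot leaf itself. So the cost of a call is $O(1)$ plus the number of nodes of $S$ visited by \textsc{Skeleton-Split}. A visited node of $S$ is recursed into exactly when it is \emph{mixed}: its leaves do not all lie on one side of $p$ (counting $=p$ as its own side). Every visited node is a mixed node or a child of one, so the work is $O(1+\#\text{mixed nodes})$. Summed over the $n$ calls, the $O(1)$ terms give $O(n)$.

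\textbf{Step 2: charge each mixed node to an index.} Take a mixed node $u$ in the call at $v$, with leaf set $C_u\subseteq I$. Both children of $u$ are nonempty, so I charge $u$ to the smallest $i\ge 2$ in $C_u$ such that $f(i)\notin C_u$, or such that $f(i)\in C_u$ but $X[i]$ and $X[f(i)]$ lie on different sides of $p$ (or one of them is $p$). If no such $i$ exists, then every $i\ne\min C_u$ is on the same side as its finger, and fingers point to smaller indices, so all of $C_u$ lies on one side of $p$. This contradicts $u$ being mixed unless $\min C_u=1$, so such nodes are charged to item $1$.

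\textbf{Step 3: count the charges per index.} Fix $i$ and the root-to-leaf path of its value leaf in $T_0$, i.e.\ the calls containing $i$. At each such call, $i$ lies in at most one mixed node per raw height, namely the ancestors of leaf $i$ in $R$. I split the charges to $i$ into two kinds.
\begin{itemize}
\item \emph{Type (a):} $f(i)$ is absent from $u$. Either $f(i)\notin I$ (it was separated earlier), or $f(i)\in I$ but the raw height of $u$ is below $\texttt{dist}_R(i,f(i))/2$.
\item \emph{Type (b):} $i$ and $f(i)$ are in the same skeleton node but on different sides of the current pivot $p$. This happens at most once, at the call where they are separated; that call is at their LCA in $T_0$. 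Within this call, $i$ is charged at most once per ancestor of leaf $i$ in $S$, but only the ancestors at raw height at least $\texttt{dist}_R/2$ can contain $f(i)$. That is still $O(\log n)$ charges, which is too many, so this needs care.
\end{itemize}

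\textbf{Main obstacle.} The danger is that $f(i)$ has already been split off while $i$ continues through many calls. In each such call, $i$ may be charged for up to $\texttt{dist}_R(i,f(i))/2$ levels, and the number of such calls equals the depth of leaf $X[i]$ below the LCA of $X[i],X[f(i)]$ in $T_0$. Multiplying these counts is too weak. The fix I would try is to refine the charging so that within one call at most one node is charged to $i$ through the missing-finger rule; a natural choice is the highest mixed ancestor of $i$ lacking $f(i)$, with all other mixed nodes charged via a different index in the same subtree. I would then argue, as in \cref{lemm:sprout_sort}, that each call on the path of $X[i]$ below the LCA is charged $O(1)$ to $i$. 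That gives $O(\texttt{dist}_T(i,f(i)))$, because the padded tree $T$ gives this tail length $h$-many edges. Separately, the at most $\texttt{dist}_R/2$ levels in the call at the LCA give the $\texttt{dist}_R$ term. Making this refined charging injective is the step I expect to need the most work. Adding item $1$'s $O(\log n)$ charges and the $O(n)$ from Step 1 gives the claimed bound.
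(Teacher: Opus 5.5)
\textbf{Gap: the per-index charge bound in Step 3 is never established.} Your Step 1 matches the paper. But the substance of the lemma is Step 3, and there you stop at the obstacle you identified. The refined charging is only sketched and never constructed, so the proposal does not prove the bound.

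The obstacle is also real for your rule, not an artefact of a loose count. Take $f(i)=i-1$. Then every mixed node $u$ with $1\notin C_u$ is charged to $\min C_u$. Let $X[1]=1$, let $X[j]=j-1$ for $3\le j\le h+2$, let $X[2]$ be larger than all of these, and fix a seed whose successive pivots are $X[1],X[3],X[4],\dots,X[h+2]$, with the remaining recursion balanced. In the call with pivot $X[j]$, every skeleton node containing both leaves $2$ and $j$ is mixed and has leftmost leaf $2$. There are about $\log n-\log j$ such nodes. So index $2$ collects $\Omega(h\log(n/h))$ charges, whereas its budget $1+\texttt{dist}_R(2,1)+\texttt{dist}_T(2,1)$ is only $O(h+\log n)$. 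For $h=\sqrt n$ this is off by a $\log n$ factor, so no per-index analysis of your rule can succeed.

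\textbf{Missing idea: a height threshold tied to the current invocation.} Let $r$ be the current invocation, a node of $T_0$. The paper colours a skeleton node red if its raw height is at least $h(r)$, the height of $r$ in $T_0$, and blue otherwise.

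Red charges: the red nodes form a top subtree with $b-1$ nodes above $b$ maximal blue subtrees. So the per-call overhead and all red splits are paid by charging the leftmost leaf of each maximal blue subtree once. An index $i\ge 2$ can be such a leader only if $h(r)\le\max\{\texttt{dist}_T(i,f(i)),\texttt{dist}_R(i,f(i))/2\}$. Otherwise $i$ and $f(i)$ are both present and lie in the same blue subtree, and $f(i)<i$. Since $h(r)$ strictly decreases along the path of $i$ in $T_0$, this gives $O(1+\texttt{dist}_R(i,f(i))+\texttt{dist}_T(i,f(i)))$ red charges to $i$.

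Blue charges: a split blue node is charged not to its current leader but to the leader of a different nonempty output tree. Once $i$ becomes the leftmost surviving leaf below an original node $u$ of $R$, it stays so for the rest of the algorithm. Hence $i$ is charged at most once per ancestor $u$. Moreover, this happens only when the raw height of $u$ is below $h(r)\le\texttt{dist}_T(i,f(i))$ or below $\texttt{dist}_R(i,f(i))/2$.

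This one device handles both your type (a) and type (b) charges. The high nodes at the separating call that worried you are red, and they are paid from the per-call leader budget rather than charged to $i$.
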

\begin{proof}
Consider a non-empty invocation of \textsc{SkeletonSort} on a skeleton tree $S$,
where \textsc{Random-Leaf} selects leaf $i$ and pivot $p=X[i]$. The subsequent \textsc{Skeleton-Split} must traverse the entire path
from the root of $S$ to $i$, since every subtree along this path
contains $p$.
Thus, the time spent on \textsc{Random-Leaf}  is bounded by the time spent on \textsc{Skeleton-Split}.

What remains is to bound that time. 
A node \emph{gets split} if none of the three early stopping conditions apply (in those cases, the call takes constant time and does not recurse).
If $s$ nodes get split, the partitioning procedure visits $2s+1$ nodes and takes $O(1+s)$ time.

Singleton invocations take $O(1)$ time each, so we consider only invocations with at least two items. Recall that, for our fixed seed, the recursion tree $T_0$ represents the execution of Quicksort that \textsc{SkeletonSort} simulates.
Let $r$ be the node of $T_0$ corresponding to the current invocation $\textsc{SkeletonSort}(S)$, and let $h(r)$ denote its height in $T_0$.
Color a node of $S$ red if its raw height is at least $h(r)$, and blue otherwise.
Since $h(r)\geq1$, all leaves are blue.
If present, the red nodes form a connected subtree containing the root of $S$, with maximal blue subtrees hanging below it. If there are no red nodes, the entire tree $S$ is the sole maximal blue subtree.
If there are $b$ maximal blue subtrees, there are $b-1$ red nodes, as every red node has two children.

The following analysis is very similar to running \textsc{SproutSort}, and specifically, \cref{theo:grow_merge}, in reverse.
It might be useful to think of the red nodes as the root list of a sprout forest,
and each maximal blue subtree as a sprout tree.
(This analogy only works for the analysis. We cannot run \textsc{SkeletonSort} with sprout trees as, due to the random pivot selection, we do not know the height $h(r)$ of the current invocation during the algorithm.)

We charge the work of \textsc{Skeleton-Split} to leaf indices of $S$.
Define the \emph{leader} of a subtree of $S$ to be its leftmost leaf.
For each of the $b$ maximal blue subtrees, charge its leader once.
These $b$ charges cover the constant overhead and the cost of splitting all red nodes.
Next, consider a blue node $u$ that gets split by \textsc{Skeleton-Split}.
At least two of its three output trees are nonempty.
The leader of $u$ remains the leader of one of these output trees.
We charge the work at $u$ to the leader of another nonempty output tree.
Thus, every blue split is charged to an index that becomes a new leader.

We first consider charges to index $1$.
It receives at most one red charge per non-empty invocation of \textsc{SkeletonSort}, and therefore $O(n)$ red charges in total.
It receives no blue charges, since whenever its leaf is present, it is already the leader of its subtree.

To bound the remaining charges, fix $i\geq2$ and consider red charges to $i$.
If the current height $h(r)$ exceeds both
$\texttt{dist}_T(i,f(i))$ and $\texttt{dist}_R(i,f(i))/2$,
then the current input $S$ contains both indices $i$ and $f(i)$, and their common ancestor in $R$ is blue.
They therefore lie in the same maximal blue subtree.
Since $f(i)<i$, index $i$ cannot be its leader and receives no red charge.
The heights decrease strictly along the recursive path containing $i$,
and $i$ receives at most one red charge per invocation.
Hence $i$ receives $O(1+\texttt{dist}_R(i,f(i))+\texttt{dist}_T(i,f(i)))$ red charges.

Next consider a blue split charged to $i$, and let $u$ be the corresponding
original node in $R$, of raw height $k$.
If the leaf $f(i)$ is not a descendant of $u$ in $R$, then
the lowest common ancestor of $i$ and $f(i)$ is a strict ancestor of $u$,
so $k < \texttt{dist}_R(i,f(i))/2$.
Otherwise, consider the output subtree whose new leader is $i$.
Since $f(i)<i$, this subtree cannot contain the leaf $f(i)$.
Thus, the current or an earlier Quicksort pivot must have separated
$X[i]$ and $X[f(i)]$.
Their lowest common ancestor in $T_0$ is therefore the current
invocation or one of its ancestors, so
$h(r) \leq \texttt{dist}_T(i,f(i))$.
Since the charged node is blue, $k<h(r)$.
In either case, $k<\max\{\texttt{dist}_R(i,f(i)) / 2,\texttt{dist}_T(i,f(i))\}$.

Once a blue split at an original node $u$ is charged to $i$,
the leaf $i$ becomes the leftmost surviving leaf below $u$ in its output tree.
Every subsequent invocation containing $i$ operates on a subset of these leaves,
so $i$ remains the leftmost surviving leaf below $u$.
In other words, $i$ is only charged when it becomes the leader of a node $u$,
after which $i$ remains the leader of that node as long as $u$ remains blue.
Thus, $i$ receives at most one blue charge at each original node $u \in R$.
Since there is at most one ancestor of $i$ at each raw height, and all charged nodes have raw height
$k<\max\{\texttt{dist}_R(i,f(i))/2,\texttt{dist}_T(i,f(i))\}$,
index $i$ receives
$O(1+\texttt{dist}_R(i,f(i))+\texttt{dist}_T(i,f(i)))$
blue charges in total.

Summing the red and blue charges over all indices,
and adding the $O(n)$ initialization and output costs, proves the lemma.
\end{proof}

\begin{lemma}\label{lemm:random_LCA}
Let $X$ be a list of $n$ distinct elements and let $T$ be
as above.
For any distinct indices $i,j\in[n]$, we have
$\mathbb{E}[\texttt{dist}_T(i,j)]
\in O(1+\log|\rank_X(X[i])-\rank_X(X[j])|)$.
\end{lemma}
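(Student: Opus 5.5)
Since all leaves of $T$ lie at the same depth and the lowest common ancestor of the leaves $X[i]$ and $X[j]$ in $T$ is the original node $v$ of $T_0$ at which the two values are first separated, we have $\texttt{dist}_T(i,j)=2h(v)$. Here $v$ is the recursive Quicksort invocation whose pivot is the first one chosen with rank in the closed interval between $\rank_X(X[i])$ and $\rank_X(X[j])$. Write $a<b$ for these two ranks and $d=b-a$. The plan is to bound $\mathbb{E}[h(v)]$ by $O(1+\log d)$, using that the height of a random BST (random-pivot Quicksort recursion tree) on $m$ keys is $O(\log m)$ in expectation, indeed with exponentially decaying tail beyond $c\log m$.

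\textbf{Step 1: the subproblem at $v$.} Let $I_v$ be the rank interval handled by $v$; it contains $[a,b]$. Conditioned on $I_v$, the subtree of $v$ is a random BST on $|I_v|$ keys, so $\mathbb{E}[h(v)\mid |I_v|=m]\le c(1+\log m)$. By Jensen it then suffices to bound $\mathbb{E}[\log |I_v|]$ by $O(1+\log d)$.

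\textbf{Step 2: $I_v$ is unlikely to be much larger than $d$.} This is the main step. Before $v$, every pivot chosen in an ancestor invocation whose range contains $[a,b]$ misses $[a,b]$. Consider the part of $I_v$ to the right of $b$, of length $R$. Here $R\ge t$ requires that none of the ranks $b+1,\dots,b+t$ was chosen as a pivot before any rank in $[a,b]$. This is the standard random-permutation fact: among the $d+1+t$ ranks in $[a,b+t]$, the first chosen must lie in $[a,b]$, which happens with probability $(d+1)/(d+1+t)$. Hence
\[
\Pr[R\ge t]\le \frac{d+1}{d+1+t},
\]
and symmetrically for the left part $L$. Then
\[
\mathbb{E}[\log(1+R/(d+1))]=\sum_k \Pr[R\ge (2^k-1)(d+1)]\le \sum_k O(2^{-k})=O(1),
\]
and the same holds for $L$. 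Since $|I_v|=L+R+d+1$, we get $\mathbb{E}[\log|I_v|]\le \log(d+1)+O(1)$. Combining this with Step 1 gives $\mathbb{E}[\texttt{dist}_T(i,j)]=2\,\mathbb{E}[h(v)]\in O(1+\log d)$.

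The expected obstacle is making Step 1 rigorous. One must justify that, conditioned on the history up to reaching $v$, the subtree is a uniformly random BST on $I_v$, which holds because pivots are chosen independently in each invocation. One then needs the bound $\mathbb{E}[\text{height}]=O(\log m)$ for random BSTs (Devroye), or an elementary substitute such as $\mathbb{E}[2^{h}]\le \mathrm{poly}(m)$ via the recursion $\mathbb{E}[Y_m]\le \frac{4}{m}\sum_{k<m}\mathbb{E}[Y_k]$ with $Y_m=2^{h}$, followed by Jensen, $\mathbb{E}[h]\le \log \mathbb{E}[2^h]$. The elementary route gives $\mathbb{E}[h\mid m]=O(\log m)$ directly.
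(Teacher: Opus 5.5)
Your overall plan matches the paper's. The lowest common ancestor is the first pivot with rank in $[a,b]$, and $\texttt{dist}_T(i,j)=2h(v)$. The excess of the relevant subproblem beyond $b$ has tail $\Pr[R\ge t]\le (d+1)/(d+1+t)$, which is exactly the paper's $\Pr(B>t)=(d+1)/(t-a+1)$. Dyadic summation then costs only $O(1)$ beyond $\log(d+1)$, and the $O(\log m)$ expected height of a random BST finishes the argument. Step~2 is correct.

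Step~1, however, is false as stated. You define $v$ as the first invocation whose pivot lands in $[a,b]$, so the event $\{I_v=I\}$ depends on the pivot chosen at $v$. Conditioned on $I_v=I$, the root of the subtree of $v$ is uniform on $[a,b]$, not on $I$, so that subtree is not a uniformly random BST on $I$. For example, with $n=3$, $[a,b]=\{1,2\}$ and $I_v=\{1,2,3\}$, its expected height is $3/2$, whereas a random BST on three keys has expected height $5/3$. Your justification that pivots are chosen independently in each invocation does not cover this, because the conditioning event itself involves the pivot at $v$.

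The repair is to apply the random-BST bound one level lower, which is how the paper sets it up. Condition on $I_v$ and on the pivot $k\in[a,b]$ at $v$. The two child subtrees are then uniformly random BSTs on the parts of $I_v$ below and above $k$, each with fewer than $|I_v|$ keys. Hence $h(v)\le 1+h_L+h_R$ gives $\mathbb{E}[h(v)\mid I_v,k]\le 1+2c(1+\log|I_v|)$, and your Step~2 applies unchanged. The paper does the same by conditioning on $k$ and the right boundary $B$, bounding the right subtree on ranks $k+1,\dots,B-1$, and arguing symmetrically on the left.

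One small point: the reduction to $\mathbb{E}[\log|I_v|]$ uses the tower property, not Jensen. It matters that you bound $\mathbb{E}[\log|I_v|]$ directly, since $\log\mathbb{E}[|I_v|]$ can exceed $\log d$ by a $\log\log(n/d)$ term.
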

\begin{proof}
We use the model of random binary search trees~\cite{SeidelAragon1996} where each rank
$x\in[n]$ receives an independent uniform priority $p_x\in[0,1]$,
and the root of every subtree is the rank with the largest priority.
This has the same distribution as the Quicksort recursion tree $T_0$.

Let $a<b$ be the ranks of $X[i]$ and $X[j]$ in sorted order,
and put $d=b-a$.
Note that $d=|\rank_X(X[i])-\rank_X(X[j])|$.
Let $k$ be the rank with the largest priority among $[a,b]$.
Then the invocation with pivot $k$ is the lowest common ancestor
of $X[i]$ and $X[j]$ in $T_0$.

Let $B$ be the first rank greater than $b$ whose priority
exceeds $p_k$ (or $n+1$ if no such rank exists).
Since $k$ has the largest priority in $[a,b]$,
the right subtree of $k$ contains exactly the elements with rank
$k+1,\dots,B-1$.
Conditioned on $k$ and $B$, this is a random binary search tree.
The expected height of a random binary
search tree with $s$ keys is $\le 5 \log(2+s)$, see \cite[Chapter 12.4]{CLRS2009}.
Thus, the expected height of the right subtree, conditioned on $k$ and $B$, is
$\le 5\log(2+B-a)$.
For any $t\in[b,n]$, the event $B>t$ occurs precisely when
the largest priority among $[a,t]$ belongs to $[a,b]$.
Thus,
$\Pr(B>t)=(d+1)/(t-a+1)$.
In particular, for every integer $q\geq0$,
$\Pr(B-a>2^q(d+1))\leq 2^{-q}$.
Summing over these exponentially increasing ranges gives
$\mathbb{E}[\log(2+B-a)]\in O(1+\log d)$.
Hence the expected height of the right subtree is $O(1+\log d)$.
By a symmetric argument, the expected height of the left subtree is also $O(1+\log d)$.
The height of $k$ is one larger than the maximum height of the two subtrees,
thus $\mathbb{E}[h(k)] \in O(1+\log d)$.

Recall that $T$ is obtained from $T_0$ by extending edges
until all leaves have the same depth.
Thus, $\texttt{dist}_T(i,j) = 2 h(k)$.
Therefore, $\mathbb{E}[\texttt{dist}_T(i,j)]\in O(1+\log d)$,
which proves the lemma.
\end{proof}

\begin{lemma}\label{lemm:skeleton_distance}
Let $X$ be a list of $n$ distinct elements and let $f$ be any finger function.
Then
\begin{align*}
\sum_{i=2}^{n}\texttt{dist}_R(i,f(i))
&\in O\left(n+\sum_{i=2}^{n}\log(i-f(i))\right),\\
\mathbb{E}\left[\sum_{i=2}^{n}\texttt{dist}_T(i,f(i))\right]
&\in O\left(n+\sum_{i=2}^{n}
\log \max\{1, |\rank_X(X[i])-\rank_X(X[f(i)])|\}\right).
\end{align*}
\end{lemma}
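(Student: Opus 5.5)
The two bounds are proved separately. The first is deterministic and concerns only the complete binary tree $R$ on the index set $[n]$. The second follows from \cref{lemm:random_LCA} by linearity of expectation.

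For the $\texttt{dist}_R$ bound, I would reuse the group-counting argument from the proof of \cref{lemm:complete_distance}, now with indices in place of values. Build $R$ bottom-up: in round $k$ the leaves are grouped into blocks of $2^k$ consecutive indices. Then $\texttt{dist}_R(i,f(i))$ is twice the number of rounds in which $i$ and $f(i)$ lie in different blocks. Write $\delta_i = i-f(i)\geq 1$.
\begin{itemize}
\item Rounds with $2^k\leq\delta_i^2$ number at most $1+2\log\delta_i$ for each $i$.
\item In a round with $2^k>\delta_i^2$ where $i$ and $f(i)$ are still separated, $f(i)<i$ forces $i$ to lie among the first $\delta_i<2^{k/2}$ positions of its block. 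Each index $i$ occurs once, so round $k$ contains at most $(n/2^k)\cdot 2^{k/2}$ such indices.
\end{itemize}
Summing the geometric series $n\sum_k 2^{-k/2}$ gives $O(n)$. Together this yields $O(n+\sum_i\log(i-f(i)))$. The one-sidedness ($f(i)<i$) makes this case slightly easier than the value case, but nothing new is needed. The argument is fully deterministic and valid for every $n$ a power of two.

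For the expected $\texttt{dist}_T$ bound, apply \cref{lemm:random_LCA} to each pair $(i,f(i))$. These are distinct indices because $f(i)<i$, and $X$ has no duplicates. This gives
\[
\mathbb{E}[\texttt{dist}_T(i,f(i))]\in O\bigl(1+\log|\rank_X(X[i])-\rank_X(X[f(i)])|\bigr).
\]
Since the values are distinct, the rank difference is at least $1$, so the logarithm equals $\log\max\{1,\cdot\}$. Linearity of expectation over $i=2,\dots,n$ sums the additive constants to $O(n)$, which gives the claim. The hidden constant in \cref{lemm:random_LCA} is uniform over pairs, so the summation is legitimate.

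The main obstacle lies in the first part, namely checking that the boundary-counting argument from \cref{lemm:complete_distance} transfers verbatim. Two points need care. First, the threshold $2^k>\delta_i^2$ must be used so that $\delta_i<2^{k/2}$. Second, the count of boundary indices per block must be bounded by the block boundary width rather than by $\delta_i$ summed over indices. The second part is essentially immediate from the previous lemma.
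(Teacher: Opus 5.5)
Your proposal is correct and matches the paper's proof. The paper obtains the second bound exactly as you do, by linearity of expectation applied to \cref{lemm:random_LCA}. For the first bound it simply cites \cref{lemm:complete_distance}, whose statement already contains the $\texttt{dist}_R$ inequality, so your re-derivation of the block-counting argument is sound but unnecessary.
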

\begin{proof}
The first bound follows directly from \cref{lemm:complete_distance},
since $R$ is the complete binary tree whose leaves are the
indices $1,\dots,n$ in-order. For the second bound, we apply linearity of expectation and \cref{lemm:random_LCA}.
\end{proof}

\begin{theorem}\label{theo:skeleton_sort}
For any list $X$ of $n$ distinct elements, \textsc{SkeletonSort}
sorts $X$ in expected $O(n+\log\reg(X))$ time.
\end{theorem}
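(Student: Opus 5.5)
The plan is to combine \cref{lemm:quicksort_analysis} with \cref{lemm:skeleton_distance}, exactly as \textsc{SproutSort} was analysed, except that the tree $T$ is now random.

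First, fix any finger function $f$ that attains the minimum in the definition of $\reg(X)$. This $f$ depends only on $X$, not on the random seed. For every fixed seed, \cref{lemm:quicksort_analysis} bounds the running time by
\[
c\Bigl(n+\sum_{i=2}^{n}\bigl(\texttt{dist}_R(i,f(i))+\texttt{dist}_T(i,f(i))\bigr)\Bigr)
\]
for an absolute constant $c$. Here $R$ is deterministic and only $T$ depends on the seed. Taking expectations over the seed, and using linearity of expectation, the expected running time is at most $c$ times
\[
n+\sum_{i=2}^{n}\texttt{dist}_R(i,f(i))+\mathbb{E}\Bigl[\sum_{i=2}^{n}\texttt{dist}_T(i,f(i))\Bigr].
\]

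Second, apply \cref{lemm:skeleton_distance} to the two sums. The first sum is $O(n+\sum_i\log(i-f(i)))$. The expected second sum is $O(n+\sum_i\log\max\{1,|\rank_X(X[i])-\rank_X(X[f(i)])|\})$. We may assume $X$ has been replaced by its ranks, as in the introduction, since the algorithm only compares elements. The combined bound is then $O(n+\sum_i\log((i-f(i))\max\{1,|X[i]-X[f(i)]|\}))$, which equals $O(n+\log\reg(X))$ by the choice of $f$.

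Third, account for the padding. We padded $X$ to length $2^k$ with at most $n$ increasing sentinels larger than all keys. The original finger function extends to the padded input by letting each sentinel use the previous index as its finger. Each such finger has index distance $1$ and rank distance $1$, except possibly the first sentinel, whose factor is at most $O(n^2)$. The padded $\log\reg$ therefore grows by only $O(\log n)$. The padded length is at most $2n$, so the bound stays $O(n+\log\reg(X))$. Building $R$ and concatenating the output lists cost $O(n)$.

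The one step needing care is that the finger function must be chosen independently of the randomness. This is what lets the expectation pass through \cref{lemm:quicksort_analysis}, whose bound holds for every $f$ and every seed. Apart from that, the proof is direct bookkeeping on top of the preceding lemmas.
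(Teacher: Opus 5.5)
Your proof is correct and is essentially the paper's argument: fix a finger function independent of the seed, apply \cref{lemm:quicksort_analysis} for every seed, take expectations via \cref{lemm:skeleton_distance}, and choose $f$ attaining $\reg(X)$. Your padding accounting just spells out what the paper asserts in a one-line remark when introducing skeleton trees; the first sentinel's factor is in fact at most $n$, which is tighter than your $O(n^2)$ but changes nothing.
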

\begin{proof}
Let $f$ be an arbitrary finger function.
By \cref{lemm:quicksort_analysis}, for every fixed random seed,
the running time of \textsc{SkeletonSort} is
$O(n+\sum_{i=2}^{n}
(\texttt{dist}_R(i,f(i))+\texttt{dist}_T(i,f(i))))$.

Taking expectations and applying \cref{lemm:skeleton_distance},
the expected running time is
\begin{align*}
O\left(
n+\sum_{i=2}^{n}\log(i-f(i))
+\sum_{i=2}^{n}\log\max\{1, |\rank_X(X[i])-\rank_X(X[f(i)])|\}
\right).
\end{align*}
By the definition of $\reg$, there exists a finger function $f$
for which these two sums equal $\log\reg(X)$.
The theorem follows.
\end{proof}

\newpage
\section{Notions of presortedness}

{
\footnotesize
\setlength{\tabcolsep}{3.5pt}
\renewcommand{\arraystretch}{1.22}
\setlength{\LTleft}{-0.15\linewidth}
\setlength{\LTright}{-0.075\linewidth}
\begin{longtable}{@{}>{\raggedright\arraybackslash}p{0.207\linewidth}
                    >{\raggedright\arraybackslash}p{0.276\linewidth}
                    >{\raggedright\arraybackslash}p{\dimexpr0.8671\linewidth-5.2\tabcolsep\relax}@{}}
\caption{Measures of presortedness and their target adaptive bounds. Throughout, $X:[n]\to[n]$ is surjective onto $[\max(X)]$, $\ol{X}$ is its stable rank permutation, and logarithms are base two. Zero-frequency entropy terms contribute zero. The notation $X=X_1\sqcup\cdots\sqcup X_k$ denotes a partition into disjoint, possibly interleaved subsequences.
For brevity, increasing means non-decreasing, and decreasing means non-increasing.}
\label{tab:presortedness}\\
\toprule
Name (aliases) & Adaptive bound & Mathematical definition + intuition \\
\midrule
\endfirsthead
\toprule
Name (aliases) & Adaptive bound & Mathematical definition + intuition \\
\midrule
\endhead
\midrule
\multicolumn{3}{r}{\emph{Continued on next page}}\\
\endfoot
\bottomrule
\endlastfoot

$\Ham$
  ($\mathrm{Pos}$) \cite{Mannila1985}
& $O(n+\Ham\log(\Ham+1))$
& $\displaystyle \Ham(X)=|\{i\in[n]\mid \ol{X}[i]\neq i\}|$.
  \par\smallskip
  \emph{Number of elements outside their final positions.} \\

$\Exc$
  \cite{Mannila1985}
& $O(n+\Exc\log(\Exc+1))$
& $\displaystyle \Exc(X)=n-c(X)$, where $c(X)$ is the number of cycles in $i\mapsto \ol{X}[i]$.
  \par\smallskip
  \emph{Minimum number of swaps to reach the stable sorted order; for distinct keys, the minimum number of swaps to sort $X$.} \\

$\Rem$
  ($\mathrm{Las}$) \cite{CookKim1980,Mannila1985}
& $O(n+\Rem\log(\Rem+1))$
& $\displaystyle \Rem(X)=\min\{|Y|:Y\subseteq X\text{ and }X-Y\text{ is increasing}\}=n-\mathrm{Las}(X)$.
  \par\smallskip
  \emph{Minimum number of deletions needed to leave $X$ sorted.} \\

\midrule

$\Max$
  \cite{Mannila1985}
& $O\!\left(n\bigl(1+\log(\Max+1)\bigr)\right)$
& $\displaystyle \Max(X)=\max_{i\in[n]}|\ol{X}[i]-i|$.
  \par\smallskip
  \emph{Maximum distance from an element's final position.} \\

$\Par$
  ($\mathrm{Dis}$, $\mathrm{Radius}$) \cite{EstivillCastroWood1989,AltmanIgarashi1989}
& $O\!\left(n\bigl(1+\log(\Par+1)\bigr)\right)$
& $\displaystyle \Par(X)=\max\{j-i:1\leq i<j\leq n,\ X[i]>X[j]\}$, with $\max\emptyset=0$.
  \par\smallskip
  \emph{Maximum positional distance between an inverted pair.} \\

$\Inv$ \cite{Mannila1985}
& $O\!\left(n+n\log\!\left(1+\frac{\Inv}{n}\right)\right)$
& $\displaystyle \Inv(X)=|\{(i,j):1\leq i<j\leq n,\ X[i]>X[j]\}|$.
  \par\smallskip
  \emph{Number of inverted pairs.} \\

$\DS$
  ($\mathrm{Spear}$) \cite{DiaconisGraham1977}
& $O\!\left(n+n\log\!\left(1+\frac{\DS}{n}\right)\right)$
& $\displaystyle \DS(X)=\sum_{i=1}^n |\ol{X}[i]-i|$.
  \par\smallskip
  \emph{Total distance from elements' final positions.} \\

$\HDS$
  ($\mathrm{HD}$) \cite{BaiCoester2023}
& $O(n\HDS)$
& $\displaystyle \HDS(X)=\frac1n\sum_{i=1}^n
    \log\bigl(|\ol{X}[i]-i|+2\bigr)$.
  \par\smallskip
  \emph{Average logarithmic distance from elements' final positions.} \\

\midrule
$\nMulti$
   \cite{MunroSpira1976}
& $O(n \log (1+\nMulti))$
& $\nMulti = \max(X) = |\{x \in X\}|$.
  \par\smallskip
  \emph{Number of distinct values.} \\
  
$\HMulti$
   \cite{MunroSpira1976}
& $O(n (1+\HMulti))$
& $\HMulti = \sum_{i=1}^{n} \frac{m_i}{n} \log \frac{n}{m_i}$, where $m_i$ is the number of occurrences of $i$ in $X$.
  \par\smallskip
  \emph{Entropy of the value multiplicities.} \\
  
\midrule

$\Block$
  \cite{CarlssonLevcopoulosPetersson1990}
& $O(n+\Block\log \Block)$
& $\displaystyle \Block(X)=1+|\{i\in[n-1]:\ol{X}[i+1]\neq \ol{X}[i]+1\}|$.
  \par\smallskip
  \emph{Number of maximal sorted blocks contiguous in both position and value. The measure $\nSRuns$ uses the same count but the bound $O(n(1+\log\nSRuns))$.} \\

$\HSRuns$ ($\HBlock$)
  ($H(\vSRuns)$) \cite{BarbayNavarro2013}
& $O\!\left(n(1+\HSRuns)\right)$
& $\displaystyle \HSRuns(X)=\HBlock(X)=\sum_{i=1}^{k}\frac{b_i}{n}\log\frac{n}{b_i}$,
  where $b_1,\dots,b_k$ are the lengths of the maximal increasing blocks
  that are contiguous in both index and value.
  \par\smallskip
  \emph{Entropy of lengths of maximal sorted blocks contiguous in both position and value.} \\

$\nRuns$ \cite{Knuth1998,Mannila1985}
& $O\!\left(n(1+\log\nRuns)\right)$
& $\displaystyle \nRuns(X)=1+|\{i\in[n-1]:X[i+1]<X[i]\}|$.
  \par\smallskip
  \emph{Number of maximal sorted contiguous pieces.} \\

$\HRuns$
  ($H(\vRuns)$, runlength) \cite{McIlroy1993,BarbayNavarro2013}
& $O\!\left(n(1+\HRuns)\right)$
& $\displaystyle \HRuns(X)=\sum_{i=1}^{k}\frac{b_i}{n}\log\frac{n}{b_i}$,
  where $b_1,\dots,b_k$ are the lengths of the maximal increasing
  index-contiguous blocks.
  \par\smallskip
  \emph{Entropy of lengths of maximal sorted contiguous pieces.} \\

$\nRuns^*$
  (dual runs, riffles) \cite{McIlroy1993}
& $O\!\left(n(1+\log\nRuns^*)\right)$
& $\displaystyle \nRuns^*(X)=1+|\{v\in[n-1]:\ol{X}^{-1}[v+1]<\ol{X}^{-1}[v]\}|$. For permutations, this equals $\nRuns(X^{-1})$.
  \par\smallskip
  \emph{Number of maximal consecutive value ranges appearing in sorted order.} \\

$\HRuns^*$
  (dual runs, riffles) \cite{McIlroy1993,GhasemiJugeKhalighinejad2022}
& $O\!\left(n(1+\HRuns^*)\right)$
& $\displaystyle \HRuns^*(X)=\sum_{i=1}^{k}\frac{b_i}{n}\log\frac{n}{b_i}$,
  where $b_1,\dots,b_k$ are the lengths of the maximal increasing
  value-contiguous subsequences.
  \par\smallskip
  \emph{Entropy of lengths of maximal consecutive value ranges appearing in sorted order.} \\

\midrule

$\nSUS$
  ($\mathrm{SUS}$, $\mathrm{Lds}$) \cite{LevcopoulosPetersson1994}
& $O\!\left(n(1+\log(\nSUS))\right)$
& $\displaystyle \nSUS(X)=\min\{k:X=X_1\sqcup\cdots\sqcup X_k,\ X_r\text{ increasing}\}$.
  \par\smallskip
  \emph{Minimum number of increasing subsequences partitioning $X$.} \\

$H(\vSUS)$
  (greedy SUS) \cite{BarbayNavarro2013}
& $O\!\left(n\bigl(1+H(\vSUS)\bigr)\right)$
& $\displaystyle H(\vSUS(X))=\sum_{i=1}^{k}\frac{b_i}{n}\log\frac{n}{b_i}$,
  where $b_1,\dots,b_k$ are the lengths of the increasing subsequences
  obtained by greedily scanning $X$ left to right and appending each element to
  the subsequence with the largest last element at most the new element,
  starting a new subsequence if none exists.
  \par\smallskip
  \emph{Entropy of subsequence lengths in the greedy increasing partition.} \\

$H(\vLRM)$ \cite{BarbayFischer2012}
& $O\!\left(n\bigl(1+H(\vLRM)\bigr)\right)$
& $\displaystyle H(\vLRM(X))=\sum_{i=1}^{k}\frac{b_i}{n}\log\frac{n}{b_i}$,
  where $b_1,\dots,b_k$ are the lengths of the subsequences obtained
  by repeatedly removing a longest root-to-leaf path from the remaining
  LRM forest, excluding the dummy root.
  \par\smallskip
  \emph{Entropy of subsequence lengths obtained by joining runs along the LRM tree.} \\

$\HSUS$
  (chromatic entropy) \cite{BarbayNavarro2013,AlonOrlitsky1996}
& $O\!\left(n(1+\HSUS)\right)$
& $\displaystyle \HSUS(X)=\min_{X=X_1\sqcup\cdots\sqcup X_k}
  \sum_{i=1}^{k}\frac{|X_i|}{n}\log\frac{n}{|X_i|}$,
  where each $X_i$ is increasing.
  \par\smallskip
  \emph{Minimum entropy of subsequence lengths over increasing partitions.} \\

$\nSMS$
  ($\mathrm{SMS}$, $\mathrm{Lads}$) \cite{LevcopoulosPetersson1994,Wagner1984}
& $O\!\left(n(1+\log(\nSMS))\right)$
& $\displaystyle \nSMS(X)=\min\{k:X=X_1\sqcup\cdots\sqcup X_k,\ X_r\text{ increasing or decreasing}\}$.
  \par\smallskip
  \emph{Minimum number of monotone subsequences partitioning $X$.} \\

$H(\vSMS)$ \cite{BarbayNavarro2013}
& $O\!\left(n\bigl(1+H(\vSMS)\bigr)\right)$, given the partition
& $\displaystyle H(\vSMS(X))=\sum_{i=1}^{k}\frac{b_i}{n}\log\frac{n}{b_i}$,
  where $b_i=|X_i|$ for a chosen partition $X=X_1\sqcup\cdots\sqcup X_k$
  into the minimum number of monotone subsequences.
  \par\smallskip
  \emph{Entropy of subsequence lengths in a specific monotone partition.} \\

$\HSMS$ \cite{BarbayNavarro2013}
& $O\!\left(n(1+\HSMS)\right)$
& $\displaystyle \HSMS(X)=\hspace{-0.3cm}\min_{X=X_1\sqcup\cdots\sqcup X_k}
  \sum_{i=1}^{k}\frac{|X_i|}{n}\log\frac{n}{|X_i|}$,
  with each $X_i$ increasing or decreasing.
  \par\smallskip
  \emph{Minimum entropy of subsequence lengths over monotone partitions.} \\

$\mathrm{enc}$ \cite{Skiena1988}
& $O\!\left(n(1+\log(\mathrm{enc}))\right)$
& $\displaystyle \mathrm{enc}(X)=k$,
  where $k$ is the number of sorted lists built by scanning $X$ left to right
  and adding each element at the beginning or end of the earliest-created list
  for which this preserves increasing order, creating a new list if none exists.
  \par\smallskip
  \emph{Number of sorted lists built greedily by insertion at either end.} \\

\midrule

$\osc$~\cite{LevcopoulosPetersson1993}
& $O\!\left(n\left(1+\log\left(1+\frac{\osc}{n}\right)\right)\right)$
&$\displaystyle
\osc(X)=
\sum_{i=1}^{n-1}
\left(
\max\{0,\left|
X[i+1]
-X[i]
\right|-1\}
\right)$.
\par\smallskip
\emph{Total number of distinct ranks strictly between consecutive values; the formula extends the permutation measure to repeated keys.}
\\

$\Loc$ \cite{KatajainenLevcopoulosPetersson1989} (dynamic finger bound \cite{ProofreadCole2000})
& $O(n+\log(\Loc))$
& $\displaystyle \Loc(X)=\prod_{i=2}^n \max \{ 1, 
  \left|\operatorname{rank}_{X[1..i]}(X[i])
  -\operatorname{rank}_{X[1..i]}(X[i-1])\right|\}$.
  \par\smallskip
  \emph{Consecutive elements should be close in rank among elements seen so far.} \\

$\Hist$ \cite{MoffatPetersson1991} (working-set property \cite{SleatorTarjan1985})
& $O(n+\log(\Hist))$
& $\displaystyle \Hist(X)=\prod_{i=2}^n
  \min\left\{i-j:j<i \land
  \left|\operatorname{rank}_{X[1..i]}(X[i])
  -\operatorname{rank}_{X[1..i]}(X[j])\right|\le 1\right\}$.
  \par\smallskip
  \emph{Elements should have a recently inserted predecessor or successor.} \\

$\Reg$ \cite{MoffatPetersson1991,PeterssonMoffat1995} (unified bound~\cite{BadoiuColeDemaineIacono2007,DerryberrySleator2009})
& $O(n+\log(\Reg))$
& $\displaystyle
\Reg(X)=\prod_{i=2}^n
\min_{j<i}\left\{
(i-j)\cdot
\max\left\{
1,
\left|
\operatorname{rank}_{X[1..i]}(X[i])
-\operatorname{rank}_{X[1..i]}(X[j])
\right|
\right\}
\right\}.
$
  \par\smallskip
  \emph{Elements should be close to an earlier element in position and current rank.} \\

\midrule

$\loc$
  \cite{McIlroy1993}
& $O(n+\log(\loc))$
& $\displaystyle \loc(X)=\prod_{i=2}^{n} 
\max \left\{1,
  \left|X[i]
  -X[i-1]\right| \right\}$.
  \par\smallskip
  \emph{Consecutive elements should be close in final rank.} \\

$\hist$
  \cite{McIlroy1993}
& $O(n+\log(\hist))$
& $\displaystyle \hist(X)= \loc(\ol{X}^{-1}) = \prod_{i=2}^{n} |\ol{X}^{-1}[i] - \ol{X}^{-1}[i-1]|$,
  \par\smallskip
  \emph{Elements of consecutive final ranks should be close in position.} \\

$\reg$
  (this paper)
& $O(n+\log(\reg))$
& $\displaystyle \reg(X)=\prod_{i=2}^{n}\min_{j<i}\left\{(i-j)\max\{1,|X[i]-X[j]|\}\right\}$.
  \par\smallskip
  \emph{Each new element should be close to an earlier element in position and final rank.} \\
  
$\unimst$
  (this paper)
& $O(n+\unimst)$
& $\displaystyle \unimst(X)=\text{weight of an MST with }w(i,j)=\log(2+|i-j|+|X[i]-X[j]|)$.
  \par\smallskip
  \emph{Minimum spanning-tree weight for logarithmic distance in the index--rank plane.} \\
  
  
  

\end{longtable}
}

\section{The Hierarchy of Adaptive Measures} \label{sect:arrows}

\begin{figure}[h]
    \centering
\includegraphics[]{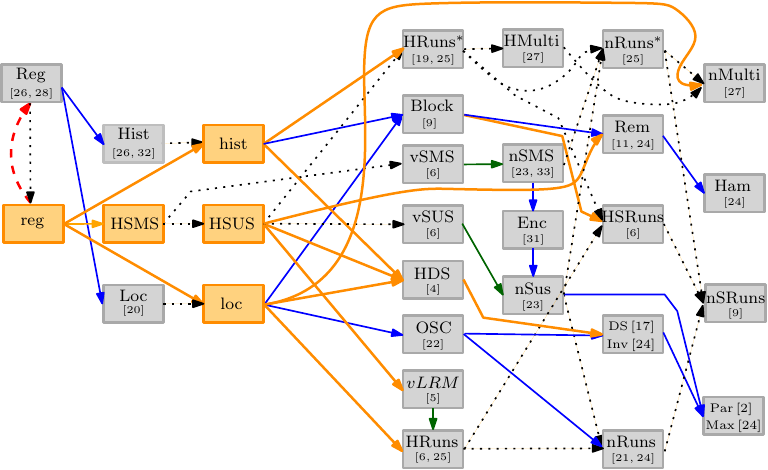}
\caption{We recall Figure~\ref{fig:hierarchy}. Orange and red items are new. Red only holds if $X$ has no duplicates. Blue arrows are attributed to~\cite{MoffatPetersson1991} and green edges to~\cite{BarbayNavarro2013}.} \label{fig:hierarchy_proofs}
\end{figure}

In \cref{sect:reg_relations}, we show that $\reg \preceq \hist$, $\reg \preceq \loc$ and $\reg \preceq \HSMS$.
In this section, we show the remaining relations in \cref{fig:hierarchy}, as depicted in \cref{fig:hierarchy_proofs}.
Some of these relations were already shown by Moffat and Petersson~\cite{MoffatPetersson1991}, or Blelloch and Dobson \cite{BlellochDobson2023}.
While Moffat and Petersson use $\Hist$ and $\Loc$, their proofs also work for $\hist$ and $\loc$.

Many relations in \cref{fig:hierarchy_proofs} are straightforward from the definition.
For example, $\nSUS$ considers partitions that are less restrictive than those of $\nRuns$, which in turn is less restrictive than $\nSRuns$; and all of these compute the same adaptive bound from a given permutation.
Therefore $\nSUS \preceq \nRuns \preceq \nSRuns$.
Another example: $\HSUS$ minimises the entropy over a family of partitions that includes the one partition considered by $H(\vSUS)$.
Therefore $\HSUS \preceq H(\vSUS)$.
Furthermore, as explained in the introduction, entropy-based measures are stronger than their counting variant, e.g. $\HRuns \preceq \nRuns$. (See also \cref{lemm:entropy_bounds} (1) below.)
We explicitly prove the remaining relations in \cref{fig:hierarchy_proofs} in this section.

\subparagraph{Inaccuracies in Barbay-Navarro} Figure~1 in \cite{BarbayNavarro2013} claims the following relations, which seem to not hold: First they claim that $\Block \cong \nSRuns$.
However, these notions use different target running times: $O(n + \Block \log(1+\Block))$ vs $O(n \log (1+\nSRuns))$.
Second, they claim that $H(\vSUS) \preceq H(\vLRM)$, but Section 4 and Lemma 7 in~\cite{BarbayFischer2012} show that these are incomparable.
Third, they claim that $H(\vSMS) \preceq H(\vSUS)$.
This holds if $H(\vSMS)$ denotes the minimum entropy over all SMS partitions, but need not hold if $\vSMS$ is an unspecified minimum-cardinality partition, which may have much larger entropy than their greedy $\vSUS$ partition.
What is true, however, is $\HSMS \preceq \HSUS$.

\subsection{Proving the remaining relations}

We first prove a helper lemma on entropy functions.
\begin{lemma} \label{lemm:entropy_bounds}
    Let $b_1, \dots, b_k$ be positive integers with $b_1 + \dots + b_k = n$.
    Let $H(b_1, \dots, b_k) = \sum_{i=1}^{k} \frac{b_i}{n} \log \frac{n}{b_i}$ be the entropy function. Then
    \begin{align}
        n H(b_1, \dots, b_k) &\le n \log k\\
        k \log n \le 2 n + k \log k &\in O(n (1 + H(b_1, \dots, b_k)))\\
        n H(b_1, \dots, b_k) &\le n + 2 \sum_{i=1}^{k} b_i \log i
    \end{align}
\end{lemma}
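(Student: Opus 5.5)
The plan is to prove the three inequalities separately. Each one is elementary and uses only concavity of $\log$ and monotonicity.

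For (1), apply Jensen's inequality to the concave function $\log$ with weights $b_i/n$:
\[
\sum_i \frac{b_i}{n}\log\frac{n}{b_i}\le \log\Bigl(\sum_i \frac{b_i}{n}\cdot\frac{n}{b_i}\Bigr)=\log k .
\]
Multiplying by $n$ gives the claim.

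For (2), split $k\log n = k\log k + k\log(n/k)$. Since $x\mapsto x\log(n/x)$ on $(0,n]$ is maximised at $x=n/e$ with value $(n/e)\log e\le n$, we get $k\log(n/k)\le n \le 2n$, which yields the first inequality. For the membership $2n+k\log k\in O(n(1+H))$, it suffices to show $k\log k\le nH + O(n)$. Here I would use a lower bound on $nH$. Among the $b_i$, the ones with $b_i\le n/\sqrt{k}$ each contribute $b_i\log(n/b_i)\ge b_i\cdot\tfrac12\log k$. So it is enough that these small parts have total size at least about $k$ minus something negligible, and this holds because at most $\sqrt k$ parts can exceed $n/\sqrt k$. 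Hence at least $k-\sqrt k$ parts are small, each of size $\ge1$. This gives $nH\ge \tfrac12(k-\sqrt k)\log k$, so $k\log k\le 2nH+O(k)\le 2nH + O(n)$ (using $k\le n$). That finishes (2).

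For (3), I would first sort so that $b_1\ge b_2\ge\dots\ge b_k$. This is harmless: the left side is symmetric, and the right side $\sum_i b_i\log i$ is minimised by pairing larger $b_i$ with smaller $i$ (rearrangement inequality), so the sorted case is the strongest. With sorted sizes, $i\,b_i\le b_1+\dots+b_i\le n$, so $n/b_i\ge i$ is the wrong direction. The bound actually needed is the opposite one, $\log(n/b_i)\le$ something in terms of $\log i$. That cannot hold termwise, since a single tiny part can have $\log(n/b_i)$ huge while $i$ is small.

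So I would instead use Gibbs' inequality. For any probability vector $q$,
\[
\sum_i p_i\log\frac{1}{p_i}\le \sum_i p_i\log\frac1{q_i},\qquad p_i=b_i/n .
\]
Choose $q_i = \frac{1}{2i^2}\cdot c$ with $\sum_i q_i\le1$; for instance $q_i=\frac{1}{2i^2}$ works since $\sum 1/(2i^2)<1$, and Gibbs still applies to sub-probability vectors. This gives
\[
nH\le \sum_i b_i(1+2\log i)= n+2\sum_i b_i\log i,
\]
which is exactly (3), and it holds for any ordering. This step is the main subtlety: one has to recognise that a termwise comparison fails and a global inequality (Gibbs) is needed. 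Once Gibbs is used, no sorting is required at all.
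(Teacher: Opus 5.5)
Your proof is correct. Parts (1) and (3) match the paper. In (1) you use the same Jensen step. In (3), your Gibbs inequality with the sub-probability vector $q_i = 1/(2i^2)$ is the paper's log-sum inequality with $z_i = 1/i^2$ and $Z \le \pi^2/6$, only normalised differently. Your bound $k\log(n/k)\le n$ in the first half of (2) is the same as the paper's $\log(n/k)\le \log e\cdot n/k$.

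The one genuinely different step is the lower bound $k\log k \in O(n(1+H))$ in (2). The paper sorts $b_1\ge\dots\ge b_k$ and uses exactly the inequality $i\,b_i\le b_1+\dots+b_i\le n$ that you wrote down, and set aside as ``the wrong direction'', while working on (3). It is the right direction for (2). It gives $b_i\log(n/b_i)\ge \log i$ term by term, hence $nH\ge \log k! = \Omega(k\log k)$ in one line.

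Your threshold argument works too. At most $\sqrt k$ parts exceed $n/\sqrt k$, and each remaining part contributes at least $\tfrac12\log k$, so $nH\ge \tfrac12(k-\sqrt k)\log k$. The leftover $\sqrt k\log k$ is absorbed via $O(k)\subseteq O(n)$; for $k\ge 2$ the bound is in fact already $\Omega(k\log k)$ directly. Your route avoids re-indexing but has roughly half the leading constant. The paper's sorted argument is shorter and nearly tight, since $\log k! = k\log k - O(k)$.
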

\begin{proof}
    By concavity of $\log$, we have
    \[
    H(b_1, \dots, b_k) = \sum_{i=1}^{k} \frac{b_i}{n} \log \frac{n}{b_i} \le \log \Big(\sum_{i=1}^{k} \frac{b_i}{n} \cdot \frac{n}{b_i} \Big) = \log k.
    \]
    This shows (1). Next, we show (2). By re-indexing, we may assume that $b_1 \ge b_2 \ge \dots \ge b_k$.
    Then $\frac{n}{b_i} \ge i$ since $n \ge b_1 + \dots + b_i \ge i \cdot b_i$.
    Also, $b_i \ge 1$ since $b_i$ is positive. Thus,
    \[
    n H(b_1, \dots, b_k) = \sum_{i=1}^{k} b_i \log \frac{n}{b_i} \ge \sum_{i=1}^{k} 1 \cdot \log i = \log \prod_{i=1}^{k} i = \log k! \in \Omega(k \log k).
    \]
    In particular, $n + k \log k \in O(n(1+H(b_1, \dots, b_k)))$.
    Finally, $\log \frac{n}{k} \le \log e \cdot \frac{n}{k}$ and thus
    \[
    k \log n = (k \log k + k \log \frac{n}{k}) \le k \log k + k \log e \cdot \frac{n}{k} \le k \log k + 2 n.
    \]
    This shows (2). Next, we show (3).
    The log-sum inequality for sequences $y_1, \dots, y_k > 0$ and $z_1, \dots, z_k > 0$ with $Y = y_1 + \dots + y_k$ and $Z = z_1 + \dots + z_k$ states that
    \[
    \sum_{i=1}^{k} y_i \log \frac{1}{y_i} \le \sum_{i=1}^{k} y_i \log \frac{1}{z_i} + Y \log \frac{Z}{Y}.
    \]
    We apply this with $y_i :=\frac{b_i}{n}$ and $z_i := \frac{1}{i^2}$, which yields $Y = 1$, $Z \le \frac{\pi^2}{6}$, and
    \[
    H(b_1,\dots,b_k)=\sum_{i=1}^{k}\frac{b_i}{n}\log\frac{n}{b_i}\leq\sum_{i=1}^{k}\frac{b_i}{n}\log(i^2)+\log\frac{\pi^2}{6}\leq2\sum_{i=1}^{k}\frac{b_i}{n}\log i+1.
    \]
    This shows (3).
\end{proof}
\cref{lemm:entropy_bounds} (1) immediately implies the ``entropy'' arrows in \cref{fig:hierarchy_proofs}.

\begin{lemma}
    $\Block \preceq \HSRuns$.
\end{lemma}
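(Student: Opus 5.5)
Recall that $M \preceq N$ means $f_M(X) \in O(f_N(X))$ for every input $X$. Here $f_{\Block}(X) = n + \Block\log(\Block)$, up to the usual $+1$ inside the logarithm, and $f_{\HSRuns}(X) = n(1+\HSRuns(X))$. So the plan is to show $n + \Block\log \Block \in O(n(1+\HSRuns))$.

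The first step is to check that both measures count the same partition. $\Block(X)$ is the number $k$ of maximal increasing blocks that are contiguous in both index and value. $\HSRuns(X) = H(b_1,\dots,b_k)$ is the entropy of the lengths of exactly these blocks. A new block starts precisely at the indices $i$ with $\ol{X}[i+1] \neq \ol{X}[i]+1$, which matches the definition of $\Block$. So $\Block(X) = k$, where $k$ is the number of parts in the partition defining $\HSRuns$, and $b_1+\dots+b_k = n$ with every $b_i \geq 1$.

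The second step applies \cref{lemm:entropy_bounds}(2) to $b_1,\dots,b_k$. It gives $k\log k \le 2n + k\log k \in O(n(1+H(b_1,\dots,b_k)))$, so
$n + \Block\log\Block \in O(n(1+\HSRuns(X)))$.
If the bound is written with $\log(\Block+1)$, I will use $\log(k+1) \le 1+\log k$ and $k \le n$; the extra term is then $O(n)$ and is absorbed.

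I expect no real obstacle. The only point that needs care is the first step: confirming, with the stable-rank convention $\ol{X}$ for duplicates, that the block count and the block partition of $\HSRuns$ coincide. After that, the relation follows directly from the entropy lower bound $nH \ge \log k! \in \Omega(k\log k)$ established in \cref{lemm:entropy_bounds}.
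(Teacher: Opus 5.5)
Your proposal is correct and follows the paper's proof. You identify $\Block(X)$ with the number $k$ of parts in the partition that defines $\HSRuns$, then apply \cref{lemm:entropy_bounds}~(2), absorbing the difference between $\log(k+1)$ and $\log k$ into the $O(n)$ term.
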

\begin{proof}
    Let $X = X_1 \sqcup \dots \sqcup X_k$ be the partition of $X$ into maximal increasing sequences that are both contiguous in index and value. By \cref{lemm:entropy_bounds} (2),
    \[
    n + \Block \log (1+\Block) = n + k \log (k+1) \in O(n(1 + H(|X_1|, \dots, |X_k|))) = O(n(1+\HSRuns)).
    \]
    This shows $\Block \preceq \HSRuns$.
\end{proof}
\begin{lemma}
    $\HDS \preceq \DS$.
\end{lemma}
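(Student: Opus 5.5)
We need $n(1+\HDS) \in O(n + n\log(1+\DS/n))$. Here $n\HDS = \sum_{i=1}^n \log(|\ol{X}[i]-i|+2)$, and we compare it with $\DS = \sum_{i=1}^n |\ol{X}[i]-i|$. The plan is a direct concavity argument, the same Jensen step used in \cref{lemm:entropy_bounds}~(1) and in the proof of \cref{lemm:reg_hsms}.

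First, put $d_i := |\ol{X}[i]-i|$, so that $\sum_i d_i = \DS$. Since $\log$ is concave, Jensen's inequality applied to the uniform average over $i\in[n]$ gives
\[
\frac1n\sum_{i=1}^n \log(d_i+2) \le \log\Bigl(\frac1n\sum_{i=1}^n (d_i+2)\Bigr) = \log\Bigl(2+\frac{\DS}{n}\Bigr).
\]
Multiplying by $n$ yields $n\HDS \le n\log(2+\DS/n)$.

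Second, note that $\log(2+y) \le 1+\log(1+y)$ for all $y\ge 0$, because $2+y \le 2(1+y)$. Hence
\[
n(1+\HDS) \le 2n + n\log\Bigl(1+\frac{\DS}{n}\Bigr) \in O\Bigl(n+n\log\Bigl(1+\frac{\DS}{n}\Bigr)\Bigr),
\]
which is exactly $f_{\HDS}(X)\in O(f_{\DS}(X))$, i.e.\ $\HDS\preceq\DS$.

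There is no real obstacle. The only care needed is in matching the target bounds from Table~\ref{tab:presortedness}, namely $O(n\HDS)$ versus $O(n+n\log(1+\DS/n))$. Since $\HDS\ge 1$ (each summand is at least $\log 2 = 1$), $n\HDS$ and $n(1+\HDS)$ agree up to a factor of two, so it does not matter which form of the $\HDS$ bound is used. The argument works verbatim with duplicates, since $\ol{X}$ is a permutation.
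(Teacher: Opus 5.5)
Your proof is correct and is essentially the paper's argument: Jensen's inequality for the concave map $x\mapsto\log(2+x)$ gives $\HDS\le\log(2+\DS/n)$, and the elementary bound $\log(2+y)\le 1+\log(1+y)$ then yields $O(n+n\log(1+\DS/n))$. Your remark that $\HDS\ge 1$, so the forms $O(n\HDS)$ and $O(n(1+\HDS))$ agree up to a factor of two, is a small clarification the paper leaves implicit.
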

\begin{proof}
By concavity of $x \mapsto \log(2+x)$, we have
\[
\HDS = \sum_{i=1}^{n} \frac{\log (2+|\ol{X}[i] - i|)}{n} \le \log (2 + \frac{\sum_{i=1}^{n} |\ol{X}[i] - i|}{n}) = \log (2+\frac{\DS}{n}).
\]
In particular, $n \HDS \in O(n + n \log(1+\frac{\DS}{n}))$. This shows $\HDS \preceq \DS$.
\end{proof}
\begin{lemma} \label{lemm:loc_runs}
    $\loc \preceq \HRuns$ and $\hist \preceq \HRuns^*$.
\end{lemma}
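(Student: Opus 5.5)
We need $n+\log\loc(X)\in O(n(1+\HRuns(X)))$. Let $X=X_1\sqcup\dots\sqcup X_k$ be the partition into maximal contiguous non-decreasing runs, with lengths $b_1,\dots,b_k$. Inside a run, consecutive entries satisfy $X[i]\ge X[i-1]$, so their factors in $\loc$ telescope. Between runs we pay one boundary factor each. So we split $\log\loc(X)=\sum_{i=2}^n\log\max\{1,|X[i]-X[i-1]|\}$ into within-run terms and $k-1$ boundary terms.

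Each boundary term is at most $\log n$, so the boundary terms sum to at most $k\log n$. By \cref{lemm:entropy_bounds}~(2) this is $O(n(1+\HRuns))$.

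For run $X_r$ of length $b_r$, the within-run increments $X[i]-X[i-1]\ge0$ sum to at most $n$. This is because values are ranks in $[n]$ and the run is non-decreasing. By concavity, as in the proof of \cref{lemm:reg_hsms},
\[
\sum \log\max\{1,\Delta\}\le\sum\log(1+\Delta)\le (b_r-1)\log\!\left(1+\tfrac{n}{b_r-1}\right)\in O\!\left(b_r\left(1+\log\tfrac{n}{b_r}\right)\right).
\]
Summing over $r$ gives $O(n+nH(b_1,\dots,b_k))=O(n(1+\HRuns))$. Together with the boundary terms, this proves $\loc\preceq\HRuns$.

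For $\hist\preceq\HRuns^*$ we use the definition $\hist(X)=\loc(\ol{X}^{-1})$. The plan is to show that the maximal increasing value-contiguous subsequences of $X$ become contiguous increasing blocks of $\ol{X}^{-1}$ with the same lengths. Then $\HRuns(\ol{X}^{-1})\le\HRuns^*(X)$, since maximal runs can only be coarser than this partition, which only lowers entropy. Applying the first part to the permutation $\ol{X}^{-1}$ then gives $n+\log\hist(X)\in O(n(1+\HRuns^*(X)))$.

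The main obstacle is this duality check with duplicates. A dual run is a set of consecutive values, occurring at increasing positions when read in value order. Because the stable sort places equal keys in position order, these elements occupy a contiguous range of $\ol{X}$-ranks. On that range, $\ol{X}^{-1}$ lists their positions, which are increasing, so the block is a contiguous increasing piece of $\ol{X}^{-1}$.

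The detail to verify is the ordering convention for multiple occurrences of a value inside a dual run. By the stable-sort convention they appear in index order, and the dual-run definition in Table~\ref{tab:presortedness} (via $\ol{X}^{-1}[v+1]<\ol{X}^{-1}[v]$) is stated directly on $\ol{X}^{-1}$. So in fact $\HRuns^*(X)=\HRuns(\ol{X}^{-1})$ essentially by definition, and the second claim reduces immediately to the first.
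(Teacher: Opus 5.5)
Your proposal is correct and follows essentially the same route as the paper: within-run terms via concavity, the $k-1$ boundary terms via \cref{lemm:entropy_bounds}~(2), and for the dual claim the reduction $\hist(X)=\loc(\ol{X}^{-1})$ together with the fact that the runs of $\ol{X}^{-1}$ are no finer than the dual runs of $X$ (the paper phrases this as $\HRuns^*(\ol{X})\le\HRuns^*(X)$, obtained by merging parts). Your closing ``by definition'' remark slightly overstates things, since the table's $\ol{X}^{-1}$ formula defines $\nRuns^*$ rather than $\HRuns^*$, but your preceding block argument already gives the inequality you need.
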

\begin{proof}
    Let $X = X_1 \sqcup \dots \sqcup X_k$ be the partition of $X$ into maximal increasing sequences that are contiguous in index. Put $s_i = |X_i|$.
    A singleton run contributes zero. For $s_i>1$, monotonicity gives $\sum_{j=2}^{s_i}\max\{1,X_i[j]-X_i[j-1]\}\leq s_i-1+n$. By concavity of $\log$,
    \[
    \sum_{j=2}^{s_i}\log\max\{1,|X_i[j]-X_i[j-1]|\}\leq(s_i-1)\log\left(1+\frac{n}{s_i-1}\right)\leq s_i\log\frac{2n}{s_i}.
    \]
    Due to the boundaries between the $X_i$, the sum $\log \loc$ contains $k-1$ additional terms that are each at most $\log n$.
    Hence, by \cref{lemm:entropy_bounds} (2), we have
    \[
    \log\loc\leq k\log n+\sum_{i=1}^{k}s_i\log\frac{2n}{s_i}=k\log n+n+\sum_{i=1}^{k}s_i\log\frac{n}{s_i}\in O(n(1+H(s_1,\dots,s_k))).
    \]
    This shows that $\loc \preceq \HRuns$.
    For $\hist$, observe that $\hist(X) = \loc(\ol{X}^{-1})$.
    On the other hand, $\HRuns^*(\ol{X}) \le \HRuns^*(X)$.
    Indeed, any value-contiguous subsequence of $X$ contains, for each value, either all copies or no copies of that value.
    Thus, a value-contiguous subsequence of $X$ is also a value-contiguous subsequence of $\ol{X}$.
    The partition of $\ol{X}$ into maximal such subsequences therefore only merges parts, and entropy does not increase under merging.
    Applying $\loc \preceq \HRuns$ to $\ol{X}^{-1}$ now shows that $\hist \preceq \HRuns^*$.
\end{proof}
\begin{lemma}
    $\loc \preceq \HDS$ and $\hist \preceq \HDS$.
\end{lemma}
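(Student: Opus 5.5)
We have to show $\log\loc \in O(n + n\HDS)$ and the same for $\log\hist$, where $n\HDS = \sum_i \log(2+|\ol{X}[i]-i|)$. The plan is to bound each factor of $\loc$ by displacements, using the triangle inequality through the sorted positions.

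\emph{Bounding a factor of $\loc$.} Fix $i\geq 2$ and work with the stable rank permutation $\ol{X}$. Since stable sorting is monotone in value, $|X[i]-X[i-1]| \leq \max\{1,|\ol{X}[i]-\ol{X}[i-1]|\}$, because distinct values with ranks $a<b$ are separated by at least $b-a$ positions in the stable order. By the triangle inequality,
\[
|\ol{X}[i]-\ol{X}[i-1]| \leq |\ol{X}[i]-i| + 1 + |\ol{X}[i-1]-(i-1)|.
\]
Therefore
\[
\log\max\{1,|X[i]-X[i-1]|\} \leq \log\bigl(3+d_i+d_{i-1}\bigr),
\]
where $d_j=|\ol{X}[j]-j|$.

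\emph{Summing.} Next I use $\log(3+a+b)\leq \log(2+a)+\log(2+b)+1$, which holds because $(2+a)(2+b)\geq 3+a+b$ for $a,b \geq 0$. Summing over $i$, each $d_j$ appears at most twice, so
\[
\log\loc(X)\leq 2n\HDS + n.
\]
This gives $O(n+\log\loc)\subseteq O(n(1+\HDS))$. Since $\HDS\geq 1$ (each term is at least $\log 2$), this is $O(n\HDS)$, as the target bound of $\HDS$ requires. So $\loc\preceq\HDS$.

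\emph{The $\hist$ case.} Here $\hist(X)=\loc(\ol{X}^{-1})$, with the factor-$1$ convention being immaterial because $\ol{X}^{-1}$ is a permutation with distinct consecutive entries. Apply the same argument to the permutation $Y=\ol{X}^{-1}$. For permutations $\ol{Y}=Y$, and
\[
\sum_i |Y[i]-i| = \sum_j |j-\ol{X}[j]|
\]
term by term, reindexing via $j=Y[i]$. Hence $\HDS(Y)=\HDS(X)$ and $\log\hist(X)\leq 2n\HDS(X)+n$.

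\emph{Main obstacle.} The only delicate point is the duplicate handling in the first step. One must check that $\loc$'s rank difference $|X[i]-X[i-1]|$, taken over distinct values, is dominated by the positional difference in the stable order. It is, because each distinct value occupies at least one sorted slot.
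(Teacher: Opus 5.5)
Your proof is correct and matches the paper's argument. Both bound each factor of $\loc$ via $|X[i]-X[i-1]|\le|\ol{X}[i]-\ol{X}[i-1]|$, the triangle inequality through the positions $i-1$ and $i$, and $(2+a)(2+b)\ge 3+a+b$, so that each displacement is counted at most twice; both then transfer the bound to $\hist$ via $\hist(X)=\loc(\ol{X}^{-1})$ and $\HDS(\ol{X}^{-1})=\HDS(X)$. The only difference is cosmetic: your superfluous $+1$ per term gives an additive $n$, which you absorb using $\HDS\ge 1$, whereas the paper obtains $\log\loc\le 2n\HDS$ directly.
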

\begin{proof}
    For positive integers $a,b,c$, by the triangle inequality, we have
    \[
    2 + |c-a| \le 3 + |a-b| + |c-(b+1)| \le (2+|a-b|) \cdot (2+|c-(b+1)|).
    \]
    We use this with $a = \ol{X}[i-1], b=i-1, c = \ol{X}[i]$, and use the fact that $|X[i] - X[i-1]| \le |\ol{X}[i] - \ol{X}[i-1]|$ to get
    \[
    \loc \le \prod_{i=2}^{n} (2+|\ol{X}[i] - \ol{X}[i-1]|) \le \prod_{i=2}^{n} \Big((2+|\ol{X}[i-1] - (i-1)|) \cdot (2+|\ol{X}[i] - i|)\Big) \le \prod_{i=1}^{n} (2+|\ol{X}[i]-i|)^2.
    \]
    Taking logarithms yields $\log \loc \le 2 n \HDS$, which shows $\loc \preceq \HDS$.
    Since $\hist(X) = \loc(\ol{X}^{-1})$ and $\HDS(X) = \HDS(\ol{X}) = \HDS(\ol{X}^{-1})$, this also shows $\hist \preceq \HDS$.
\end{proof}
\begin{lemma}
    $\HSUS \preceq \HDS$.
\end{lemma}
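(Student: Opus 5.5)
We construct an explicit partition of $X$ into non-decreasing subsequences and bound its entropy with \cref{lemm:entropy_bounds}~(3). That inequality, $nH(b_1,\dots,b_k)\le n+2\sum_i b_i\log i$, holds for any ordering of the parts. So it suffices to find a partition in which each element $i$ lies in a part with index $O(2+|\ol{X}[i]-i|)$. Summing gives $n\HSUS(X)\le n+O\bigl(\sum_i\log(2+|\ol{X}[i]-i|)\bigr)=O(n+n\HDS(X))$, which is exactly $\HSUS\preceq\HDS$.

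Write $d_i=|\ol{X}[i]-i|$, and work with the stable rank permutation $\ol{X}$. A subsequence that is increasing in $\ol{X}$ is non-decreasing in $X$, and $\HDS$ is defined via $\ol{X}$, so this loses nothing. The key observation is that displacement controls inversions. Fix $D\ge0$ and let $i<j$ be indices with $d_i,d_j\le D$ and $j-i\ge 2D+1$. Then $\ol{X}[j]\ge j-D>i+D\ge\ol{X}[i]$. Hence, among the elements of displacement at most $D$, those whose indices are congruent modulo $2D+1$ form an increasing subsequence.

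Group the elements into levels. Element $i$ is at level $c\ge1$ when $2^{c-1}\le 2+d_i<2^c$, so that $d_i\le D_c:=2^c$. Split level $c$ into the $2D_c+1=2^{c+1}+1$ residue classes of the index modulo $2D_c+1$; by the observation each class is increasing. Order all parts level by level, increasing in $c$, dropping empty parts while keeping the relative order. Levels $1,\dots,c$ then contribute at most $\sum_{c'\le c}(2^{c'+1}+1)=O(2^c)$ parts. So an element at level $c$ sits in a part of index $O(2^c)$, whose logarithm is $c+O(1)=O(\log(2+d_i))$.

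Plugging these indices into \cref{lemm:entropy_bounds}~(3) gives the bound claimed in the first paragraph. The only delicate point is the ordering of the parts: low-displacement elements must receive small indices. The level-major ordering handles this, because the count of parts up to level $c$ is a geometric sum dominated by its last term.
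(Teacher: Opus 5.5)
Your proof is correct. It shares the paper's outer frame: build an explicit partition of $\ol{X}$ into increasing subsequences in which element $i$ lands in a part of index $O(2+|\ol{X}[i]-i|)$, then apply \cref{lemm:entropy_bounds}~(3), which indeed needs no ordering of the parts.

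The difference is the partition itself. The paper takes $Y_d=\{i:\ol{X}[i]-i=d\}$ for each \emph{signed} displacement $d$. Such a class is increasing for a trivial reason: $\ol{X}[i]=i+d$ is just a shift of the index. Listing the classes as $Y_0,Y_1,Y_{-1},Y_2,Y_{-2},\dots$ then places element $i$ at index at most $2(2+|\ol{X}[i]-i|)$ with no further bookkeeping.

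You group elements only by an upper bound on their displacement. You therefore need three extra ingredients:
\begin{itemize}
\item the spacing argument: $d_i,d_j\le D$ and $j-i\ge 2D+1$ force $\ol{X}[i]<\ol{X}[j]$;
\item residue classes modulo $2D_c+1$ inside each level;
\item dyadic levels with a geometric sum to keep the part indices at $O(2^c)=O(2+d_i)$.
\end{itemize}

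The paper's route is shorter and more direct. Yours uses only coarse, factor-two information about each displacement plus the index. That makes the spacing fact more robust, at the cost of the extra layers. Both arguments end at $n\HSUS(X)\le n+O\bigl(\sum_i\log(2+|\ol{X}[i]-i|)\bigr)=O(n(1+\HDS(X)))$.

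A cosmetic point: since $2+d_i\ge 2$, your level $c=1$ is always empty. This is harmless.
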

\begin{proof}
    Apply the following construction to the permutation $P:=\ol{X}$. Any increasing subsequence of $P$ corresponds to a non-decreasing subsequence of $X$, and $\HDS(P)=\HDS(X)$.
    For each integer $d$, let $Y_d$ be the subsequence consisting of entries $P[i]$ with $P[i]-i=d$. Each $Y_d$ is increasing. Order these subsequences as $Y_0,Y_1,Y_{-1},Y_2,Y_{-2},\dots$ and discard empty parts, obtaining $X_1,\dots,X_k$. An entry $P[i]\in X_j$ then satisfies $j\leq2(2+|P[i]-i|)$.
    By \cref{lemm:entropy_bounds} (3),
    \[
    nH(|X_1|,\dots,|X_k|)\leq n+2\sum_{j=1}^{k}|X_j|\log j\leq n+2\sum_{i=1}^{n}\log(2(2+|P[i]-i|))\in O(n(1+\HDS(X))).
    \]
    Transferring this partition to $X$ proves $\HSUS\preceq\HDS$.
\end{proof}
\begin{lemma}
    $\HSUS \preceq \Rem$.
\end{lemma}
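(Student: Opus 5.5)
We need to show $n(1+\HSUS(X)) \in O(n + \Rem\log(\Rem+1))$. The plan is to exhibit one explicit partition of $X$ into non-decreasing subsequences whose entropy is small. Since $\HSUS$ minimises over all such partitions, any single good partition suffices.

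Let $r=\Rem(X)$ and fix a set $Y\subseteq X$ of $r$ elements such that $X-Y$ is non-decreasing. If $r=0$, then $X$ itself is non-decreasing and $\HSUS(X)=0$, so the claim is immediate. Otherwise, take the partition consisting of the non-decreasing subsequence $X-Y$ (if non-empty) together with the $r$ singleton subsequences formed by the elements of $Y$. Every singleton is trivially non-decreasing, so this is a valid SUS partition with part sizes $b_1=n-r$ and $b_2=\dots=b_{r+1}=1$. It follows that $\HSUS(X)\le H(n-r,1,\dots,1)$.

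It remains to bound this entropy:
\[
nH = (n-r)\log\frac{n}{n-r} + r\log n.
\]
For the first term, $(n-r)\log\frac{n}{n-r}=(n-r)\log\bigl(1+\frac{r}{n-r}\bigr)\le r\log e\le 2n$. For the second term, apply \cref{lemm:entropy_bounds} (2) with $k=r$: $r\log n\le 2n+r\log r$. Together these give
\[
n(1+\HSUS(X))\in O(n+\Rem\log(\Rem+1)),
\]
which proves $\HSUS\preceq\Rem$.

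The argument has no real obstacle. The only point needing care is the $r\log n$ term: one might worry it exceeds $r\log r$, but the inequality $r\log(n/r)\le r\log e\cdot\frac{n}{r}=O(n)$ absorbs the excess into the $n$ term. This is exactly what \cref{lemm:entropy_bounds} (2) provides. Duplicates pose no difficulty, because "increasing" means non-decreasing both in $\Rem$ and in $\HSUS$.
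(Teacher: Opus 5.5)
Your proof is correct and follows essentially the same route as the paper: the paper uses the same partition into the sorted remainder plus $\Rem$ singletons, and also absorbs the $\Rem\log n$ term via \cref{lemm:entropy_bounds} (2). The only cosmetic difference is that the paper counts $k=\Rem+1$ parts uniformly, so it needs no separate $r=0$ case. You instead apply (2) with $k=r$ and bound the large part's contribution explicitly by $r\log e$.
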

\begin{proof}
    Let $k = \Rem+1$. We partition $X = X_1 \sqcup \dots \sqcup X_k$ where $X_2, \dots, X_k$ are singletons whose removal leaves the sorted sequence $X_1$.
    By \cref{lemm:entropy_bounds} (2),
    \[
    nH(|X_1|,\dots,|X_k|)=|X_1|\log\frac{n}{|X_1|}+\sum_{i=2}^{k}\log n\leq2n+k\log n\leq4n+k\log k.
    \]
    Thus, $n(1+\HSUS) \in O(n + \Rem \log (1+\Rem))$, which shows $\HSUS \preceq \Rem$.
\end{proof}
\begin{lemma} \label{lemm:Loc_loc}
    $\Loc \preceq \loc$ and $\Reg\preceq\reg$.
\end{lemma}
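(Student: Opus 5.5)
The plan is a termwise comparison of the two products: for each $i\geq2$ and each $j<i$, the factor in $\Loc$ (resp.\ $\Reg$) is at most the corresponding factor in $\loc$ (resp.\ $\reg$). The single fact driving this is monotonicity of rank differences under passing to a subset. Let $Y\subseteq Z$ be sets of values and $a,b\in Y$. Then $|\rank_Y(a)-\rank_Y(b)|$ counts the distinct values of $Y$ in the half-open interval between $a$ and $b$. Since every value of $Y$ is also a value of $Z$, this count is at most $|\rank_Z(a)-\rank_Z(b)|$.

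First I apply this with $Y$ the set of values of $X[1..i]$ and $Z$ the set of values of $X$. Recall that, as in the paper, we identify $X[i]$ with $\rank_X(X[i])$. This gives
\[
\left|\rank_{X[1..i]}(X[i])-\rank_{X[1..i]}(X[j])\right|\leq |X[i]-X[j]|
\]
for every $j\le i$.

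For $\Loc\preceq\loc$, I take $j=i-1$. Monotonicity of $x\mapsto\max\{1,x\}$ bounds each factor of $\Loc$ by the matching factor of $\loc$. Taking products gives $\Loc(X)\le\loc(X)$, hence $n+\log\Loc\le n+\log\loc$.

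For $\Reg\preceq\reg$, I fix $i$ and let $j^\ast<i$ be the minimiser in the definition of $\reg$. The minimum defining the $i$-th factor of $\Reg$ is at most its value at $j^\ast$. That value is
\[
(i-j^\ast)\max\{1,|\rank_{X[1..i]}(X[i])-\rank_{X[1..i]}(X[j^\ast])|\},
\]
which is at most $(i-j^\ast)\max\{1,|X[i]-X[j^\ast]|\}$. Multiplying over $i$ yields $\Reg(X)\le\reg(X)$.

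There is no real obstacle here. The only point needing care is duplicates: ranks are defined after removing duplicates, so the counting argument must be phrased over distinct values. Because the rank difference counts distinct values of $Y$ strictly between and including one endpoint, it also holds when $X[i]=X[j]$, where both sides are $0$ and the $\max\{1,\cdot\}$ makes the factors agree.
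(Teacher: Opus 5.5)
Your proposal is correct and matches the paper's proof: both compare the products factor by factor via $|\rank_{X[1..i]}(X[i])-\rank_{X[1..i]}(X[j])|\leq|X[i]-X[j]|$. You take $j=i-1$ for $\Loc$, and for $\Reg$ you use the finger minimising the $\reg$ factor, which is what the paper's ``same rank-distance argument'' amounts to. The one difference is that you justify the rank inequality by counting distinct values of a subset in a half-open interval, whereas the paper simply asserts it.
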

\begin{proof}
    We have $\left|\rank_{X[1..i]}(X[i])
  -\rank_{X[1..i]}(X[i-1])\right|\leq|X[i]-X[i-1]|$, and thus
    \begin{align*}
    \Loc(X)&=\prod_{i=2}^{n}\max\{1,|\rank_{X[1..i]}(X[i])-\rank_{X[1..i]}(X[i-1])|\}\\
    &\leq\prod_{i=2}^{n}\max\{1,|X[i]-X[i-1]|\}=\loc(X).
    \end{align*}
    This shows $\Loc \preceq \loc$.
  The same rank-distance argument immediately gives $\Reg\preceq\reg$.
\end{proof}
\begin{lemma}
    $\loc \preceq \nMulti$.
\end{lemma}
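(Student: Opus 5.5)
We must show $n+\log\loc(X)\in O(n\log(1+\nMulti))$. Write $k=\nMulti(X)=\max(X)$; by the convention of Table~\ref{tab:presortedness}, every entry of $X$ lies in $[k]$. The plan is to bound each factor of $\loc(X)$ directly by $k$, without any amortisation.

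For every $i\in\{2,\dots,n\}$ both $X[i]$ and $X[i-1]$ lie in $[k]$, so $\max\{1,|X[i]-X[i-1]|\}\le\max\{1,k-1\}\le k$. Taking the product over the $n-1$ factors gives $\loc(X)\le k^{n-1}$, and hence $\log\loc(X)\le(n-1)\log k$.

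It remains to compare this with the target bound $O(n\log(1+\nMulti))$. If $k\ge 1$, then $\log k\le\log(1+k)$ and $n\le n\log 2\le n\log(1+k)$. Combining these, $n+\log\loc(X)\le n+n\log(1+k)\le 2n\log(1+k)$. This shows $f_{\loc}(X)\in O(f_{\nMulti}(X))$, that is, $\loc\preceq\nMulti$.

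There is no real obstacle in this argument. The only point needing care is to confirm that $\nMulti$ equals the maximum value under the rank normalisation $X:[n]\to[\max(X)]$ surjective, so that value differences are bounded by $k-1$. We also need the additive $n$ in $f_{\loc}$ to be absorbed by $n\log(1+k)$, which holds because $k\ge1$.
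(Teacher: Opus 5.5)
Your proof is correct and is essentially the paper's own argument: bound every factor of $\loc(X)$ by $\max(X)=\nMulti$ to get $\loc(X)\le\max(X)^{n-1}$, then take logarithms. Your explicit check that the additive $n$ is absorbed by $n\log(1+k)$ (using $k\ge1$ and base-two logarithms) spells out a step the paper leaves implicit.
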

\begin{proof}
    We have
    \[
    \loc(X)=\prod_{i=2}^{n}\max\{1,|X[i]-X[i-1]|\}\leq\max(X)^{n-1}\leq\max(X)^n.
    \]
    Taking logarithms yields $n + \log \loc \in O(n \log(1+\nMulti))$, which shows $\loc \preceq \nMulti$.
\end{proof}

\begin{lemma}
    $\Hist \preceq \hist$.
\end{lemma}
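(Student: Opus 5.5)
The plan is a charging argument in the spirit of Moffat and Petersson~\cite{MoffatPetersson1991}. Each index $i\ge 2$ gets a finger $j<i$ that is admissible for $\Hist$ and whose distance $i-j$ is controlled by one factor of $\hist(X)=\prod_{v=2}^{n}|\ol{X}^{-1}[v]-\ol{X}^{-1}[v-1]|$. Write $\pi=\ol{X}^{-1}$, so $\pi(v)$ is the position of the $v$-th element in stable sorted order. Call the pair $e_v=\{v-1,v\}$ a \emph{sorted edge} with length $d_v=|\pi(v)-\pi(v-1)|\ge 1$, so that $\log\hist=\sum_v\log d_v$. The goal is $\log\Hist\le O(n)+O(\sum_v\log d_v)$.

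First I construct the finger. Fix $i\ge 2$ and let $v=\ol{X}[i]$. Walk downwards in sorted order, $v,v-1,v-2,\dots$, until reaching the first value $u$ with $\pi(u)<i$. If no such $u$ exists, walk upwards instead; one of the two walks succeeds because $\pi(\cdot)=1<i$ somewhere. Put $j=\pi(u)$. Every value strictly between $u$ and $v$ in the walk sits at a position $>i$, so it is not in $X[1..i]$. Hence $X[j]$ and $X[i]$ are equal or adjacent among the distinct values of the prefix, so $|\rank_{X[1..i]}(X[i])-\rank_{X[1..i]}(X[j])|\le 1$ and $j$ is admissible in the definition of $\Hist$. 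Moreover, the last sorted edge of the walk, $\{u,u+1\}$ in the downward case, joins a position $\pi(u+1)\ge i$ to $\pi(u)=j<i$. So its length is at least $i-j$, and the $i$-th factor of $\Hist$ is at most $d_{u+1}$. I \emph{charge} index $i$ to this crossing edge. The duplicate case needs no extra work: stability of $\ol{X}$ ensures that equal values are handled as consecutive sorted elements, and their prefix rank difference is $0$.

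It remains to show $\sum_i\log(i-j_i)\le O(n)+O(\sum_v\log d_v)$. I expect the main obstacle here, because several indices may charge the same edge $e=\{u,u+1\}$ with positions $a=\pi(u)<b=\pi(u+1)$. I would first establish the structure of the chargers $i_1<\dots<i_m$ (all in $(a,b]$, downward case). If $i_s<i_t$, then the value of $i_t$ is smaller than that of $i_s$. Otherwise $\ol{X}[i_s]$ would lie on the walk of $i_t$ and would need position $>i_t$. So the chargers appear in decreasing value order, and all of them lie on the single sorted path from $u+1$ up to $\ol{X}[i_1]$. I would then re-route all chargers except one to a shorter finger. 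For $t\ge 2$, index $i_t$ and the charger $i_{t-1}$ (which has a larger value) together span a sorted subpath whose internal positions all exceed $i_{t-1}$. I would try to bound $i_t-i_{t-1}$, or else $i_t-a$, by the lengths of the sorted edges of the subpath between $\ol{X}[i_t]$ and $\ol{X}[i_{t-1}]$, using the triangle inequality along that subpath. These subpaths are edge-disjoint for different $t$, so each sorted edge is charged $O(1)$ times in total. If this re-routing turns out too lossy, my fallback is to exploit that the chargers are distinct integers in $(a,b]$. I would combine the left and right walks (taking the shorter one) with concavity of $\log$, so that the sum $\sum_t\log(i_t-a)$ is dominated by $O(m)$ plus the logarithmic lengths of the edges on the shared sorted path.

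Granting this bounded-multiplicity charging, summing over all $i\ge 2$ gives $\log\Hist(X)\le\sum_{i}\log(i-j_i)\le O(n)+O(\log\hist(X))$. Therefore $n+\log\Hist\in O(n+\log\hist)$, that is, $\Hist\preceq\hist$.
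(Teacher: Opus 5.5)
\textbf{There is a genuine gap: the charging step is assumed, and the mechanism you sketch for it fails.} If you send the first charger of each crossing edge to its prefix predecessor and every later charger $i_t$ to $i_{t-1}$, you obtain exactly the finger function the paper builds by its induction over the $t$-front. So your fingers are right. But bounding their total cost is the whole content of the lemma, and you only write ``granting this bounded-multiplicity charging''.

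\textbf{Why the sketched accounting fails.} The subpaths $(\ol{X}[i_t],\ol{X}[i_{t-1}]]$ are edge-disjoint only among chargers of a single crossing edge. Subpaths belonging to different crossing edges can be nested, so one long sorted edge can lie in $\Theta(n)$ of them. For example, let $k=\lfloor n/4\rfloor$ and let $X$ start with $1,n,2,n-1,\dots,k,n-k+1,k+1$ and continue with $k+2,k+3,\dots,n-k$.
\begin{itemize}
\item For each $\ell\le k$, indices $2\ell$ and $2\ell+1$ (values $n-\ell+1$ and $\ell+1$) both have prefix predecessor $\ell$. So they form a two-element chain, and you re-route index $2\ell+1$ to index $2\ell$, at distance $1$.
\item Every subpath $(\ell+1,n-\ell+1]$ contains the sorted edge $\{n-k,n-k+1\}$. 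Its endpoints sit at indices $n$ and $2k$, so it has length $n-2k$.
\item Your triangle-inequality bound, or any bound via the longest edge of the subpath, therefore charges $\Omega(\log n)$ to each of these $k$ indices. That is $\Theta(n\log n)$ in total.
\item All other sorted edges have length at most $2$, so $n+\log\hist(X)=\Theta(n)$.
\end{itemize}
The concavity fallback is only a sketch. It does not say which edges pay for which chargers, and that is exactly where the difficulty lies.

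\textbf{The missing observation.} Every value strictly between $\ol{X}[i_t]$ and $\ol{X}[i_{t-1}]$ sits at a position greater than $i_t$, not merely greater than $i_{t-1}$. Otherwise the first such position after $i_{t-1}$ would itself satisfy the charger condition and lie strictly between $i_{t-1}$ and $i_t$. This has three consequences.
\begin{itemize}
\item $X[i_{t-1}]$ is the prefix successor of $X[i_t]$, so the re-routed finger is admissible.
\item With $w=\ol{X}[i_{t-1}]$ you get $i_t-i_{t-1}\le\pi(w-1)-\pi(w)$. So only the \emph{top} edge of the subpath needs to pay.
\item That edge has its larger endpoint at the smaller position, so it is never a downward crossing edge. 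It receives a re-routed charge only through $i_{t-1}=\pi(w)$, hence at most once.
\end{itemize}
It can receive at most one further charge, from an index with no prefix predecessor. Two such indices are prefix minima, and the earlier one would block the upward walk of the later one. Keeping only the first charger on each downward crossing edge, every sorted edge is charged at most twice. This gives $\log\Hist(X)\le 2\log\hist(X)$. The paper obtains the same finger function by induction on $t$, maintaining the $t$-front. Its exchange inequality at each step gives $\sum_i\log(i-f(i))\le\log\hist(X)$ directly.
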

\begin{proof}
    If we replace $X$ by $\ol{X}$, then $\Hist$ only increases, while $\hist$ is unaffected.
    Thus, it suffices to consider the cases where $X$ is a permutation.
    
    For $i\in[n]$, let $S_i:=\{j\in[n]:j<i\land|\rank_{X[1..i]}(X[i])-\rank_{X[1..i]}(X[j])|\leq1\}$.
    Our goal is to define a function $f : [n] \setminus \{1\} \to [n]$ with $f(i) \in S_i$ for all $i$
    and $\sum_{i=2}^{n}\log(i-f(i))\in O(n+\log\hist(X))$.
    Given such a function, picking $j = f(i)$ in the definition of $\Hist$ shows $\Hist \preceq \hist$.

    For $t \in [n]$, the \emph{$t$-front} consists of all $i \in [n]$ with $X[i] \le t$ such that there exists no $j < i$ with $X[i] \le X[j] \le t$.
    
    \begin{claim}
        Let $t \in [n]$ be arbitrary.
        Put $I_t = X^{-1}([1..t])$ and $r_t = \min(I_t)$.
        There exists a function $f : I_t \setminus \{r_t\} \to I_t$
        with $f(i) \in S_i$ for all $i$ with the sum property
        \[
        \sum_{i \in I_t \setminus \{r_t\}} \log (i - f(i)) \le \sum_{y=2}^{t} \log |X^{-1}[y] - X^{-1}[y-1]|,
        \]
        and the following additional property:
        If $i_1<\dots<i_k$ are the indices of the $t$-front, then $f(i_j)=i_{j-1}$ for all $j\geq2$.
    \end{claim}
    \begin{proof}[Proof of the Claim]
        We do induction over $t$.
        The base case $t = 1$ is straightforward as $f$ is the trivial function and both sums in the sum property are empty.
        Let $t \ge 2$ be arbitrary. Suppose the claim holds for $t-1$ and yields a function $f_{t-1}$.
        We construct $f_t$ by slightly modifying $f_{t-1}$, keeping all choices unchanged unless stated otherwise.
        Let $i_1, \dots, i_k$ be the $(t-1)$-front.
        Put $i = X^{-1}[t]$.
        If $i_k < i$, then the $t$-front is $i_1, \dots, i_k, i$. Put $f_t(i) = i_k$. The sum property is maintained since $i_k = X^{-1}[t-1]$ and $i = X^{-1}[t]$, meaning $\log(i - f_t(i)) = \log |X^{-1}[t] - X^{-1}[t-1]|$.
        
        Otherwise, suppose that $i_j < i < i_{j+1}$ for some index $j$.
        We put $f_t(i) = i_j$ and $f_t(i_{j+1}) = i$, and keep $f_t(i') = f_{t-1}(i')$ for all other arguments $i'$.
        The $t$-front is $i_1, \dots, i_j, i$, so $f_t$ satisfies the additional property.
        We have
        \[
        i - f_t(i) = i - i_{j} \le i_{j+1} - i_j = i_{j+1} - f_{t-1}(i_{j+1})
        \]
        and
        \[
        i_{j+1} - f_t(i_{j+1}) = i_{j+1} - i \le i_k - i = X^{-1}[t-1] - X^{-1}[t].
        \]
        Therefore, the sum property for $(t-1)$ implies the sum-property for $t$:
        \begin{align*}
        &\sum_{i' \in I_{t} \setminus \{r_{t}\}} \log (i' - f_{t}(i'))\\
        =\ &\log(i-f_t(i))+\log(i_{j+1}-f_t(i_{j+1}))-\log(i_{j+1}-f_{t-1}(i_{j+1}))+\sum_{i'\in I_{t-1}\setminus\{r_{t-1}\}}\log(i'-f_{t-1}(i'))\\
        \le\ &\log|X^{-1}[t-1]-X^{-1}[t]|+\sum_{y=2}^{t-1}\log|X^{-1}[y]-X^{-1}[y-1]|\\
        =\ &\sum_{y=2}^{t}\log|X^{-1}[y]-X^{-1}[y-1]|.
        \end{align*}
        Finally, if $i<i_1$, then $i_1=r_{t-1}$ and $i=r_t$.
        We put $f_t(i_1) = i$. The sum property follows since
        \[
        \log(i_1-i)\leq\log(i_k-i)=\log|X^{-1}[t-1]-X^{-1}[t]|.
        \]
        In all three cases, the choices belong to the corresponding sets $S_i$: the elements of the $(t-1)$-front are prefix maxima among the values $1,\dots,t-1$, and $t$ is larger than all these values.
    \end{proof}
    The claim with $t=n$ immediately yields the desired function $f$.
    This shows $\Hist \preceq \hist$.
\end{proof}

\subsection{Reg equals reg for permutations}
\label{sub:regthm}

\cref{lemm:Loc_loc} shows $\Reg \preceq \reg$.
In \cref{sect:reg_mst} we show that, if $X$ contains duplicates, then $\reg \not\preceq \Reg$.
However, if $X$ is a permutation, then we can show that $\reg \preceq \Reg$.

\regthm*

\begin{lemma} \label{lemm:reg_threshold}
    Let $X$ be a permutation. For $i \ge 2$, put
    \begin{align*}
    d(i, j) &= |X[i] - X[j]|\\
    D(i, j) &= |\rank_{X[1..i]}(X[i]) - \rank_{X[1..j]}(X[j])|\\
    r(i) &:= \min_{j < i} \max\{(i-j), d(i, j)\} \\
    R(i) &:= \min_{j < i} \max\{(i-j), D(i, j)\}.
    \end{align*}
    For every integer $t \ge 1$, we have
    \[
    |\{i : r(i) \ge t^3\}| \le 5|\{i : R(i) \ge t\}| + O\big(\frac{n}{t}\big).
    \]
\end{lemma}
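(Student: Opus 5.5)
The plan is to compare the two thresholds index by index: show that every $i$ with $R(i)<t$ but $r(i)\geq t^3$ is \emph{exceptional}, and that exceptional indices can be charged either to indices with $R(\cdot)\geq t$ (each receiving $O(1)$ charges, at most $4$) or to a global budget of $O(n/t)$. Adding the indices with $R(i)\geq t$ themselves then gives the factor $5$. Throughout, write $a_i:=\rank_{X[1..i]}(X[i])$ for the rank of $X[i]$ in its own prefix. Then $D(i,j)=|a_i-a_j|$, with each element ranked at its own insertion time, exactly as in the statement.

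\textbf{Step 1: from own-prefix ranks to a common prefix.} Fix $i$ with $R(i)<t$, witnessed by $j<i$ with $i-j<t$ and $|a_i-a_j|<t$. Passing from the prefix $X[1..j]$ to $X[1..i]$ inserts only $i-j<t$ further elements. Hence
\[
|\rank_{X[1..i]}(X[j])-a_j|<t
\quad\text{and}\quad
|\rank_{X[1..i]}(X[i])-\rank_{X[1..i]}(X[j])|<2t .
\]
So, within $X[1..i]$, the values $X[i]$ and $X[j]$ are separated by fewer than $2t$ prefix elements. This is the only place where the two different prefixes in the definition of $D$ enter. After this step, everything is phrased in the single prefix $X[1..i]$.

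\textbf{Step 2: exceptional indices leave a wide empty gap.} Suppose in addition that $r(i)\geq t^3$. Since $i-j<t\leq t^3$, we must have $d(i,j)\geq t^3$. Permutation values are final ranks, so the open value interval between $X[j]$ and $X[i]$ contains at least $t^3-1$ values. By Step 1, fewer than $2t$ of these occur in $X[1..i]$, so at least $t^3-2t$ of them occur at positions after $i$. Moreover, $r(i)\geq t^3$ forbids any index in $[i-t^3+1,i)$ whose value lies within $t^3$ of $X[i]$.

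From this I extract a \emph{gap}: a maximal run of consecutive final values, adjacent to $X[i]$, none of which occurs among the indices before $i$. I would like to guarantee that this gap has length at least about $t^2$. My intended choice is to split the interval at the fewer than $2t$ prefix elements it contains, giving fewer than $2t+1$ pieces of total length at least $t^3-2t$, and to take the piece next to $X[i]$. I am not certain that piece is the long one. If it is not, I would pass to the neighbouring prefix element $X[j']$, which still has $|\rank_{X[1..i]}(X[i])-\rank_{X[1..i]}(X[j'])|<2t$, and argue about that element instead.

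\textbf{Step 3: charging (the main obstacle).} The gap of length at least $t^2$ is filled only by elements $k>i$. I would charge $i$ to the first element $k$ that later lands in this gap, and consider two cases.

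\emph{Case $k-i\geq t$.} Then $R(k)$ should be at least $t$. Every $j'<k$ with $k-j'<t$ satisfies $j'>i$ and $j'<k$, so it was inserted between $i$ and $k$ and lies outside the gap. Its prefix rank therefore differs from $a_k$ by at least the number of gap elements present, which I still need to show is at least $t$. This is the delicate point.

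\emph{Case $k-i<t$.} The element $k$ is a \emph{fast hit} on a gap of length at least $t^2$. A potential argument over the evolving partition of $[n]$ into maximal empty value-gaps should bound all fast hits by $O(n/t)$. Each gap of length at least $t^2$ can be created at most $O(n/t^2)$ times, and each such gap can absorb at most $t$ fast hits within a window of $t$ steps, for a total of $O(n/t)$.

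Finally, I need that each $k$ is charged by $O(1)$ indices $i$, at most two per side of the gap. This is where I expect the constant $5=1+4$ to come from. Making both the ``$R(k)\geq t$'' conclusion and this multiplicity bound rigorous is the step I expect to be hardest. If the first-arrival choice fails, I would instead charge $i$ to the first index $k$ after $i$ whose prefix rank jumps by at least $t$ relative to all of its predecessors within distance $t$.
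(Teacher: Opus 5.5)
There is a genuine gap. Steps 1 and 2 are essentially fine: the shift of fewer than $t$ in insertion rank between the prefixes $X[1..j]$ and $X[1..i]$ is exactly right. But Step 3 is the entire content of the lemma, and the charging you propose there fails.

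\textbf{Case $k-i\ge t$.} You need $R(k)\ge t$. But $k$ is by definition the \emph{first} arrival in the gap, so no gap elements are present at time $k$, and ``differs by at least the number of gap elements present'' yields nothing. Indeed $R(k)$ can be tiny. For $t\ge4$ let $M=t^3$, $m>t^3$, $n=mM$, and let $X$ list, for $p=0,1,\dots,M-1$ in turn, the values $p+bM+1$ for $b=0,1,\dots,m-1$. For $1\le p\le t-3$ and $2\le b\le m-1$, the item $i$ with value $p+bM+1$ has $r(i)\ge t^3$ and $R(i)\le D(i,i-1)=p+1<t$. Consider the maximal absent runs next to $X[i]$ and $X[i-1]$, namely $\{p+bM+2,\dots,(b+1)M\}$ and $\{p+2+(b-1)M,\dots,bM\}$. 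They are first hit in phase $p+1$, at distance $m$ resp.\ $m-1\ge t$ after $i$. The hitting items $k$ have $R(k)\le D(k,k-1)=p+2<t$, and are themselves exceptional.

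\textbf{Fallback rule.} Charging to the first later $k$ with $R(k)\ge t$ breaks the $O(1)$ multiplicity on the same input: all roughly $m$ exceptional items of phase $p$ are charged to the first item of phase $p+1$.

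\textbf{Fast hits.} The bound is false as stated. Inserting $1,2,\dots,n$ in order produces $\Theta(n)$ distinct maximal empty gaps of length at least $t^2$, each hit one step after it appears. So at best the \emph{charged} fast hits could be $O(n/t)$, and proving that needs exactly the structure you have not established.

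\textbf{What is missing.} You need to charge \emph{backwards} along fingers rather than forwards to future arrivals, together with a density notion that confines the charges to $O(n/t)$ places. The paper does this as follows.
\begin{itemize}
\item It partitions values into buckets of size $t^3$ and indices into epochs of size $t^2$.
\item An item among the first $4t^2$ arrivals in its bucket is \emph{early}. There are only $O(n/t)$ early items, and each is charged to itself. In the example above every exceptional item is early.
\item For a non-early $i$ with $R(i)<t$, its bucket $B$ already held at least $2t^2$ items before the previous epoch. One follows the finger chain $j_1=F(i)$, $j_2=F(j_1),\dots$ for at most $t$ steps, stopping at an early item or one with $R\ge t$.
\item Every chain element is within $t^2$ of $i$ both in index and in insertion rank. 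Hence the chain cannot jump over a heavy bucket, and its non-early members lie in the heavy bucket $B'$ neighbouring $B$.
\item A chain that stops early charges its endpoint. A chain that survives $t$ steps puts at least $t/3$ items into one of the two bucket--epoch pairs of $B'$, and only $O(n/t)$ pairs are that dense.
\item The constant $4$ (hence $5$) comes from three facts: a bucket--epoch pair contains at most one $i$ with $r(i)\ge t^3$, the source has two possible epochs, and it has two possible neighbouring heavy buckets.
\end{itemize}
Neither this barrier argument nor the ``dense bucket--epoch pair'' outcome has a counterpart in your gap picture, and these are what keep the multiplicity bounded.
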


\begin{proof}[Proof of \cref{theo:Reg_reg} from \cref{lemm:reg_threshold}]
    We apply \cref{lemm:reg_threshold} for $t = 2^j$ where $0 \le j \le \lfloor\log n\rfloor$. This yields
    \[
    \underbrace{\sum_{j=0}^{\lfloor\log n\rfloor} |\{i : r(i) \ge 2^{3j}\}|}_{=:S}
    \le 5 \underbrace{\sum_{j=0}^{\lfloor\log n\rfloor} |\{i : R(i) \ge 2^j\}|}_{=:T} + O(n).
    \]
    The contribution of an index $i$ to $S$ is at least $\log(r(i))/3 - 1$,
    and the contribution to $T$ is at most $\log(R(i)) + 1$.
    Therefore, this shows
    \[
    \sum_{i=2}^{n} \log r(i) \le 3 S + O(n) \le 15 T + O(n) \le 30 \sum_{i=2}^{n}\log R(i) + O(n).
    \]
    We have $\log \reg \le 2\sum_{i=2}^{n} \log r(i)$.
    We also have $\log \Reg \ge \sum_{i=2}^{n}\log R(i)$.
    Indeed, for $j<i$, put $a=i-j$ and
    $
    b=|\rank_{X[1..i]}(X[i])-\rank_{X[1..i]}(X[j])|$.
    Since $X$ is a permutation, $b\ge1$.
    The rank of $X[j]$ can increase by at most $a$ between steps $j$ and $i$.
    If $X[i]>X[j]$, at most $a-1$ of these elements are smaller than $X[j]$, so $D(i,j)\le b+a-1$.
    If $X[i]<X[j]$, the rank of $X[j]$ increases by at least one, and again $D(i,j)\le b+a-1$.
    Thus, $
    D(i,j)\le a+b-1\le ab$.
    In particular, $R(i)\le ab$ for every $j<i$, proving the claimed inequality.
    Therefore, $\log \reg \le 60 \log \Reg + O(n)$.
\end{proof}

\begin{proof}[Proof of \cref{lemm:reg_threshold}]
    The claim is immediate for $t=1$ or $t^3>n$, so assume $2\le t\le n^{1/3}$.
    Put $F(i) := \arg \min_{j < i} \max\{(i-j), D(i, j)\}$.
    We partition the value space $[n]$ into \emph{buckets} of size $t^3$.
    The last bucket may be smaller.
    The item $X[i]$ goes into the bucket $\lfloor (X[i]-1) / t^3 \rfloor$.
    We say an index $i$ is \emph{early} if $X[i]$ is among the first $4 t^2$ items inserted in its bucket.
    Observe that the total number of early items is $O(\frac{n}{t^3} \cdot 4 t^2) = O(\frac{n}{t})$.

    Additionally, we partition the index space into \emph{epochs} of size $t^2$.
    The last epoch may be smaller.
    The item $X[i]$ goes into the epoch $\lfloor (i-1) / t^2 \rfloor$.
    We say a bucket is \emph{heavy} at the start of an epoch if it contains at least $2 t^2$ previously inserted items,
    and \emph{light} otherwise.

    Our proof is a charging argument where we assign one charge for every item $i$ with $r(i) \ge t^3$.
    We are allowed to assign these charges as follows: At most $5$ charges to every item $i$ with $R(i) \ge t$,
    at most $5$ charges to every early item, and at most $5$ charges to every bucket-epoch intersection that contains at least $t/3$ items.
    This budget guarantees that the total number of charges is at most $5|\{i : R(i) \ge t\}| + O\big(\frac{n}{t}\big)$, which shows the lemma.

    Let $i \in [n]$ be an index with $r(i) \ge t^3$.
    If $i$ is early, then we charge $i$ to itself.
    Otherwise, let $E$ be the epoch of $i$ and $E-1$ the previous epoch.
    Since $i$ is not early, at least $4t^2$ items were inserted into its bucket $B$ before $i$.
    At most $2t^2$ of these were inserted during epochs $E-1$ and $E$, so $B$ is $(E-1)$-heavy.
    If $R(i) \ge t$, then we charge $i$ to itself.
    Otherwise, consider the \emph{finger path} $j_1 = F(i), j_2 = F(j_1), \dots$
    until we hit the first item $j_a$ that is early, has $R(j_a) \ge t$, or until we reach $j_a$ for $a = t$.
    This finger path is very well-behaved: We have $i > j_1 > \dots > j_a$, and also $j_a > i - t^2$ since $R(i), R(j_1), \dots, R(j_{a-1}) < t$, and $a \le t$.
    In particular, the finger path lies within the epochs $E$ and $(E-1)$,
    and, since $r(i) \ge t^3$, it is disjoint from bucket $B$.

    Also, the finger path can never cross over an $(E-1)$-heavy bucket.
    To see this, put $\rho(j):=\rank_{X[1..j]}(X[j])$.
    For any two vertices $p,q$ on the finger path, including $i$, we have
    \[
    |p-q|<t^2
    \qquad\text{and}\qquad
    |\rho(p)-\rho(q)|<t^2.
    \]
    Suppose that $X[p]$ and $X[q]$ lie on opposite sides of an $(E-1)$-heavy bucket $C$.
    At least $2t^2$ elements of $C$ were inserted before epoch $E-1$.
    Since the insertion ranks can change by at most $|p-q|$ between steps $p$ and $q$, we have $
    |\rho(p)-\rho(q)|\ge 2t^2-|p-q|\ge t^2$,
    a contradiction.

    Since $j_1,\dots,j_{a-1}$ are not early, their buckets are all $(E-1)$-heavy.
    Thus, if $a>1$, the barrier argument shows that $j_1,\dots,j_{a-1}$ lie in the same $(E-1)$-heavy bucket $B'$ adjacent to $B$ among the $(E-1)$-heavy buckets,
    and $j_a$ lies in $B'$ or in an $(E-1)$-light bucket between $B$ and $B'$.
    If $a=1$, then $j_a$ lies in a neighbouring $(E-1)$-heavy bucket or in a light bucket between $B$ and the next heavy bucket.

    If $a<t$, then we charge $i$ to $j_a$.
    Otherwise, one of the pairs $(B',E)$ and $(B',E-1)$ contains at least half of $j_1,\dots,j_{t-1}$, and thus at least $t/3$ items, so we charge $i$ to that pair.

    To analyse the number of charges assigned to each item or pair,
    observe that every bucket-epoch pair contains at most one index $i \in [n]$ with $r(i) \ge t^3$.
    Indeed, two items in the same bucket and epoch have both index distance less than $t^2$ and value distance less than $t^3$.

    Consider an item $j$ that receives a charge as the endpoint of a finger path.
    If $j$ lies in epoch $K$, the source item $i$ lies in epoch $K$ or $K+1$.
    For each of these two possibilities, the source bucket $B$ must be one of the two $(E-1)$-heavy buckets neighbouring the bucket of $j$, or the light interval containing it.
    Since each bucket-epoch pair contains at most one source item, $j$ receives at most four such charges, plus possibly one charge from itself.

    Similarly, consider a bucket-epoch pair $(B',K)$ that receives a charge.
    The source item lies in epoch $K$ or $K+1$, and its bucket is one of the two neighbouring heavy buckets of $B'$ at the start of epoch $E-1$.
    Thus, this pair receives at most four charges.

    Finally, there are $O(n/t)$ early items and $O(n/t)$ bucket-epoch pairs containing at least $t/3$ items.
    Therefore, the total number of charges is at most
    $
    5|\{i:R(i)\ge t\}|+O(n/t)$,
    proving the lemma.
\end{proof}

\section{Loc is the Dynamic Finger Property, and Hist is the Working Set} \label{sect:BST_loc_hist}
In \cref{sect:discovery}, we showed that insertion sort using a dictionary with the unified bound is $\Reg$-adaptive.
In this section, we similarly relate $\Loc$ and $\Hist$ to the dynamic-finger and working-set properties of binary search trees~\cite{ProofreadCole2000,SleatorTarjan1985}.

Recall that a binary search tree stores a sorted set $Y$ subject to an $\textsc{access}(y)$ operation that forces it to navigate from the root to some item $y \in Y$.
A BST has the \emph{working-set property} if $\textsc{access}(y)$ takes amortised $O(\log(2+\age(y)))$ time, where $\age(y)$ counts the distinct keys accessed since the previous access to $y$. For a key not previously accessed, use $\age(y)=|Y|$. Counting all intervening access operations gives an upper bound on this age, which suffices in the proofs below.
Splay trees have the working-set property~\cite{SleatorTarjan1985}.
A BST has the \emph{dynamic-finger property}~\cite{ProofreadCole2000} if $\textsc{access}(y)$ takes amortised $O(\log(2+d(y',y)))$ time,
where $y'$ is the previously accessed item and $d(y', y)$ is the number of items in $Y$ that lie between $y$ and $y'$.
Cole~\cite{ProofreadCole2000}, building on joint work with Mishra, Schmidt, and Siegel~\cite{ProofreadColeEtAl2000}, proves the dynamic-finger property for splay trees.

There are different algorithms for inserting an item $y'$ into a binary search tree.
For this section, we allow any insertion algorithm that accesses both the predecessor and successor of $y'$ in $Y$,
adds a new node $y'$, and then accesses $y'$.
To perform insertion sort, we start with a binary search tree that contains two dummy elements $-\infty, \infty$,
and then insert $X[1], \dots, X[n]$ in this order.

\begin{theorem}
    Insertion sort with a binary search tree that has the dynamic finger property is $\Loc$-adaptive.
\end{theorem}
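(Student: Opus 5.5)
We bound the total cost of the access sequence generated by insertion sort and apply the dynamic-finger property to it. Inserting $X[i]$ consists of four accesses: the predecessor $p_i$ of $X[i]$ in $Y_{i-1}:=\{-\infty,\infty\}\cup X[1..i-1]$, its successor $s_i$, and the new key $X[i]$ (any structural work for the new node is $O(1)$ beyond these accesses). If $X[i]$ is already present, we instead access it once and increase its multiplicity. The dynamic-finger bound is stated for a fixed key set, but here keys are inserted. We handle this by fixing the final universe $Y_n$ in the analysis and measuring all distances $d(\cdot,\cdot)$ with respect to the current set. The current set is the one relevant for the insertion version of the property, as in Cole's result for splay trees. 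Since the inserted item was not yet present, counting among the current keys gives exactly prefix ranks.

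Step one is to bound consecutive finger distances. The accesses for step $i$ follow the last access of step $i-1$, which was $X[i-1]$. The first access $p_i$ is within distance $|\rank_{X[1..i]}(X[i])-\rank_{X[1..i]}(X[i-1])|+O(1)$ of $X[i-1]$ among current keys, because $p_i$ is adjacent to the slot of $X[i]$ and the ranks in $X[1..i]$ differ from those in $Y_{i-1}$ by at most one plus the two dummies. The accesses $p_i\to s_i\to X[i]$ each move distance $O(1)$. Hence step $i$ costs amortised $O(1+\log(2+|\rank_{X[1..i]}(X[i])-\rank_{X[1..i]}(X[i-1])|))$. The first insertion and the initial two dummies cost $O(1)$.

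Step two is to sum over $i$. Since $\log(2+\delta)\le 2+\log\max\{1,\delta\}$, the total is $O(n+\sum_{i=2}^n\log\max\{1,|\rank_{X[1..i]}(X[i])-\rank_{X[1..i]}(X[i-1])|\})=O(n+\log\Loc(X))$. The amortisation potential starts at $O(1)$ for a tree on two keys. Reading off the sorted order by an in-order traversal, repeating duplicates, costs $O(n)$. Comparisons are bounded by time.

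The main obstacle is the first step, where the dynamic-finger property must be applied to a tree that grows. We must make sure that the distance between the previous finger $X[i-1]$ and $p_i$ is counted in the tree at the time of access, which contains $X[1..i-1]$ but not yet $X[i]$. This distance then agrees with the prefix-rank difference in $X[1..i]$ up to an additive constant, as argued above. If only the static version of the property is available, the fallback is to pre-load all keys of $X$. That would yield $\loc$ rather than $\Loc$, so we rely on the insertion-compatible dynamic-finger bound as stated in this section's model.
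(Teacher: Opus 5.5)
Your proposal is correct and follows the paper's proof: you bound the finger distance from $X[i-1]$ to the first-accessed neighbour of $X[i]$ by the prefix-rank difference plus a constant, charge $O(1)$ amortised to the remaining accesses, and sum to obtain $O(n+\log\Loc(X))$. One small slip: you announce four accesses per insertion but list only three. Your explicit treatment of duplicates and of the growing key set is consistent with the model the paper assumes implicitly.
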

\begin{proof}
    Inserting $X[1]$ takes constant time.
    For $i \ge 2$, when inserting $y = X[i]$, the last accessed item was $y' = X[i-1]$.
    Let $z$ be the predecessor or successor of $y$ accessed first. Then $d(y',z)\leq|\rank_{X[1..i]}(X[i])-\rank_{X[1..i]}(X[i-1])|+1$, while the remaining accesses during insertion take constant amortised time.
    Thus, the total running time is
    \begin{align*}
    &O\Big(\sum_{i=2}^{n} (2 + \log(3+|\rank_{X[1..i]}(X[i]) - \rank_{X[1..i]}(X[i-1])|)\Big)\\
    =\ & O\Big(n + \sum_{i=2}^{n} \log(\max\{1, |\rank_{X[1..i]}(X[i]) - \rank_{X[1..i]}(X[i-1])|\})\Big) = O(n + \log \Loc). \qedhere
    \end{align*}
\end{proof}
\begin{theorem}
    Insertion sort with a binary search tree that has the working set property is $\Hist$-adaptive.
\end{theorem}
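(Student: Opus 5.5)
Insertion sort uses a binary search tree with the working-set property, starting from the two dummy keys $-\infty,\infty$. The plan mirrors the preceding dynamic-finger theorem: bound the amortised cost of each insertion by an age term, then relate that age to the index gap in the definition of $\Hist$. Inserting $X[1]$ costs $O(1)$. For $i\ge 2$, fix an index $j<i$ that attains the minimum in the definition of $\Hist$, i.e. $|\rank_{X[1..i]}(X[i])-\rank_{X[1..i]}(X[j])|\le 1$ with $i-j$ minimal. Such a $j$ exists whenever $i\ge2$ and the prefix contains another value. If $X[i]$ was already present, the insertion is a search for an existing key and the argument below still applies. If the only candidate is $X[i]$ itself as a duplicate, we take $j$ to be the index of that earlier occurrence.

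The key step is to show that one of the accesses made while inserting $y=X[i]$ is to $X[j]$, or to a key whose last access was no earlier than step $j$. Since the ranks of $X[i]$ and $X[j]$ within $X[1..i]$ differ by at most one, $X[j]$ is the predecessor of $X[i]$, its successor, or equal to it, in $Y\cup\{X[i]\}$. The insertion procedure accesses both the predecessor and the successor of $y$, so $X[j]$ itself is accessed during this insertion. When $X[i]=X[j]$ is a duplicate, $X[i]$ itself is accessed. Because $X[j]$ was accessed at the end of step $j$, the number of intervening access operations since its last access is at most a constant times $i-j$: each insertion performs $O(1)$ accesses. Hence that access costs $O(\log(2+(i-j)))$ amortised.

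The remaining accesses of the insertion need care, and I expect this to be the main obstacle. The other neighbour $z$ of $y$ may have a large age, and the final access to the new key must also be paid for. For the new node $y$, its age is trivially small, since it was inserted just before being accessed, so that access costs $O(1)$ amortised. For $z$, I would try to argue that the access order can be arranged: first access $X[j]$ (cheap), then $z$. After the first access, however, the working-set bound for $z$ is still governed by $z$'s own age, so this does not help directly. A more robust alternative is to perform the insertion by accessing only the neighbour $X[j]$ and inserting $y$ adjacent to it. This requires comparing $y$ against $X[j]$ and using the tree structure near $X[j]$. If the paper's insertion rule insists on accessing both neighbours, I would charge the access to $z$ via the fact that, in a working-set BST, accessing an in-order neighbour of the just-accessed key costs $O(1)$ amortised. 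This holds for splay trees by the sequential/finger bound, but it is not implied by the working-set property alone, so I would restrict to splay-like trees or adopt the single-neighbour insertion. Summing then gives $O\big(n+\sum_{i=2}^n\log(i-j_i)\big)=O(n+\log\Hist(X))$, which proves the claim.
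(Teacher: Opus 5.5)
Your bound for the access to $X[j]$ is correct and matches the paper. However, the step you flag as the main obstacle is a genuine gap, and neither of your work-arounds closes it. Both change the statement, which concerns an arbitrary BST with the working-set property and the insertion rule that accesses both in-order neighbours. The splay-tree route is also unsound as sketched. The working-set and dynamic-finger theorems are amortised against different potential functions, so you cannot charge one access by the first bound and the next access by the second within the same sequence. A per-access minimum of the two bounds is a unified-type bound, and it does not follow from the two separate theorems. The single-neighbour insertion is not available either. The algorithm does not know which neighbour is the recently used one. Moreover, in a BST the search path for an absent key passes through both its predecessor and its successor, so the cost is governed by the deeper of the two regardless.

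The missing idea is that the other neighbour is itself recent, precisely because $i-j$ is minimal. Say $X[i]\notin Y$, and let $X[j]=p$ be its predecessor and $s$ its successor in $Y=\{X[1],\dots,X[i-1]\}$. Both $p$ and $s$ have rank distance exactly $1$ to $X[i]$. By the choice of $j$, the last occurrence of $s$ before step $i$ therefore has index smaller than $j$, so $s$ was already present at step $j$. No key of $Y$ lies strictly between $p$ and $s$, so $s$ was also the in-order successor of $p$ at step $j$. The insertion of $X[j]$ accesses both in-order neighbours, so it accessed $s$. With three accesses per insertion, $\age(s)\le 3(i-j)$, exactly as for $\age(p)$, so both neighbour accesses cost $O(1+\log(i-j))$; this is the paper's argument.

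Separately, under the paper's convention a never-accessed key has age $|Y|$. So the final access to the newly created node is not cheap because ``its age is small''; the paper simply assumes this access is free.
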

\begin{proof}
    Inserting $X[1]$ takes constant time.
    For $i \ge 2$, when inserting $y = X[i]$ into $Y$, let $j = \arg\min\left\{i-j:j<i \land \left|\operatorname{rank}_{X[1..i]}(X[i]) -\operatorname{rank}_{X[1..i]}(X[j])\right|\le 1\right\}$.
    Then $y' := X[j]$ is the predecessor or successor of $X[i]$ in $Y$, and the other predecessor or successor of $X[i]$ is an element $y''$ that was inserted before $X[j]$.
    Hence, the insertion of $y'$ accessed both $y'$ and $y''$.
    Since every insertion accesses three nodes, this shows $\age(y'), \age(y'') \le 3(i-j)$,
    so that accessing $y''$ and $y'$ during the insertion of $y$ takes $O(1 + \log (i-j))$ time.
    We assume that accessing the node $X[i]$ that was just created is free.
    Then, the total running time is
    \[
    O\!\left(n+\sum_{i=2}^{n}\log\min\{i-j:j<i\land|\rank_{X[1..i]}(X[i])-\rank_{X[1..i]}(X[j])|\leq1\}\right)=O(n+\log\Hist(X)).\qedhere
    \]
\end{proof}

\section{Synergistic Sorting}
\label{app:synergystic}

Barbay, Ochoa and Satti~\cite{BarbayOchoaSatti2017} introduce two sorting algorithms \textsc{DLMSort} and \textsc{QuickSynergySort}
that combine many adaptive properties. We show that their notion of adaptivity is stronger than $\nRuns$, $\Block$, and $\HMulti$, but weaker than $\loc$ if the input has no duplicates.
Moreover, we show that their notion is incomparable to $\Reg$, as it is weaker than $\loc$ when the input has no duplicates, but there does exist families of inputs on which their algorithm is better. 
To this end, we define the measure $\Syn$ that describes their algorithm.

\begin{definition} \label{def:syn}
    Let $X_1, \dots, X_k$ be the partition of $X$ into maximal contiguous non-decreasing sequences.
    We say a value $x \in [n]$ is \emph{free} if, for every index $j$ with $X[j] = x$, we have $\ol{X}[j] = \ol{X}[j-1] + 1$ and $\ol{X}[j] = \ol{X}[j+1]-1$.
    Note that, if $x$ is free, then $x$ occurs in at most one $X_i$.
    For every non-free value $x$, let $m_x$ be the number of $X_i$ that contain at least one copy of $x$.
    Let $\Syn = \prod\limits_{x: \text{ non-free}} \binom{k}{m_x}$ and $f_{\Syn}(X)  = n + \log \Syn(X)$.
\end{definition}

\begin{theorem}[Corollary 1 and Theorem 5 in \cite{BarbayOchoaSatti2017}]
    \textsc{DLMSort} and \textsc{QuickSynergySort} are $\Syn$-adaptive.
\end{theorem}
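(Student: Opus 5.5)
The theorem says \textsc{DLMSort} and \textsc{QuickSynergySort} run in $O(n+\log\Syn(X))$ time and comparisons. It is cited from Barbay, Ochoa and Satti, so the proof is a translation of their bound into the notation of Definition~\ref{def:syn}, not a new algorithmic argument. The plan is to recall the bound they state (Corollary~1 for \textsc{DLMSort}, Theorem~5 for \textsc{QuickSynergySort}) and show that it is $O(n+\log\Syn(X))$.

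Their bound has the form $O(n+\sum_{x} \log\binom{\rho}{m_x})$. Here $\rho$ is the number of runs, so $\rho=k$. The sum runs over the values $x$ that appear in the ``non-trivial'' part of the input, namely those that are not absorbed into a block contiguous in both index and value. The $m_x$ are the numbers of runs containing $x$.

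\emph{First step: free values.} I will show that the free values of Definition~\ref{def:syn} are exactly those their analysis charges $O(1)$. A free value $x$ has every copy flanked, in the stable order $\ol{X}$, by its sorted neighbours. So it lies strictly inside a maximal block that is contiguous in position and value, and in a single run. Their algorithm handles such blocks with $O(1)$ work per element, which is covered by the additive $n$.

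\emph{Second step: non-free values.} For a non-free $x$, I match their per-value cost with $\log\binom{k}{m_x}$. Their merging tree over the $k$ runs splits the value range at medians. A value spread over $m_x$ runs is located in all of them at a total cost that is bounded, via the information-theoretic bound for merging, by $\log\binom{k}{m_x}+O(m_x)$. Since $\sum_x m_x\le n$, the $O(m_x)$ terms sum to $O(n)$.

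\emph{Main obstacle.} The hardest part is reconciling the definitions with theirs. In particular, I need to check that their notion of the ``alphabet'' of a run and of duplicates agrees with $m_x$ when $X$ has repeated keys. If their bound uses $\binom{k}{m_x}$ only for block boundaries, I would still upper bound it by our product. That direction is the one needed, since a larger $\Syn$ only weakens the claim.
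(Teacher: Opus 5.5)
The paper gives no argument for this statement. It imports the result from Barbay, Ochoa and Satti, and Definition~\ref{def:syn} is written so that $n+\log\Syn(X)$ transcribes their bound. Your framing (translate their bound, no new algorithmic argument) is therefore the right one. But you never perform that translation: you guess at the shape of their bound. The two steps you put in its place do not work as arguments.

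In your second step, the information-theoretic bound for merging is a lower bound. It says any algorithm needs about $\log\binom{k}{m_x}$ comparisons to learn which $m_x$ of the $k$ runs contain $x$. It does not say that \textsc{DLMSort} or \textsc{QuickSynergySort} uses at most that many. Nor can such a cost be charged value by value: for $m_x=1$ it would be only $\log k$, although $x$ must be separated from each of the other $k-1$ runs. A bound of the form $\sum_x\log\binom{k}{m_x}$ holds only amortised over all values, which is exactly what the cited analysis supplies.

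Your first step has the same problem, and it carries the real content. Suppose the cited bound charged $\log\binom{k}{m_x}$ to every distinct value. Take $k=\sqrt n$ increasing value-contiguous blocks, listed in decreasing order of values. Then all values except the $2k$ block endpoints are free, and each would contribute $\log k$, for a total of $\Theta(n\log n)$. Yet $\log\Syn(X)=2k\log k\in O(n)$. So "free values cost $O(1)$" is precisely the block refinement that the citation of Corollary~1 and Theorem~5 provides. It must be read off their statements, not asserted from an unverified description of how their algorithm treats blocks.

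The fix is as follows. Quote their bound verbatim. Check that their runs are the $X_i$, so $\rho=k$ also when keys repeat, and that their per-element run counts are the $m_x$. Then verify that their sum is at most $\sum_{x\text{ non-free}}\log\binom{k}{m_x}+O(n)$. Since all terms are non-negative, this yields $O(n+\log\Syn(X))$.
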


\begin{lemma}
    $\Syn \preceq \nRuns$, $\Syn \preceq \HMulti$ and $\Syn \preceq \Block$.
\end{lemma}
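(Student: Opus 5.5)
We need to show $n + \log\Syn \in O(f_M)$ for each of the three measures. The common tool is the binomial estimate $\log\binom{k}{m}\le m\log(ek/m)\le m\log(2k)$, together with the observation $\sum_x m_x\le n$. The latter holds because every non-free value $x$ contributes $m_x$ runs, each containing a distinct copy of $x$.

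\emph{$\nRuns$.} Each factor satisfies $\binom{k}{m_x}\le 2^k$. Splitting more cleverly, $\log\binom{k}{m_x}\le m_x\log(ek/m_x)\le m_x\log(ek)$. Summing over non-free $x$ and using $\sum_x m_x\le n$ gives $\log\Syn\le n\log(ek)\in O(n(1+\log\nRuns))$, since $k=\nRuns$.

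\emph{$\HMulti$.} Let $c_x$ be the multiplicity of $x$, so that $m_x\le c_x$ and $m_x\le k$. Then
\[
\log\binom{k}{m_x}\le m_x\log\frac{ek}{m_x}\le c_x\log\frac{ek}{c_x}
\]
when $c_x\le k$, because $m\mapsto m\log(ek/m)$ is increasing for $m\le k$. If $c_x>k$, we instead bound $\log\binom{k}{m_x}\le k\le c_x$. Summing, and using $k\le n$, the total is at most $\sum_x c_x(\log e+\log(n/c_x))+n=O(n(1+\HMulti))$.

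\emph{$\Block$.} This case is the main obstacle, since the target is $O(n+\Block\log(1+\Block))$, which can be much smaller than $n\log k$. The plan is to show that non-free values are few and that $k$ is small:

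\begin{itemize}
\item Every run boundary $X[j]>X[j+1]$ is also a $\ol{X}$-block boundary, so $k\le\Block$.
\item A value $x$ is non-free only if some copy of $x$ sits at a position $j$ adjacent to a block boundary (on either side). Hence the number of non-free values is at most $2\Block$.
\item $m_x\le k\le\Block$.
\end{itemize}

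Consequently $\log\Syn\le\sum_{x\text{ non-free}} k\le 2\Block\cdot\log 2^{\,k}$. This bound is too weak, so I instead use $\log\binom{k}{m_x}\le m_x\log(ek)$. Each non-free $x$ with $m_x$ runs has at least $m_x-1$ copies that start or end a maximal $\ol{X}$-block, because copies of $x$ in different runs cannot be consecutive in $\ol{X}$ within one block. So $\sum_x m_x\le 2\Block+2\Block=O(\Block)$, giving
\[
\log\Syn=O(\Block\log(e\Block))\subseteq O(n+\Block\log(1+\Block)).
\]
The delicate step is verifying that copies of $x$ lying in distinct runs force distinct block boundaries; I would argue this via the stable order $\ol{X}$, in which all copies of $x$ are consecutive in value.
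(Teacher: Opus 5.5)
Your proof is correct and follows essentially the paper's argument: $\binom{k}{m}\le(ek/m)^m$ with $k=\nRuns$, monotonicity of $y\mapsto y\log(e\,\cdot/y)$ for $\HMulti$, and $k\le\Block$ together with $\sum_{x\text{ non-free}}m_x=O(\Block)$ for $\Block$. The differences are cosmetic. For $\HMulti$, the paper skips your case split on $c_x\le k$ by replacing $k$ with $n$ first and using that $y\log(en/y)$ is increasing on $[0,n]$. For $\Block$, the paper simply asserts $\sum_x m_x\le 2\Block$, whereas you prove the weaker but sufficient $4\Block$ via the block-endpoint charging that the paper leaves implicit.
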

\begin{proof}
    We use the notation of \cref{def:syn}. Since 
    Since $\binom{k}{i}\leq(ek/i)^i$,    
    we have
    \[
    \log \Syn = \sum_{x: \text{ non-free}} \log \binom{k}{m_x} \le \sum_{x: \text{ non-free}} m_x \log \frac{e k}{m_x} \le \sum_{x: \text{ non-free}} m_x \log (e k) \le n \log (e k).
    \]
    Since $k = \nRuns$, this shows $n + \log \Syn(X) \in O(n + n \log \nRuns)$, i.e. $\Syn \preceq \nRuns$.

    For every value $x \in [n]$, let $m_x'$ denote the number of occurrences of $x$ in $X$.
    Since $y \mapsto y \log \frac{e n}{y}$ is an increasing function for $y \in [0, n]$, we have
    \[
    \log \Syn \le \sum_{x: \text{ non-free}} m_x \log \frac{e k}{m_x} \le \sum_{x: \text{ non-free}} m_x \log \frac{e n}{m_x} \le \sum_{x \in [n]} m_x' \log \frac{e n}{m_x'}\le n \log e + n \HMulti,
    \]
    with the convention ``$0 \log \frac{en}{0} = 0$''. This shows $\Syn \preceq \HMulti$.
    
    Finally, observe that $\sum_{x: \text{ non-free}} m_x \le 2\Block$ and $k \le \Block$. Hence
    \[
    \log \Syn \le \sum_{x: \text{ non-free}} m_x \log \frac{e k}{m_x} \le \sum_{x: \text{ non-free}} m_x \log (e k) \le n \log e + 2 \Block \log \Block,
    \]
    which shows $\Syn \preceq \Block$.
\end{proof}

\begin{lemma} \label{lemm:loc_syn}
    If $X$ contains no duplicates, then $\loc \preceq \Syn$
\end{lemma}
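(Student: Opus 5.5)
When $X$ is a permutation, \cref{def:syn} simplifies a lot, and I would reduce $\log\loc$ to a concavity estimate. Since there are no duplicates, every non-free value occurs in exactly one run, so $m_x=1$. Writing $k=\nRuns(X)$ and $N$ for the number of non-free values, this gives $\Syn=k^{N}$ and $\log\Syn=N\log k$. The goal is therefore
\[
\sum_{i=2}^{n}\log\max\{1,|X[i]-X[i-1]|\}\in O(n+N\log k).
\]

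\emph{Step 1: only non-free values produce nonzero terms.} Here $\ol{X}=X$. Call index $i$ a \emph{jump} if $X[i]\neq X[i-1]+1$. Only jumps can contribute a nonzero term. At a jump, $\ol{X}[i]\neq\ol{X}[i-1]+1$, so the value $X[i]$ is non-free. Distinct jumps $i$ give distinct values $X[i]$, so the number $J$ of jumps satisfies $J\le N$.

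\emph{Step 2: bound the total jump length.} Let $g_i=|X[i]-X[i-1]|$ and $G=\sum_{\text{jumps}}g_i$. Jumps are of two kinds.
\begin{itemize}
\item Descents are the boundaries between runs. There are $k-1$ of them, each with $g_i\le n$.
\item Ascents lie inside a run. The value intervals $(X[i-1],X[i])$ of the ascents of one run are pairwise disjoint, because the run is increasing. Hence the ascents of a single run have total length at most $n$.
\end{itemize}
Summing over the $k$ runs and adding the descents gives $G\le 2kn$.

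\emph{Step 3: apply concavity.} By concavity of $\log$,
\[
\sum_{\text{jumps}}\log g_i\le J\log\frac{2kn}{J}=J\log(2k)+J\log\frac{n}{J}.
\]
We have $J\log(n/J)\le n\log e$, and $J\le N$ gives $J\log(2k)\le N+N\log k$. Together,
\[
\log\loc\le N\log k+O(n)=\log\Syn+O(n),
\]
so $n+\log\loc\in O(n+\log\Syn)$, which is $\loc\preceq\Syn$.

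\emph{Expected obstacle.} The delicate point is that a single jump can cost $\log n$ even when $\Syn$ is tiny. For example, $(n,1,2,\dots,n-1)$ has $k=2$ and $N=3$, yet contains a term of size $\log n$. This is why the additive $n$ and the bound $J\log(n/J)\le n\log e$ are essential. I would also double-check Step 1 at the boundary indices $j=1$ and $j=n$. Since the argument only uses that the \emph{right} endpoint of a jump is non-free, any boundary convention for free values is harmless.
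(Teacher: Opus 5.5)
Your proof is correct and follows essentially the same route as the paper: reduce to $\log\Syn = N\log k$ via $m_x=1$, observe that only non-free values produce nonzero terms in $\log\loc$, and use that the gaps inside a run sum to at most $n$ together with concavity of $\log$. The only difference is bookkeeping. The paper bounds the $k$ run-boundary terms separately by $\log n$ each and absorbs them using $k\le n'$ and \cref{lemm:entropy_bounds}~(2). You instead fold the descents into a single length budget of $2kn$ and apply one Jensen step, which also avoids having to decide whether $X[1]$ counts as non-free.
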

If $X$ contains duplicates, then $\loc \not \preceq \Syn$ since $\Syn \preceq \HMulti$ but $\loc \not\preceq \HMulti$.

\begin{proof}
    We use the notation of \cref{def:syn}.
    If $X$ contains no duplicates, then $m_x = 1$ for every non-free value $x$. Thus, if $n'$ denotes the number of non-free values,
    then
    \[
    \log \Syn = \sum_{x: \text{ non-free}} \log \binom{k}{m_x} = n' \log k.
    \]
    Consider a single maximal contiguous non-decreasing subsequence $X_i$ and suppose it is of the form $X[a_i], X[a_i + 1], \dots, X[b_i]$.
    Let $n_i'$ denote the number of non-free values in $X_i$.
    For all $x, y > 0$, we have $\log(x) = \log\frac{x}{y} + \log(y) \le \log\frac{x}{y} + y$.
    Therefore,
    \[
    \sum_{j = a_i+1}^{b_i} \log(X[j] - X[j-1]) = \sum_{\substack{j = a_i+1\\X[j] \text{ non-free}}}^{b_i} \log(X[j] - X[j-1]) \le \sum_{\substack{j=a_i+1\\ X[j] \text{non-free}}}^{b_i} (\log k + \frac{X[j] - X[j-1]}{k}) \le n_i' \log k + \frac{n}{k}.
    \]
    Also, $\log|X[a_i] - X[a_i - 1]| \le \log n$.
    Therefore, summing over all $X_i$ yields
    \[
    \log \loc = \sum_{j=2}^{n} \log|X[j] - X[j-1]| \le k \log n +  \sum_{i=1}^{k} (n_i' \log k + \frac{n}{k}) = k \log n + n' \log k + n. 
    \]
    We have $k \le n'$ since $X[a_i]$ is non-free for every $i$.
    By \cref{lemm:entropy_bounds}~(2), $k \log n \le 2 n + k \log k$.
    Therefore
    \[
    \log \loc \le k \log n + n' \log k + n \le 3 n + k \log k + n' \log k \le 3 n + 2 n' \log k \in O(n + \log \Syn).
    \]
    This shows $\loc \preceq \Syn$.
\end{proof}

\begin{lemma}\label{lemm:Reg_not_syn}
For inputs with duplicates, $\Reg\not\preceq\Syn$.
\end{lemma}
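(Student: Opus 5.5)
To show $\Reg\not\preceq\Syn$ for inputs with duplicates, I will construct a family of inputs $X$ on which $f_{\Syn}(X)=n+\log\Syn(X)$ is $O(n)$ but $n+\log\Reg(X)$ is $\omega(n)$. The natural place to exploit is $\HMulti$: $\Syn\preceq\HMulti$, so inputs with few distinct values that are arranged in a way $\Reg$ cannot exploit are the candidates. Consider $X$ consisting of $k$ distinct values, each repeated $n/k$ times. The first occurrence of a value is the problem for $\Reg$, so I will instead use that $\Reg$ charges every element by index distance to a nearby-rank finger.

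\textbf{Construction.} Let $k=\sqrt n$ and take $X$ to be $k$ copies of the sequence $S=(1,\dots,k)$ permuted in a ``scattered'' way inside each copy. The aim is that every access has no earlier element that is both close in index and close in prefix rank.

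A cleaner alternative is $X$ consisting of $k$ runs, each a random or bit-reversal permutation $\pi$ of $[k]$, all equal to the same $\pi$. Then $\nRuns$ is about $k\cdot(\text{runs of }\pi)$, which is too many. So instead I would take $k$ copies of the identity $(1,2,\dots,k)$. Then $\nRuns=k$, and each value appears once per run, so $m_x=k$ and $\binom{k}{k}=1$. Hence $\log\Syn=0$ and $f_{\Syn}=O(n)$.

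\textbf{Cost under $\Reg$.} In the second and later copies, for $X[i]=v$ the finger $j$ can be the previous copy of $v$ (distance $k$, rank difference $0$) or $X[i-1]=v-1$ (distance $1$, rank difference $1$). Taking the latter, $\Reg=2^{O(n)}$, so this copies-of-identity example fails.

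I therefore need interleaving so that $\Syn$ stays small while prefix-rank neighbours are far in index. \textbf{Revised plan:} take $X=(S_1,\dots,S_k)$ where each $S_r$ is an increasing run over all $k$ values, but the $k$ runs use different \emph{subsets}? If every run contains every value, $m_x=k$ and the runs are forced to be the identity.

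Instead, let each run contain each value a different number of times, e.g.\ run $r$ is $1^{a_{r,1}}2^{a_{r,2}}\cdots k^{a_{r,k}}$ with multiplicities chosen so that lengths vary wildly. Still, consecutive entries differ in rank by at most $1$, so $\Reg$ stays linear.

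So the key obstacle is this. With $m_x=k$ for all $x$, $\Syn$ is trivial, yet every run is sorted over consecutive values, which makes $\Loc$ and $\Reg$ small too. The gap must come from \emph{free} values: free values cost nothing in $\Syn$ but can still force large prefix-rank jumps.

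The final construction I will try first is the following. Take $k$ runs, each of the form $(1,\ B_r,\ k')$, where $1$ and $k'=\max$ are shared non-free values appearing in every run ($m_x=k$, so their binomial factor is $1$). Each $B_r$ is a block of fresh distinct values that are consecutive in both index and value, and hence free. Choose the blocks so that $B_r$ lies in value far from $B_{r-1}$, e.g.\ in bit-reversal order of $r$ with blocks of length $k$ and $n=k^2$. Then $\log\Syn=O(1)$ and $f_{\Syn}=O(n)$.

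For $\Reg$, the first element of each block $B_r$ has prefix rank far from every recent element. I expect the bit-reversal ordering to give rank gaps of order $\Omega(\text{number of blocks seen})$ against every finger within distance $\approx k$. That yields a factor of $\Omega(k)$ per block start, which is only $k\log k=o(n)$, too little. To amplify, I would shrink the blocks to constant length so that there are $\Theta(n)$ block starts, each contributing $\log$ of a polynomial factor. This gives $\log\Reg=\Omega(n\log n)$ while $\Syn$ stays trivial, because the only non-free values are the two shared sentinels.

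Verifying the bit-reversal lower bound on $\min_j (i-j)\cdot(\text{prefix-rank gap})$ is the main step, and I would carry it out by the standard van der Corput discrepancy argument.
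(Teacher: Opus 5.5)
\textbf{The gap: in your final family, the block boundaries are not free.} You claim that $\log\Syn=O(1)$ because ``the only non-free values are the two shared sentinels''. That is false. Freeness does not mean being consecutive with the rest of your own block. It requires $\ol{X}[j]=\ol{X}[j-1]+1$ and $\ol{X}[j+1]=\ol{X}[j]+1$ at every occurrence.

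The first value $b$ of $B_r$ is preceded in index by the $r$-th copy of $1$, whose stable rank is $r\le m$, where $m$ is the number of runs. But $b$ has stable rank larger than $m$, because all copies of $1$ sort first. So every block start except possibly one is non-free. Likewise the block ends are non-free, since they are followed by $k'$, whose copies sort last. Each such value lies in exactly one run, so it contributes $\log\binom{m}{1}=\log m$. Hence $\log\Syn\ge(m-1)\log m$.

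\textbf{Why this kills the construction.} Once you shrink the blocks to constant length, $m=\Theta(n)$ and $\log\Syn=\Theta(n\log n)$, so the separation disappears. No block length $L$ rescues the family, because the large $\Reg$ factors come exactly from the elements that are non-free singletons. In your family:
\begin{itemize}
\item interior and final block elements use finger $i-1$ (rank distance $1$) and cost factor $1$;
\item each sentinel uses its previous copy, costing $O(L)$, i.e.\ $O(n)$ in logarithmic total;
\item only the $m$ block starts can cost up to $n$.
\end{itemize}
This gives $\log\Reg\le m\log n+O(n)$ against $\log\Syn\ge(m-1)\log m$. If $m\ge\sqrt n$ these are of the same order, and if $m<\sqrt n$ then $m\log n=o(n)$. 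So $n+\log\Reg\in O(n+\log\Syn)$ for every choice of $L$, and the bit-reversal discrepancy analysis is beside the point.

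\textbf{The missing idea.} Since $\binom{k}{m_x}$ is small only when $m_x$ is close to the number $k$ of runs, the elements that are expensive for $\Reg$ must be duplicates occurring in essentially all runs. Singleton values cannot work: without duplicates one even has $\loc\preceq\Syn$, hence $\Reg\preceq\Syn$. You were close when you tried $k$ runs each containing every value. The fix is that only the expensive values need to appear in every run, and a single extra run can supply all the remaining values.

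The paper takes $S=(1,k+1,\ldots,(k-1)k+1)$ and $X=(1,2,\ldots,k^2,S,\ldots,S)$ with $k$ copies of $S$, so $N=2k^2$.
\begin{itemize}
\item Each $S$-value lies in all $k+1$ runs, giving a factor $\binom{k+1}{k+1}=1$. All other values are free except at most $k+1$ of them, so $\log\Syn\le(k+1)\log(k+1)$.
\item The initial run reveals all of $[k^2]$, so prefix ranks equal full ranks. Hence in the last $k-1$ copies of $S$, every finger is either at index distance at least $k$ or at prefix-rank distance at least $k$.
\item This gives $\log\Reg\ge k(k-1)\log k=\Omega(N\log N)$.
\end{itemize}
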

\begin{proof}
Let $k\geq3$, put $
S=(1,k+1,2k+1,\dots,(k-1)k+1)$, 
and consider
$
X=(1,2,\dots,k^2,\underbrace{S,\dots,S}_{k\text{ copies}})$.
The input has length $N=2k^2$ and exactly $k+1$ maximal
contiguous non-decreasing runs.

Every value in $S$ occurs in all $k+1$ runs, so its
contribution to $\Syn$ is
$\binom{k+1}{k+1}=1$.
Every other value occurs exactly once, in the initial run.
Such a value is free unless it immediately follows a value
of $S$, or is the final value $k^2$ of the initial run.
Indeed, all other singleton values have consecutive stable
ranks with both neighbours in $X$.
Thus, at most $k+1$ singleton values are non-free, each
contributing $\binom{k+1}{1}=k+1$.
Consequently,
\[
\log\Syn(X)\leq(k+1)\log(k+1),
\qquad
N+\log\Syn(X)\in\Theta(N).
\]

Now consider an occurrence $X[i]$ in one of the last
$k-1$ copies of $S$, and any earlier finger $X[j]$.
If $X[j]$ belongs to the initial run, then $i-j\geq k$.
Otherwise, both values belong to $S$.
If they are equal, their occurrences are at least $k$
positions apart.
If they differ, their prefix-rank distance is at least $k$,
since the initial run already contains every value in $[k^2]$.
Therefore every finger choice contributes a factor at least
$k$ to the corresponding factor of $\Reg(X)$.
There are $k(k-1)$ such occurrences, so
\[
\log\Reg(X)\geq k(k-1)\log k\in\Omega(N\log N).
\]
Hence $N+\log\Reg(X)$ is not
$O(N+\log\Syn(X))$, proving $\Reg\not\preceq\Syn$.
\end{proof}

\section{Open Problems}

We have not fully considered the adaptive heap properties from~\cite{Rutschmann2026}, and conjecture that they realise as a sorting algorithm an equally strong, dual variant of $\Reg$. Since our paper has not developed the tools for dualisation under inputs with duplicates, and $O(n + \log \reg) = O(n + \log \Reg)$ when there are no duplicates, we left this question open.

\paragraph*{HSMS and HSUS-adaptive sorting}

In this paper, we have described adaptive algorithms for most of the strongest measures in \cref{fig:hierarchy_proofs}:
insertion sort variants for $\Reg$, $\Hist$, and $\Loc$,
and \texttt{SkeletonSort} for $\reg$.
While a $\reg$-adaptive sorting algorithm is automatically $\HSMS$-adaptive and $\HSUS$-adaptive,
we are not aware of any $\HSMS$-adaptive or $\HSUS$-adaptive sorting algorithm that does not go via $\reg$ (or $\Reg$).
Since $\HSMS, \HSUS$ are considerably weaker notions than $\reg$, it should be easier to design an adaptive algorithm for them,
but we were unable to do so.




\bibliographystyle{plainurl}
\bibliography{refs}

\end{document}